 \definecolor{BLACK}{gray}{0}
 \definecolor{WHITE}{gray}{1}
 \definecolor{RED}{rgb}{1,0,0}
 \definecolor{GREEN}{rgb}{0,0.5,0}
 \definecolor{BLUE}{rgb}{0,0,1}
 \definecolor{CYAN}{cmyk}{1,0,0,0}
 \definecolor{MAGENTA}{cmyk}{0,1,0,0}
 \definecolor{YELLOW}{cmyk}{0,0,1,0}
\newtheorem{theorem}{Theorem}
\newtheorem{corollary}[theorem]{Corollary}
\newtheorem{proposition}[theorem]{Proposition}
\newtheorem{example}[theorem]{Example}
\newtheorem{remark}[theorem]{Remark}
\newtheorem{lem}[theorem]{Lemma}
\newtheorem{definition}[theorem]{Definition}
\newtheorem{fact}[theorem]{Fact}
\newenvironment{proof}[1][Proof]{\noindent\textbf{#1.} }{\ \rule{0.5em}{0.5em}}
\def\cred{\color{red}}
\def\blk{\color{black}}
\newcommand{\id}{\mathrm{id}}
\def\io{\rm io}
\begin{document}

\title{Constructive nonlocal games with very small classical values}

\author{
M. Rosicka${}^{1,2}$,
S. Szarek${}^{3,4}$,  A. Rutkowski${}^1$, P. Gnaci\'nski${}^1$, M. Horodecki${}^{1,5}$}

\affiliation{
 ${}^1$Institute of Theoretical Physics and Astrophysics \\ Faculty of Mathematics, Physics and Informatics, University of Gda\'{n}sk, 80-952 Gda\'{n}sk, \\  National Quantum Information Centre in Gda\'{n}sk, 81-824 Sopot, Poland}
 \affiliation{
 ${}^2$Institute of  Informatics, 
 University of Gda\'{n}sk, 
 80-952 Gda\'{n}sk
 }
\affiliation{ ${}^3$Case Western Reserve University, Department of Mathematics, Applied Mathematics and Statistics, 10900 Euclid Avenue, Cleveland, Ohio 44106,  USA }
\affiliation{
${}^4$ Sorbonne Universit\'e, Institut de Math\'ematiques de Jussieu-PRG, 4 place Jussieu, 75005 Paris, France}

\affiliation{
${}^5$ International Centre for Theory of Quantum Information, University of Gda\'nsk, 80-308 Gda\'nsk, Poland
}

%%%%%%%%%%%%%%%%%%%%%

\begin{abstract}
There are few explicit examples of two-player nonlocal games with a large gap between classical and quantum value. One of the reasons is that estimating the classical value is usually a hard computational task. 
This paper is devoted to analyzing classical values of the so-called linear games (generalization of XOR games to a larger number of outputs).
We employ nontrivial results from graph theory and combine them with number theoretic results used earlier in the context of harmonic analysis to obtain a novel tool -- {\it the girth method} --  allowing to provide explicit examples of linear games with prescribed low classical value. In particular, we provide games with 
minimal possible classical value. We then speculate on the potential unbounded violation, by comparing the obtained classical values with a known upper bound for the quantum value. If this bound can be even asymptotically saturated, our games would 
have the best ratio of quantum to classical value as a function of the product of the number of inputs and outputs when compared to other explicit (i.e. non-random) constructions. 
\end{abstract}

%%%%%%%%%%%%%%%%%%%%%

\keywords{classical bound, quantum bound, graph, nonlocal correlations, contextual game, Bell inequalities}

\maketitle

%%%%%%%%%%%%%%%%%%%%%%%%%%%%%%%%%%
%%%%%%%%%%%%%%%%%%%%%%%%%%%%%%%%%%

\section{Introduction}
Nonlocal games can be regarded as a particular type  of Bell-type inequalities and it is well known that Bell-type inequalities  \cite{Scarani2009} are at the heart of quantum information science. While being initially only of philosophical interest, they now form the basis for device independent quantum cryptography \cite{Ekert2014,Scarani-DI}  as well as for quantum advantage \cite{Koenig}.  

%One of important problems is to what extent the quantum mechanics can violate Bell inequalities.
By a nonlocal game we mean a scenario in which two cooperating players are asked to assign values $a$ and $b$ to certain randomly chosen variables $x\in X$ and $y\in Y$, respectively. Neither of the players knows which variable was chosen for the other one and they are not allowed to communicate with each other. For each pair $x, y$ of variables which may be chosen, we want the~values $a$ and $b$ to satisfy certain constraints. If $a$ and $b$ satisfy the constraints, the players win, otherwise they lose. 

We can distinguish two types of strategies to win a nonlocal game, namely: the quantum strategy, where the players have access to a shared entangled state and can perform local measurements of observables corresponding to the questions, and the  classical one, where only classical shared randomness is permitted. The maximum probability of winning a given game with a classical strategy is called the \textit{classical value} or \textit{classical winning probability} $p_{Cl}$ of the game. Similarly, the \textit{quantum value} (or \textit{quantum winning probability}, $p_Q$) of the game is the highest probability of winning the game using a quantum strategy. 
If $p_{Cl}< p_Q$, we say that the game exhibits a {\em quantum violation}.

It is known that in order to get an arbitrarily large ratio between quantum and classical values of winning probability of a bipartite nonlocal game  (an {\em unbounded quantum violation}), one has to  grow the number of inputs as well as outputs \cite{Tsirelson1993,Acin2006}.  Moreover, it is well known that calculating -- or even estimating --  classical values of nonlocal  games  is a very hard computational task \cite{Kempe2008}. 

Note that in the literature it is more common 
to compare {\em biases} rather than values, but in our setting the two quantities nearly coincide, see Section \ref{subsec:biases}. One should also notice, that, as posited in \cite{single-shot-Araujo}, it is the difference of values 
that is arguably operationally most relevant. In this paper we shall nevertheless use ratio 
as a measure of quantum violation.

At the moment, there are very few explicit examples of games that exhibit unbounded quantum violation (for example, Khot-Vishnoi game  \cite{Khot2005, Kempe2008a, Regev} or Hidden Matching game \cite{Buhrman}). 
Others, such as \cite{Junge2010}, are random constructions. 
Such random constructions -- even if they, as is frequently the case, lead to stronger results -- are not fully satisfactory: for actual implementation one needs explicit objects, and even if a randomly obtained game has the desired properties, certifying that fact may be computationally infeasible.

In this paper we make a step towards such new explicit examples by analyzing classical values of so called linear games \cred (\cite{Zukowski1997}, Section III, or \cite{Murta2016, RRGHS}) \blk. In particular, we describe methods of obtaining constructive examples of games with extremely low classical values. This paves the way towards a large ratio between the classical and quantum values of a game, namely: we provide explicit and transparent methods of constructing games with low classical values with a relatively small number of outputs. Additionally, we have shown that these games have an unbounded ratio  between the classical value and the upper bound on the quantum value found in \cite{RAM}.
By some measures this ratio scales better than those exhibited by any explicit (i.e., non-random) games known so far.  

We also address the problem of calculating classical values of games and provide exact solutions for certain classes of games. 

To achieve the above results we introduce a novel tool  -- {\it the girth method} -- resulting from combining mathematical results from two distant domains of mathematics, namely  (i) {\it   Number theory related to harmonic analysis:} We employ construction of set of natural numbers, introduced by Rudin \cite{Rudin} as a tool for harmonic analysis, which have the sum property, i.e., for any pair of  subsets with number of elements smaller than some fixed $s$ and equal, their sums are never the same;
 (ii) {\it Graph theory:} We use the result saying (quantitatively) that a graph with 
 % \textcolor{red}{low girth->high girth}
  high girth \blk
 cannot have too many edges - the first and, up to a constant, optimal result  was conjectured by Erd\H os in 1964 and proved in 1974 in \cite{Bondy-girth}.

 In Section \ref{sec:est} we introduce the main framework used in this paper. We connect the classical value of a game to the rigidity of a matrix which represents the game. We provide ways of finding classical values in general, as well as easier methods for a few specific classes of games.
In Section \ref{sec:girth-method} we introduce graphs $H$ and $H_{opt}$ associated with games and study the classical value of games in terms of the girths of the corresponding graphs.
In Section \ref{sec:construction} we provide two different explicit methods for constructing low-dimensional non-local games  with low classical values, as well as attempt to bound the number of inputs necessary to minimize the classical value of a game with a given number of inputs.
In Section \ref{sec:graphs} we combine our present approach, in which games are represented as matrices, with a framework for studying non-local games based on labeled graphs of \cite{RS} and \cite{RRGHS}.

\section{Linear games and preliminaries on their classical values}
\label{sec:est}

In this section we apply the matrix framework to study properties of nonlocal games where a particular nonlocal game is described by means of a matrix. In particular, we ask how the properties of the matrix affect the classical value of the corresponding game. We provide a method of finding the classical value of any matrix in terms of minors of size 2, as well as examples of more efficient methods for specific types of matrices.
Let us start by introducing the basic concepts and notation used in this article.
 
\subsection{Nonlocal  games: }
Let $X,Y,A,B$ be nonempty finite sets. In the spirit of \cite{Cleve2004}, a nonlocal
game  is determined by  $\pi$, a probability distribution
on the Cartesian product $X\times Y$, and $\text{Pred}:X\times Y\times A\times B\rightarrow\left\{ 0,1\right\}$, the
so called predicate function. 
The rules of the game are the following: the referee
will randomly choose questions $\left(x,y\right)\in X\times Y$ according
to probability distribution $\pi$ and send them to two players, 
Alice and Bob. In what follows, the questions will be called {\it inputs}.
 The players will answer the questions with  {\it outputs} $\left(a,b\right)\in A\times B$
without communication. They win the game iff $\text{Pred}\left(a,b|x,y\right)=1$.

The classical value of a game is the maximal winning probability 
while using only classical strategies. We then have

\begin{equation} \label{classical_value}
p_{Cl}=\max_{a,b}\sum_{x,y}\pi\left(x,y\right)\text{Pred}\left(a\left(x\right),b\left(y\right)|x,y\right),
\end{equation}

where the maximum is taken over all functions $a:X\rightarrow A$
and $b: Y\rightarrow B.$  (At the face value, the formula \eqref{classical_value} 
accounts only for {\em deterministic}
strategies,  but in the setting where shared randomness is allowed the 
values $p_{Cl}$  are the same.) \blk
A quantum strategy for the players involves using a quantum state $\rho$ 
(living on the Hilbert space $H_{A}\otimes H_{B}$) 
shared by Alice and Bob, and a quantum measurement for Alice
(respectively for Bob) for each $x\in X$ $\left(y\in Y\right)$.
For every question $\left(x,y\right),$ the probability of the output
$\left(a,b\right)$ is given by:

\begin{equation}
P\left(a,b|x,y\right)=\text{Tr}\left((E_{x}^{a}\otimes E_{y}^{b})\rho\right),
\end{equation}
where $\big(E_{x}^{a}\big)_{a\in A}$ and $\big(E_{y}^{b}\big)_{b\in B}$ are POVMs living on 
$H_{A}$ and $H_{B}$ respectively. 

The quantum value of the game is given by

\begin{equation}
p_{Q}=\sup\sum_{x,y}\sum_{a,b}\pi\left(x,y\right)\text{Pred}\left(a,b|x,y\right)\text{Tr}\left((E_{x}^{a}\otimes E_{y}^{b})\rho\right),
\end{equation}
where the $\sup$ is taken over all possible quantum strategies. 

\begin{example}
	The CHSH game (named after Clauser, Horn, Shimony and Holt) is the
	nonlocal game for which we have: $X=Y=A=B=\left\{ 0,1\right\} $, % and
	the probability distribution $\pi$ is given by 
	
	\begin{equation}
	\pi\left(0,0\right)=\pi\left(0,1\right)=\pi\left(1,0\right)=\pi\left(1,1\right)=\frac{1}{4},
	\end{equation}
	and the predicate function is % given as below
	
	\begin{equation}
	{\rm Pred}\left(a,b|x,y\right)=\begin{cases}
	1 & \text{if}\quad a+ b=x y \mod 2\\
	0 & \text{if}\quad a+ b\neq x y \mod 2
	\end{cases}, 
	\end{equation}
	 The classical value of the CHSH game is $p_{Cl}=\frac{3}{4}$, while 
	 its quantum value is $p_{Q}= \cos^2(\pi/8) \eqsim 0.85$.
\end{example}

%[!] 
In \cite{RRGHS} a labeled graph framework is used to represent a broad class of nonlocal games known as unique games. The inputs correspond to vertices of a graph $G$ and a labeling $K:E(G)\mapsto S_n$, assigning permutations of a set of $n$ elements to the edges, defines the correlations between desired outputs.

In the case of linear games on bipartite graphs this approach translates naturally into a matrix-based representation. This connection is described in more detail in Section \ref{sec:graphs}. In particular, a contradiction in a game matrix corresponds directly to a contradiction in a labeled graph.

\subsection{Linear games and their matrix representation:}
%\label{sub:matrix}
We have already provided the definition of a nonlocal game as well as its \textit{classical value} $p_{Cl}$ and its \textit{quantum  value} $p_{Q}$.  In this paper we focus on nonlocal games called \textit{linear games} \cite{Zukowski1997, Murta2016, RRGHS}, for which the predicate $\text{Pred}$ is given by
\begin{equation}
	\label{eq:V}
 \text{Pred}\left(a,b|x,y\right)=\begin{cases}
 1 & \text{if}\quad b=a + k_{xy} \mod d \\
 0 & \text{else}
 \end{cases},
\end{equation}
where $k_{xy}\equiv k(x,y)\in\{1,\ldots,d\}$, and $d$ is the number of outputs (equal for both parties).  We shall also consider only 
uniform probability distribution over inputs 
\begin{equation} 
\pi(x,y)=\frac{1}{n_A n_B},
\end{equation}
where $n_A=|X|$ and
$n_B=|Y|$  are the numbers of inputs for Alice and Bob respectively. 
(In this paper, most of the time  we will have $n_A=n_B=n$.  To avoid trivialities, we will {\em always} assume that $n_A, n_B \geq 2$).
This type of game can be represented as an $n_A\times n_B$ matrix $M$.  
The elements of the matrix $M$ will be complex roots of $1$ of order $d$: 
\begin{equation}
M=\left(\begin{array}{cccccc}
m_{11} & m_{12} & . & . & . & m_{1 n_B}\\
m_{21} & m_{22} &  &   &   & m_{2n_B}\\
. &     & . &   &   & .\\
. &     &	  & . &   & .\\
. &     &   &   & . & .\\
m_{n_A 1} &  .  & . & . & . & m_{n_A n_B}
\end{array}\right),
\end{equation}
where 
\begin{align}
\label{eq:def-m-omega}
m_{xy}= \omega^{k_{xy}},\quad
 \text{with} \quad \omega=e^{i 2 \pi/d}, 
 \end{align}
 where $k_{xy}$'s are the same as in \eqref{eq:V}.
 %  (we have 
%  changed $xy$ into $ij$ which is more convenient as labeling matrix elements)
\cred 
We see that the matrix entries are in one-to-one correspondence with 
the parameters $k_{xy}$ defining the game, so that the matrix determines the game. 
We shall later see how the properties of 
a game can be 
extracted from its matrix (e.g. in Fact \ref{contradiciotn_numb}).
\blk  
  
The simplest linear game  is the CHSH game, and it has the following matrix representation:
\begin{equation}
	M_{CHSH}=\left(\begin{array}{cc}
	1 & 1\\
	1 & -1
	\end{array}\right)
\end{equation}	

We shall later use the notions of game and matrix interchangeably.

\subsection{Equivalence of games (matrices).}
Here we introduce important definition of equivalence of games in matrix representation. 
\begin{definition}
\label{def:equiv}
We consider two matrices to be {\it equivalent} if one can be obtained from the other by a sequence of the following operations: 
\begin{enumerate}
	\item[(i)] multiplying a row or column by a root of unity,
	\item[(ii)] swapping two rows or two columns.
	\item[(iii)] transposition
\end{enumerate} 
\end{definition}

The transformations (ii) and (iii)   can be interpreted as relabeling the inputs or outputs of the game. 
The transformation (i) is 
relabeling outputs of each observable separately. 
 Thus the above operations do not change  classical or quantum values of the game. In other words, equivalent games have the same classical (quantum) value.

\begin{remark} \label{order} 
Note that given two equivalent matrices, 
a transformation between them can be always achieved by first applying operations of type (i), then operations  of type (ii), and then (iii), if applicable 
(or any other order of (i), (ii) and (iii)). 
\end{remark}
\begin{definition}
\label{contradiction}
By a contradiction for a given game and a strategy $a(x)$ and $b(y)$ we mean a pair $(x,y)$ such that $b(y) \not = a(x) + k_{xy} \mod d.$
\end{definition}
Let us notice that if we use the above operations we can bring the matrix of any game 
to a ``standard form'' via the following proposition. 
\begin{proposition}
	\label{prop1}
	Every matrix is equivalent to a matrix in which all elements in the first row and column are equal to $1$.
\end{proposition}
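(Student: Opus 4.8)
The plan is to use the equivalence operation (i) --- multiplying a row or column by a root of unity --- to successively clear the first row and then the first column. Since every entry $m_{xy}$ is a $d$-th root of unity, it is invertible, and its inverse $m_{xy}^{-1}$ is again a $d$-th root of unity, so all the multiplications below are legal operations of type (i) in Definition \ref{def:equiv}.

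First I would normalize the first row. For each column $y \in \{1,\ldots,n_B\}$, multiply column $y$ by $m_{1y}^{-1}$. After this step the entry in position $(1,y)$ becomes $m_{1y} \cdot m_{1y}^{-1} = 1$ for every $y$, so the entire first row consists of $1$'s. In particular the $(1,1)$ entry is now $1$. Next I would normalize the first column without disturbing the first row: for each row $x \in \{2,\ldots,n_A\}$, multiply row $x$ by (the inverse of its current first entry). This turns every entry in the first column into $1$. Crucially, multiplying row $x$ (for $x \geq 2$) does not touch row $1$, so the first row remains all $1$'s; and the $(1,1)$ entry was already $1$ and is untouched. Hence after these two rounds of operations we obtain an equivalent matrix whose first row and first column are all equal to $1$.

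I would then remark that the resulting matrix still has all entries equal to $d$-th roots of unity (a product of roots of unity is a root of unity), so it represents a legitimate linear game, and that by the paragraph following Definition \ref{def:equiv} it has the same classical and quantum values as the original. I expect no real obstacle here; the only point requiring a word of care is the \emph{order} of operations --- one must normalize the first row first (using column operations) and only then the first column (using row operations on rows $2,\ldots,n_A$), so that the two normalizations do not interfere. This is exactly the kind of ordering flexibility guaranteed by Remark \ref{order}, so the argument is clean. A concluding sentence can note that this normal form is not unique --- the remaining entries $m_{xy}$ for $x,y \geq 2$ are unchanged by the above and can still be permuted or transposed --- but existence is all that is claimed.
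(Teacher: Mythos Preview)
Your proof is correct and follows essentially the same approach as the paper's: the paper first multiplies rows to clear the first column and then multiplies columns to clear the first row, while you do the two steps in the opposite order, but the idea is identical. One small slip in your concluding remark: the remaining entries $m_{xy}$ for $x,y\ge 2$ are \emph{not} unchanged by these operations --- after both rounds they become $m_{xy}\,m_{1y}^{-1}\,m_{x1}^{-1}\,m_{11}$ --- though this does not affect the validity of the proof itself.
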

This can be easily shown for any matrix via a series of transformations, in which rows and columns are multiplied by appropriate factors (see Appendix).

It follows from Proposition \ref{prop1} that the contradiction number of a game 
corresponding to a $n_A\times n_B$ matrix 
(i.e. with $n_A$ Alice's inputs and $n_B$ Bob inputs)  cannot be greater than $(n_A-1)(n_B-1)$  so that we have
 \blk

\begin{fact}
	\label{fact:min-clas-value}
	For any linear game with $n_A\times n_B$ inputs the classical value satisfies 
	\begin{eqnarray}
	p_{Cl}\geq \frac{n_A+n_B-1}{n_A n_B}.
	\end{eqnarray} 
	Equivalently, the contradiction number satisfies
	\begin{eqnarray} \label{beta_max}
\beta_C \leq (n_A -1)(n_B-1).
	\end{eqnarray}
\end{fact}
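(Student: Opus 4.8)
The plan is to derive both displayed inequalities from Proposition \ref{prop1} together with Fact \ref{classical_val}; the two statements are equivalent via the identity $p_{Cl}(M)=1-\beta_C(M)/(n_An_B)$, so it suffices to prove the bound $\beta_C(M)\le(n_A-1)(n_B-1)$ on the contradiction number. First I would recall that by Proposition \ref{prop1}, every game matrix $M$ is equivalent to a matrix $M'$ whose entire first row and first column consist of $1$'s, and that by the fact preceding Definition \ref{rigidity}, equivalent games have the same contradiction number, so $\beta_C(M)=\beta_C(M')$. It therefore suffices to exhibit a single classical strategy for $M'$ that produces at most $(n_A-1)(n_B-1)$ contradictions.

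The natural candidate strategy is the one that reads off the normalized first row and column: since $m'_{1y}=\omega^{k_{1y}}=1$ for all $y$ we have $k_{1y}\equiv 0\pmod d$, and likewise $k_{x1}\equiv 0\pmod d$. Take Alice's answer to be $a(x)=0$ for every $x\in X$ and Bob's answer to be $b(y)=k_{1y}\equiv 0\pmod d$ for every $y\in Y$ — in other words both players always output $0$. Then the winning condition $b(y)=a(x)+k_{xy}\pmod d$ of Definition \ref{contradiction} becomes $0\equiv k_{xy}\pmod d$, i.e. $m'_{xy}=1$. This is automatically satisfied whenever $x=1$ or $y=1$, because all first-row and first-column entries of $M'$ equal $1$. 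Hence every contradiction $(x,y)$ must satisfy $x\neq 1$ and $y\neq 1$, so the number of contradictions for this strategy is at most $|\{x:x\neq 1\}|\cdot|\{y:y\neq 1\}|=(n_A-1)(n_B-1)$.

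Since $\beta_C(M')$ is the minimum number of contradictions over all classical strategies, it is bounded above by the count for this particular strategy, giving $\beta_C(M)=\beta_C(M')\le(n_A-1)(n_B-1)$, which is \eqref{beta_max}. Substituting into Fact \ref{classical_val} yields
\begin{equation}
p_{Cl}(M)=1-\frac{\beta_C(M)}{n_An_B}\ge 1-\frac{(n_A-1)(n_B-1)}{n_An_B}=\frac{n_A+n_B-1}{n_An_B},
\end{equation}
which is the first displayed inequality.

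There is essentially no obstacle here — the only point requiring a modicum of care is the logical chain \emph{equivalence of games $\Rightarrow$ equal contradiction number $\Rightarrow$ reduce to the standard form of Proposition \ref{prop1}}, after which the all-zeros strategy does all the work. One could alternatively phrase the argument directly in terms of rigidity, noting that replacing the bottom-right $(n_A-1)\times(n_B-1)$ block of $M'$ by $1$'s leaves a rank-one matrix and changes at most $(n_A-1)(n_B-1)$ entries, so ${\rm Rig}(M',1)\le(n_A-1)(n_B-1)$; this is the same computation in different language.
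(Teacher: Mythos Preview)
Your proposal is correct and follows exactly the approach the paper indicates: invoke Proposition~\ref{prop1} to pass to an equivalent matrix with first row and column all equal to $1$, so that at most $(n_A-1)(n_B-1)$ entries can be contradictions, and then translate via Fact~\ref{classical_val}. The paper merely states this in one line without spelling out the strategy or the rigidity reformulation, but the underlying argument is identical.
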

An interesting question is under what conditions the inequality in \eqref{beta_max} is saturated. This is studied more in-depth in Section \ref{sec:girth-method}, cf. in particular Lemma \ref{lem:max-contr-distinct}.

\subsection{Classical value of linear games and contradiction number.}

Let us start with several basic definitions and a few elementary observations.

\begin{definition}
\label{contradiciotn_numb}
Contradiction number is  the minimum number of contradictions over all classical strategies and it is denoted by: $\beta_C(M)$.

\end{definition}
\begin{fact} {\cred {\rm (\cite{RRGHS})}} 
\label{classical_val}
The classical value of a game defined by an $n_A\times n_B$ matrix $M$ can be expressed as
follows \blk
\begin{equation}
p_{Cl}(M)=1 - \frac{\beta_C(M)}{n_An_B} .
\end{equation}
\end{fact}

\begin{fact}
\begin{enumerate}

\item Games whose matrices are equivalent have the same contradiction number. 
\item 
A game with no contradictions corresponds to a matrix of rank $1$, which is equivalent to one in which all elements are equal to $1$.
\item
The contradiction number $\beta_C(M)$ of a game matrix $M$ is no smaller than the number of entries which need to be changed in order to obtain a rank 1 matrix.
\end{enumerate}
\blk
\end{fact}
\begin{proof}
 Item 1 can be 
    can be shown easily using the graph framework used in \cite{RRGHS} (see Section \ref{sec:graphs}).
    For completness we present here also a direct argument.
For a given classical strategy given by numbers $\{a_i\}_{i=1}^n$ and 
$\{b_i\}_{i=1}^n$ the number of contradictions is given by the number of sites $(i,j)$ such that 
$a_i\not= b_i+k_{ij} \mod d$.  Alternatively, 
if we define $\xi_i=\omega^{a_i}$ and $\eta_j=\omega^{b_j}$, 
with $\omega$ being $d$-th root of unity, the condition for contradiction at site $(i,j)$ is 
\begin{align}
    \xi_i \not  =M_{ij} \eta_j
\end{align}
Now, if we swap rows or columns of the matrix, 
then the classical strategies $\xi'_i$ and $\eta'_i$
each with swapped corresponding elements will give the same number of contradictions. 
Transposition of $M$, in turn corresponds to 
swapping Alice strategy $\xi$ with Bob's one $\eta$. 
Finally, if we multiply  a row $i_0$ with root of unity $\omega$, then 
the strategy $\xi$ with  $\xi_{i_0}$ multiplied  with $\omega^{-1}$ gives contradictions for the same sites. 

Let us now prove item 2.
To this end consider matrix $M$ without contradictions. 
This means that there are $\xi$ and $\eta$ such that 
for all sites $\xi_i=M_{ij}\eta_j$.  
Then equivalent matrix given by 
\begin{align}
\label{eq:equiv-opt-eta-xi}
    \tilde M_{ij}=
    \xi_i^{-1} M_{ij} \eta_j
\end{align}
has all entries equal to $1$. 

We now proceed with the proof of item 3. 
Suppose matrix has minimal number of contradictions equal to $l$. Then by changing $l$ entries (those which give rise to contradictions) we obtain matrix without contradictions, hence of rank one by Item 2. 
This means that the number of entries that have to be changed cannot be larger than $l$. 
\end{proof}

The proof of the above facts 
lead us to the following result, that will be later used by us 
to develop girth method.

\begin{proposition} {\cred {\rm (\cite{RRGHS})}} 
\label{prop:beta-non-ones}
The contradiction number $\beta_C(M)$ is equal     
to the minimal number of 
non-one entries 
over all equivalent matrices%\cred (see \cite{RRGHS})
\blk. 
\end{proposition}
\begin{remark}
    Since flipping rows or columns, or transposition do not change number of $1$'s, 
    $\beta_C(M)$ is actually equal to minimal number of non-one entries over matrices obtained just by multiplying rows and columns with a root of unity.
\end{remark}
\begin{proof}
First of all, clearly $\beta_C(M)$ is no larger than 
the minimal number of non-$1$'s 
over equivalent matrices. 
Indeed, equivalence does not change $\beta_C$, so consider 
any equivalent matrix $M'$, and let us check that $\beta_C(M')$
is no larger than number of non-one entries in $M'$. It is indeed so, because we can consider a trivial strategy, where all $a_i=0$ and $b_j=0$ (equivalently $\eta_i=1,\xi_j=1$), then entries of $M'$ equal to $1$ do not create contradictions. 

Now we pass to showing that $\beta_C$ is no smaller than 
the minimal number of non-$1$'s 
over equivalent matrices. 
To this end, consider optimal strategy given by $\eta$ and $\xi$,
i.e. now $\beta_C(M)$ is equal to the number of contradictions for this strategy.
We then consider equivalent matrix given by 
\eqref{eq:equiv-opt-eta-xi}.
This matrix has $1$'s for each entry of original matrix, which didn't lead to contradiction. 
Thus $\beta_C(M)$ is equal to the number non-$1$'s in the new matrix. 
Since this was some particular way of obtaining non-ones by 
equivalence transformations, 
it proves that $\beta_C$ is upper bound for minimal number of non-ones over equivalent matrices. 
\end{proof}
\blk

At the end of this section, let us make connection between 
contradiction number and notion of {\it rigidity} of a matrix. Namely, 
 Item 3 implies that 
 the contradiction number $\beta_C(M)$ of a game matrix $M$ is no smaller than \blk 
the rigidity ${\rm Rig}(M,1)$ which is defined as follows:
\begin{definition}
\label{rigidity}
	For a matrix $M$, let the weight ${\rm wt}(M)$ be the number of nonzero elements within the matrix. The rigidity ${\rm Rig}(M,k)$ is defined by	
	\begin{center}
		\begin{equation}
		{\rm Rig}(M,k)=\min_{M'}\{{\rm wt}(M-M')\left| {{rank}}(M')\leq k\right.\}
		\end{equation}
	\end{center}
	\noindent where $M'$ ranges over all matrices of the same size as $M$.
\end{definition}

In other words, the rigidity ${\rm Rig}(M,k)$ of a matrix is the number of elements which need to be changed in order to obtain a matrix of rank at most $k$. 
The concept of rigidity was introduced in \cite{Valiant77}. The problem of calculating rigidity has been studied in papers such as  %\cite{Alon} 
\cite{Friedman93} \cite{Kashin98} and \cite{Lokam01}, largely in terms of providing lower and upper bounds for certain types of matrices. In our approach we have not used rigidity, but in our opinion it should be explored more in the context of linear games.

\subsection{Contradiction number via minors}
%\label{sub1}
As mentioned before, the classical value of a game is closely tied to the contradiction number of the matrix and $\beta_C(M)={\rm Rig}(M,1)$. We will now show how this can be used to calculate the classical value of a game defined by a matrix.

\cred
The rank of a matrix is the largest order (size) of a non-zero minor that can be found within the matrix. More formally, the rank is the maximum integer \( r \) such that there exists at least one \( r \times r \) submatrix whose determinant (minor) is non-zero.  If all minors of size $2$ are equal  to zero, there can be no nonzero minor of size $3$ or more and the rank of the matrix is $1$. \blk Thus, we have the following
\begin{remark}
A matrix contains no contradiction iff all minors of size 2 are equal to $0.$  
\end{remark}
It is also immediate to see that 
\begin{remark} \label{minor2minor}
The equivalence operations of Def. \ref{def:equiv} transform nonzero minors into non-zero minors.  Consequently, such operations also transform submatrices of $M$ into submatrices of $M'$ of the same rank.
\end{remark}
A minor of size $2$  of $M=\big(m_{ij}\big)$ can be expressed as
%\begin{center}
\begin{equation}
m_{ij}m_{st}-m_{it}m_{sj}.
\end{equation}
%\end{center}
This means that the minor equals $0$ iff
%\begin{center}
\begin{equation}
\label{minors}
m_{ij}m_{st}=m_{it}m_{sj},
\end{equation}
%\end{center}
%
Since the contradiction number of a matrix is the number of elements we must change in order to obtain a matrix without nonzero minors of size $2$, it follows that the contradiction number of a matrix is the number of elements which need to be changed in order to obtain a matrix which satisfies equation (\ref{minors}) for all $i,j,s$ and $t$ 

or, equivalently,
%
%\begin{center}
\begin{equation}
\label{minorsplus}
k_{ij}+k_{st}=k_{it}+k_{sj}  \mod d,
\end{equation}
%\end{center}

\noindent where as in \eqref{eq:def-m-omega} $m_{ij}=\omega^{k_{ij}}, m_{st}=\omega^{k_{st}},m_{it}=\omega^{k_{it}},m_{sj}=\omega^{k_{sj}}$ and $k_{ij},k_{st},k_{it},k_{sj}\in \textbf{Z}_d$.

Since the number of equations is polynomial in $n$, this gives us a good method for 
for checking whether $\beta_C(M)=0$. Variations of this trick help when the matrix has a special structure. 
Unfortunately, the number of solutions to these equations increases exponentially in $n$  hence it is hard to deal with analytically \blk.

In some specific cases it is possible to determine the classical value much more easily, using the following results.

\subsection{Game with only diagonal nontrivial elements in matrix representation}
\label{sec:diagonal}

Notice first that every $n\times n$ matrix with at most one element not equal to 1 in each row and in each column is equivalent in the sense of Definition 
\ref{def:equiv} to

\begin{equation} \label{diagonal}
M=\left(\begin{array}{cccccc}
m_{00} & 1 & . & . & . & 1\\
1 & m_{11} & 1 &   &   & 1\\
. &     & . &   &   & .\\
. &     &	  & . &   & .\\
. &     &   &   & . & .\\
1 &  .  & . & . & 1 & m_{n-1 n-1}
\end{array}\right),
\end{equation}
where $m_{ii}= 1$ for $0\leq i\leq l < n$ and $m_{ii}\neq  1$ for $l< i\leq n $ (for some $l\in \{0,1,2,\ldots,n\})$. Recall that this equivalence preserves both classical and quantum values. Thus, showing the following for the above matrix $M$ also proves it for all matrices equivalent to $M$.

\begin{proposition}
%\label{prop:diagonal-M}
\label{diag}
Let $n \geq 4$ and let $M$ be an $n\times n$ matrix, for which the only elements different from $1$ are on the diagonal. The contradiction number of $M$ is equal to the number of elements different from $1$. 
\end{proposition}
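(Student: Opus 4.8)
The plan is to use the identification $\beta_C(M)={\rm Rig}(M,1)=\min\{{\rm wt}(M-M'):{\rm rk}(M')\le 1\}$ recorded above, and to set $D=\{i:m_{ii}\ne 1\}$, $k=|D|$. The inequality $\beta_C(M)\le k$ is immediate: changing each of the $k$ diagonal entries $m_{ii}$ with $i\in D$ to $1$ yields the matrix all of whose entries equal $1$, which has rank $1$. So the whole content is the lower bound $\beta_C(M)\ge k$, which I would prove by contradiction: assume ${\rm rk}(M')\le 1$ and ${\rm wt}(M-M')<k$. If $M'=0$ then ${\rm wt}(M-M')=n^2\ge k$, a contradiction; hence ${\rm rk}(M')=1$ and $M'=\mu v^{T}$ for some nonzero vectors $\mu,v$.

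Next I would locate a convenient unchanged row. Since fewer than $k=|D|$ entries of $M$ are altered, and altered entries lie in fewer than $|D|$ distinct rows, some $i\in D$ has its whole row unchanged; that row equals $w:=(1,\dots,1,m_{ii},1,\dots,1)$ with $m_{ii}\ne 1$ in position $i$. Comparing with $M'=\mu v^{T}$, the $i$-th row $\mu_i v^{T}$ must equal $w^{T}$, which (as $w\ne 0$) forces $\mu_i\ne 0$; after the harmless rescaling $\mu\mapsto\mu_i^{-1}\mu$, $v\mapsto\mu_i v$ we may assume $\mu_i=1$ and $v=w$.

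Then I would count changes row by row, keeping track of $z:=\#\{x:\mu_x=1\}$ (note $z\ge 1$ since $\mu_i=1$). If $\mu_x\ne 1$ (hence $x\ne i$), then for every column $y\notin\{x,i\}$ we have $M'_{xy}=\mu_x\ne 1=M_{xy}$, so row $x$ carries at least $n-2$ changes. If $\mu_x=1$ and $x\ne i$, then $M'_{xi}=m_{ii}\ne 1=M_{xi}$, so row $x$ carries at least one change. Summing, ${\rm wt}(M-M')\ge (n-z)(n-2)+(z-1)$ whenever $z\le n-1$; this quantity is a decreasing function of $z$ for $n\ge 4$, so it is at least its value at $z=n-1$, namely $2(n-2)\ge n\ge k$ --- contradicting ${\rm wt}(M-M')<k$. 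The one remaining possibility is $z=n$, i.e. $\mu\equiv 1$, so that every row of $M'=\mu v^{T}$ equals $v=w$: then each of the $k-1$ rows indexed by $D\setminus\{i\}$ differs from $w$ in the two positions $i$ and $x$, and each of the $n-k$ rows indexed outside $D$ differs from $w$ in position $i$, giving ${\rm wt}(M-M')\ge 2(k-1)+(n-k)=n+k-2\ge k$, again a contradiction. Hence $\beta_C(M)\ge k$, and combining with the upper bound, $\beta_C(M)=k$.

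The main obstacle I anticipate is the bookkeeping in the row-by-row count, and in particular recognizing that the degenerate case $\mu\equiv 1$ must be handled separately: there the ``$n-2$ changes per bad row'' mechanism disappears and one must instead exploit the $k-1$ forced diagonal mismatches. It is also worth noting that the numerical inequality $2(n-2)\ge n$, equivalently $n\ge 4$, is exactly what the argument needs and is tight at $n=4$, which explains the hypothesis.
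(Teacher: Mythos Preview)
Your proof is correct and takes a genuinely different route from the paper's. The paper argues via the equivalence relation on game matrices: assuming some equivalent matrix $M'$ (obtained only by multiplying rows and columns by roots of unity) has more entries equal to $1$ than $M$, it locates a pair of columns that are entirely $1$ in $M'$ and reaches a contradiction by comparing ranks (equivalently, $2\times 2$ minors) of the corresponding $n\times 2$ submatrices of $M$ and $M'$, invoking the minor-preservation principle of Remark~\ref{minor2minor}. You instead work directly with the rigidity characterization $\beta_C(M)={\rm Rig}(M,1)$: writing an \emph{arbitrary} rank-$1$ approximant as $\mu v^{T}$, pinning $v$ to an unchanged row of $M$, and then counting forced disagreements row by row. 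Your argument is more self-contained (no equivalence or minor machinery) and in fact establishes the slightly stronger statement that even unrestricted rank-$1$ matrices cannot beat $k$ changes; the paper's argument, by contrast, stays inside the game-theoretic equivalence class and showcases the minor-preservation tool that recurs throughout the paper. Both approaches pinpoint $n\ge 4$ as the exact threshold: in yours it is the inequality $2(n-2)\ge n$, in the paper's it is the need for an $(n-1)\times 2$ submatrix with at least three rows in Case~$2^\circ$.
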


For $n=3$ the above assertion does not hold, see the example in Proposition \ref{prop:diag3by3}.

%[!] 
\medskip
\begin{proof}  We can assume that the matrix $M$ is of the form \eqref{diagonal}. 

We shall base on the Proposition \ref{prop:beta-non-ones}
which says, that $\beta_C$ is the minimal number of entries not equal to $1$ over all equivalent matrices. 

Therefore, first of all 
$\beta_C(M) \leq n-l$, the number of entries of $M$ different from $1$.  
\blk

For the opposite inequality, suppose there is a matrix $M'$ equivalent to $M$ with larger number of $1$'s than $M$.   {\it A fortiori}, such $M'$ must have at least $l+1$  ``free'' rows and columns (i.e.  rows and columns containing no elements different from $1$). 

Moreover, since 
swapping rows/columns and a transposition do not change the number of $1$s, 
we can assume that  $M$ is transformed into $M'$ only by multiplying rows and columns by roots of unity (cf. Remark \ref{order}). 

We now consider two cases. 

\smallskip\noindent Case $1^\circ$:  \ $l \geq 1$. We have then at least $l+1 \geq 2$ ``free'' columns of $M'$ and exactly $l$ ``free'' columns of $M$. 
Consequently, there is a pair of columns, say the $j$th and the $k$th, such that the $n\times 2$ submatrix of $M'$ consisting of the $j$th and the $k$th column 
has all entries equal to $1$ while the corresponding  $n\times 2$ submatrix of $M$ has one or two entries (namely, $m_{jj}$ or $m_{kk}$, or both) that are different from $1$. This is impossible since every $2\times 2$ minor of the submatrix of $M'$ is equal to zero, while there must be a nonzero  $2\times 2$ minor of the submatrix of $M$, contradicting  the observation from Remark \ref{minor2minor}. The nonzero minor uses one of the $j$th row if $m_{jj}\neq 1$ (and the $k$th row otherwise) and any row other than the $j$th and the $j$th. 
(This part of the argument requires only $n\geq 3$.) Equivalently, we could argue that the above submatrix of $M'$ is of rank $1$, while the corresponding submatrix of $M$ is of rank $2$. 

\smallskip\noindent Case $2^\circ$   \ $l = 0$. We have then a pair of columns, say the $j$th and the $k$th, such that the $n\times 2$ submatrix of $M'$ formed by them contains at most one entry different from $1$. Consequently, there is a further $(n-1) \times 2$ submatrix that consists exclusively of $1$s and so all its $2\times 2$ minors are equal to zero. At the same time, since $l=0$ and so every column of $M$ contains exactly one element different from $1$, the corresponding $(n-1) \times 2$ submatrix of $M$ contains at least one (and possibly two, in different rows) element different from $1$, which is again impossible by the same argument as in Case $1^\circ$ as long as $n-1\geq 3$, i.e. $n\geq 4$, concluding the argument.
%

%
% However this is only possible if $m_{00} = 1$. Thus, for any $n\times n$ matrix ($n\geq 4$) with at most one element not equal to 1 in each row and column, the contradiction number is the number of such elements.
\end{proof}

\subsection{Game with nontrivial elements only in one row (or column) in matrix representation}
\label{sec:row}

Let us now consider an $n_A\times n_B$ matrix in which all elements different from $1$ are in the same row (or in the same column).

\begin{proposition} \label{prop:one_row}
If $M$ is an $n_A\times n_B$  matrix 
\vspace{1ex}
\begin{equation}\label{last_row}
M = \left(\begin{array}{ccccccc}
1 & . & . &  .  &  1\\
. & .  & . &  .   & .\\
. & .  & . &   .  & .\\
1 & .  & . &  .   & 1\\
m_1 &.& . &   . & m_{n_B}
\end{array}\right) 
\end{equation}
\vspace{1ex}

\noindent

such that the number of the most common elements in the last row is $k$,

then $\beta_C(M)=n_B-k.$
\end{proposition}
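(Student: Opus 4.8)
The plan is to bound $\beta_C(M)$ from above and below separately, working with the reformulation $\beta_C(M) = \mathrm{Rig}(M,1)$ and the equivalent minor condition \eqref{minorsplus}. For the upper bound, I would exhibit an explicit strategy: keep the last row fixed and set every other entry equal to the value that makes the corresponding column rank one. Concretely, if $m_j$ is one of the $k$ most common entries in the last row, leave the $j$th column untouched; for each of the remaining $n_B-k$ columns, change the single entry in the last row to (say) $1$, or equivalently change all the $n_A-1$ entries above it so as to restore rank $1$ — whichever is cheaper. Since each such column can be fixed by changing just the one offending entry of the last row, we get $\beta_C(M) \le n_B - k$. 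This direction requires only $n_A \ge 2$.

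For the lower bound I would argue by contradiction: suppose some $M'$ equivalent to $M$ has strictly fewer than $n_B-k$ nonzero errors relative to a rank-$1$ matrix, i.e. $\beta_C(M) \le n_B-k-1$. By Fact (equivalent matrices have the same contradiction number) and by Remark~\ref{order}, I may assume $M'$ is obtained from $M$ by row/column multiplications only, followed possibly by permutations and transposition; but permutations and transposition do not change the count of entries equal to $1$, so it suffices to track how many entries of $M$ must be altered. The key combinatorial point: among the first $n_A-1$ rows of $M$ (all of whose entries are $1$), at least $n_A-1 \ge 1$ of them remain "free" rows in $M'$ unless we spend errors on them, and similarly the last row has at most $k$ "agreeing" columns. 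If fewer than $n_B-k$ errors are used, then by pigeonhole there exist two columns $j, j'$ such that in the altered matrix $M'$ the $n_A \times 2$ submatrix on columns $j,j'$ is rank $1$ (all $2\times 2$ minors vanish), yet in $M$ the corresponding submatrix has a nonzero $2\times 2$ minor — because the last-row entries $m_j, m_{j'}$ differ while a row above them has two equal entries. This contradicts Remark~\ref{minor2minor}.

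The step I expect to be the main obstacle is making the pigeonhole argument in the lower bound fully precise when errors are allowed to be placed arbitrarily — not just in the last row. One has to rule out the possibility that an adversary distributes changes cleverly among the top rows and the last row to destroy all the "bad" $2\times 2$ minors more cheaply. The clean way to handle this is to fix attention on a single free row $r$ among the first $n_A-1$ rows that receives no error (such a row exists once the number of errors is below $n_A-1$, and otherwise we have already used $n_A - 1 \ge 1 \ge n_B - k$ errors when $n_B=k+1$, handling the boundary by inspection), then on the pair consisting of row $r$ and the last row; restricting to that $2 \times n_B$ band reduces the problem to analyzing how many of the $n_B$ columns can be "repaired" per unit error, which is exactly $1$, giving $\ge n_B - k$ errors. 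I would close by noting that the extremal structure — entries off the last row all equal, last row with a frequency-$k$ mode — matches both bounds, so $\beta_C(M) = n_B - k$ as claimed.
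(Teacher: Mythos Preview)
Your overall strategy matches the paper's: an explicit modification of $n_B-k$ entries for the upper bound, and a column-pigeonhole argument combined with Remark~\ref{minor2minor} for the lower bound. Two points need fixing, though.

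\emph{Upper bound.} Changing the non-modal last-row entries ``to (say) $1$'' does \emph{not} produce a rank-$1$ matrix unless the mode itself happens to be $1$: the $k$ columns whose last entry equals the mode $z$ would still carry a $z$ below a column of $1$'s. You must change the $n_B-k$ offending entries to $z$ (equivalently, as the paper does, multiply the last row by $z^{-1}$ first). This is a slip, not a structural problem.

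\emph{Lower bound.} Your second paragraph already contains the paper's argument, though the phrasing obscures it: if an equivalent $M'$ (obtained solely by row/column scalings) has at most $n_B-k-1$ entries different from $1$, then at least $k+1$ columns of $M'$ are entirely $1$; since at most $k$ of the $m_j$ can share a common value, two of those columns $j,l$ satisfy $m_j\neq m_l$, so the corresponding $n_A\times 2$ submatrix of $M$ has a nonzero $2\times 2$ minor while that of $M'$ does not --- contradicting Remark~\ref{minor2minor}. Your ``main obstacle'' paragraph is a red herring arising from conflating the rigidity picture (changing entries of $M$) with the equivalence picture (row/column scalings): the non-$1$ entries of $M'$ are not ``errors'' that an adversary places on the original $M$, so there is no worry about errors being distributed among the top rows. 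Worse, the free-row patch you propose there genuinely fails whenever $n_B-k\ge n_A$ (take $n_A=2$, $n_B$ large, all $m_j$ distinct: then $n_B-k-1=n_B-2\ge 1=n_A-1$, and there need not be any unchanged top row). Drop that paragraph and run the clean column pigeonhole instead.
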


 Note that any matrix in which all elements not equal to 1 are in the same row or in the same column is equivalent to a matrix of form \eqref{last_row}.

\smallskip \begin{proof}  First, if the last row of $M$ contains $k$ elements equal to $z$, then multiplying the last row by $z^{-1}$ we create $k$ entries equal to $1$, which shows that $\beta_C(M)\leq n_B-k.$ 

For the opposite inequality, suppose $\beta_C(M)\leq n_B-k-1$. As in the proof of Proposition \ref{diag}, this means that there is a matrix $M'$ obtained from $M$ by multiplying rows and columns by roots of unity, which contains at most $n_B-k-1$ elements different from $1$. {\it A fortiori}, there are at least $k+1$ ``free'' columns of $M'$ that consist exclusively of $1$s. Consequently, there are two such columns, say the $j$th and the $l$th, for which $m_j\neq m_l$. (Recall that no more than $k$ of the $m_j$s can take the same value.) But this is impossible: every $2\times 2$ minor of the $n_A\times 2$ matrix consisting of the $j$th and the $l$th column of $M'$ is equal to zero, while obviously there are minors of the corresponding $n_A\times 2$ submatrix of $M$ that are nonzero (namely, minors involving the last row), contradicting  the observation from Remark \ref{minor2minor}. 

\end{proof}

\section{The girth method for bounding classical values}

\label{sec:girth-method}
In this section we analyze the connection between the classical value of a game and the cycles within certain graphs $H(M)$ and $H_{opt}(M)$ derived from the matrix which defines the game. It turns out the length of cycles permitted in $H_{opt}$ gives us control over the number of contradictions in $M$ and thus the classical value of the game. 
We call it {\it the girth method}, as the girth of a graph is the length of its shortest cycle, 
and our method will be precisely to construct games with controlled girth. (We refer to \cite{BondyMurty} for basic terminology and results of graph theory.)

\begin{definition}[Graph of a game]
For any $n_A\times n_B$ game matrix $M=\{m_{ij}\}$ we define $H(M)$ as a bipartite graph with the vertex set $V(H)=\{1_A,2_A...,n_A,1_B,2_B,...,n_B\}$. Two vertices  $i_A,j_B$ in $H$ are adjacent in $H$ iff $m_{i_A j_B}=1$ in $M$. 
\end{definition}

The graph $H(M)$ can also be constructed from the corresponding labeled graph $(G,K)$. The set of vertices is simply $V(G)$ and an edge $e\in E(G)$ belong to $E(H)$ if and only if it is assigned the identity by the labeling $K$.

\begin{definition}[Optimal graph of the game]
Let $M$ be any game matrix and let $M_{opt}$ be a matrix equivalent to $M$ with the maximum number of elements equal to $1$. We will refer to the graph $H_{opt}(M)=H(M_{opt})$ as an optimal game graph of $M$.
\end{definition}
The optimal game graph is not necessarily unique, but the considerations that follow are not affected by how we make the selection. 
\begin{remark}
\label{rem:no-permutations-needed}
Note that permutations of columns and rows, as well as transposition of matrix  do not change the number of $1$'s in a matrix. 
Hence an optimal matrix can be obtained from the original matrix solely by multiplying rows and columns  by roots of $1$. 
\end{remark}

We shall now use  Proposition 
\ref{prop:beta-non-ones} to connect 
the classical value of the game associated with $M$ 
with the properties of the graph $H$. 

\begin{fact}
\label{fact:cl-H}
For any $n\times n$ game matrix $M$
\begin{equation}
\label{eq:cl-edges}
p_{Cl}=\frac{m}{n^{2}}
\end{equation}    
 where $m$ is the number of edges  in $H_{opt}(M)$.   
\end{fact}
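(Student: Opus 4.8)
The statement simply combines the earlier-established links between classical value, contradiction number, rigidity, and the graph $H_{opt}$. The plan is to chain these together for the square case $n_A = n_B = n$. First I would recall Fact \ref{classical_val}, which gives $p_{Cl}(M) = 1 - \beta_C(M)/(n_An_B) = 1 - \beta_C(M)/n^2$. Next, the paper has already identified $\beta_C(M)$ with the rigidity ${\rm Rig}(M,1)$, and has observed (Fact that equivalent games have the same contradiction number, plus Remark \ref{rem:no-permutations-needed}) that $\beta_C(M)$ equals the minimum, over matrices $M'$ equivalent to $M$, of the number of entries of $M'$ not equal to $1$. Since an entry of $M'$ equals $1$ exactly when the corresponding pair of vertices is an edge of $H(M')$, the number of entries of $M'$ equal to $1$ is precisely the number of edges of $H(M')$.

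The second step is to pass from "some equivalent matrix" to "the optimal matrix $M_{opt}$". By definition $M_{opt}$ is an equivalent matrix with the maximum number of $1$'s, hence the minimum number of non-$1$ entries; so $\beta_C(M)$ equals the number of non-$1$ entries of $M_{opt}$, which is $n^2 - m$ where $m = |E(H_{opt}(M))|$ is the number of edges of $H(M_{opt}) = H_{opt}(M)$. Here I should note, as the paper already does, that it does not matter which optimal matrix is chosen, since they all have the same number of $1$'s by definition, so $m$ is well-defined. Substituting into Fact \ref{classical_val} gives
\[
p_{Cl} = 1 - \frac{n^2 - m}{n^2} = \frac{m}{n^2},
\]
as claimed.

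There is essentially no obstacle here: the content is entirely in the preceding facts and definitions, and the only thing to be careful about is making explicit the correspondence (edges of $H(M')$) $\leftrightarrow$ ($1$-entries of $M'$) and the fact that minimizing non-$1$ entries over equivalent matrices is achieved by $M_{opt}$. One mild point worth a sentence is that Remark \ref{rem:no-permutations-needed} guarantees the optimum among all equivalent matrices coincides with the optimum among matrices obtained by row/column rescaling alone, so nothing is lost by the way $H_{opt}$ is defined; but even this is not strictly needed, since the definition of $M_{opt}$ already ranges over all equivalent matrices. Thus the proof is a short computation assembling Fact \ref{classical_val}, the identification $\beta_C = {\rm Rig}(\cdot,1)$, and the definitions of $H$ and $H_{opt}$.
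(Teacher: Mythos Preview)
Your proposal is correct and follows exactly the reasoning the paper intends: the paper presents this fact as immediate ``by construction,'' and your argument simply spells out that construction by combining Fact \ref{classical_val} with the definitions of $\beta_C$, $M_{opt}$, and $H_{opt}$. There is nothing to add.
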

%[!]

\begin{proof}
From Prop. \ref{prop:beta-non-ones} we know that $\beta_C$ is given by minimal number of non-$1$ entries 
over all equivalent matrices. 
Thus, by definition of $H_{opt}(M)$, it
equal to the number of missing edges in the graph, 
i.e. $n^2-m$.
Due to Fact \ref{classical_val}
we have  $p_{Cl}=1-\frac{\beta_C(M)}{n^{2}}$, so inserting $\beta_C=n^2-m$
we obtain the result.
\end{proof}
\blk

 Accordingly, to get games with low classical value we need to construct matrices which correspond to graphs $H_{opt}$ with small number of edges. On the other hand, it is well known that a graph which does not have short cycles cannot have too many edges. Hence our aim will be to construct matrices such that the corresponding graphs do not have short cycles. 
We shall illustrate this idea by means of the following example.

\begin{center}
\begin{figure*}[!htbp]
\begin{picture}(500,250)
	\put(0,-10){\includegraphics[scale=0.50]{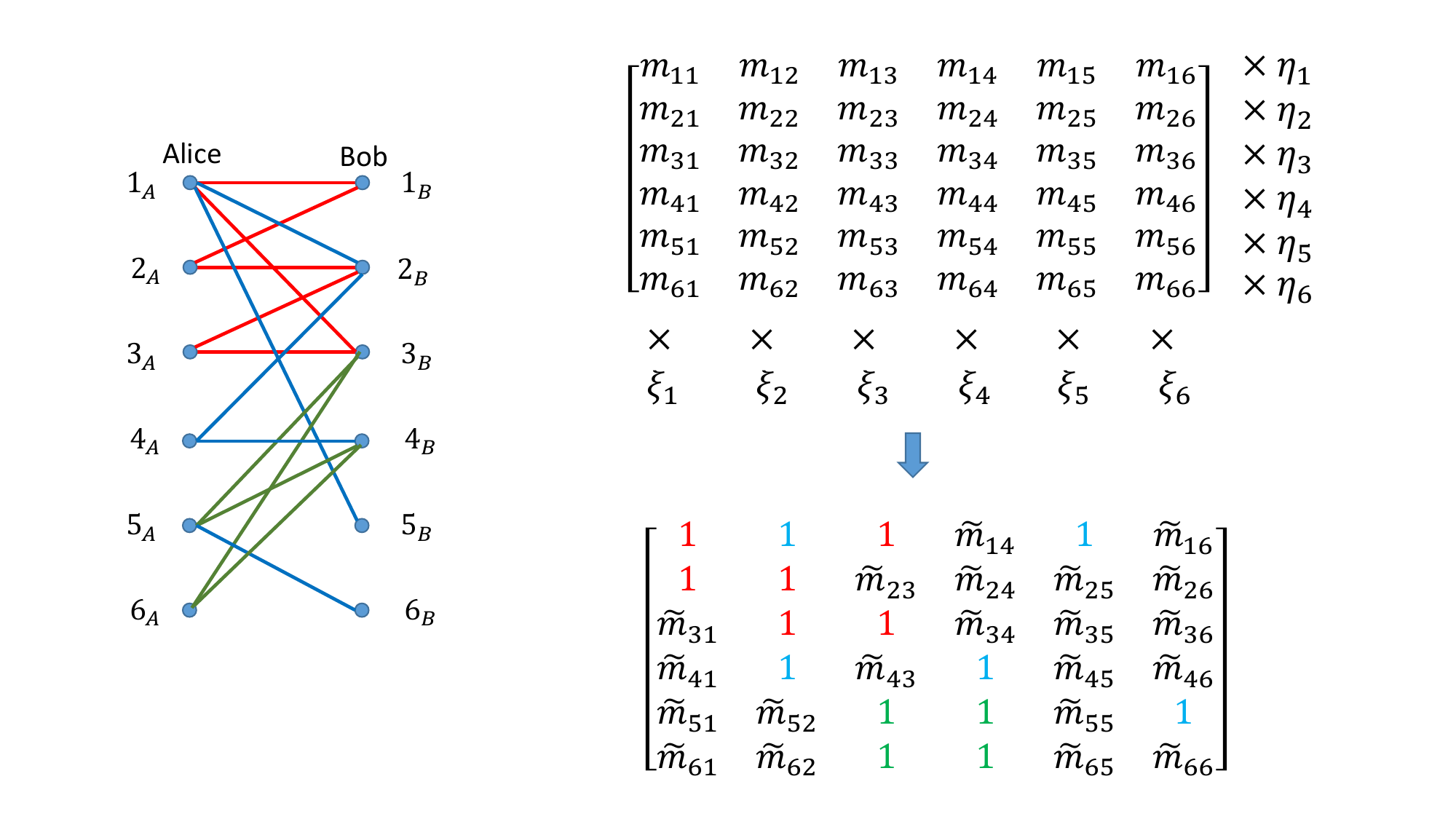}}
\end{picture}
	\caption{Cycles in graph $H$ and the matrix of the game. There are two cycles, marked in green and red.}
	\label{fig:cycle}
\end{figure*}
\end{center}

{\it Example.}
Let us note that every cycle in the graph $H_{opt}$  corresponds to a family of elements in the matrix $M$, to which we will also refer as a cycle. 
For instance in Fig. \ref{fig:cycle} the cycle $1_A \to 1_B \to 2_A \to 2_B \to 3_A \to 3_B \to 1_A$ in the graph 
(the red cycle) is related to the cycle  
$m_{11}\to   m^T_{12}\to m_{22}\to m^T_{23}\to m_{33}\to m^T_{31}$ \blk in the matrix  (here $m_ij^T=m_{ji}$)\blk. 
The green cycle $5_A \to 3_B \to 6_A \to 4_B\to 5_A$
corresponds to $m_{53} \to m^T_{36}\to m_{64} \to m^T_{45}$. \blk 

Since the matrix $M_{opt}$ is equivalent to $M$ and can be obtained by multiplying rows of $M$ by $\eta_i$ and columns by $\xi_j$, 
we have  $m_{ij}\, \eta_i\, \xi_j=1$ for all pairs $(i,j)$ such that $a_ib_j$ is an edge in $H_{opt}$.  Hence  in the red cycle we have  
\begin{equation}
\begin{aligned}
(\eta_1 m_{11}  \xi_1)(\xi_1 m^T_{12} \eta_2)^{-1} 
(\eta_2 m_{22}  \xi_2)(\xi_2 m^T_{23} \eta_3)^{-1} 
\\(\eta_3 m^T_{33}  \xi_3)(\xi_3 m^T_{31} \eta_1)^{-1}= 1  
\end{aligned}
\end{equation}
which gives
\begin{equation}
m_{11} (m^T_{12})^{-1}  m_{22} (m^T_{23})^{-1} m_{33} (m^T_{31})^{-1} =1
\end{equation}
or equivalently 
\begin{equation}
m_{11} (m_{21})^{-1}  m_{22} (m_{32})^{-1} m_{33} (m_{13})^{-1} =1
\end{equation}
which, since $m_{ij}=\omega^{k_{ij}}$, means 
\begin{equation}
k_{11}+k_{22}+k_{33}=k_{21}+k_{32}+k_{13} \mod d
\end{equation}
Thus the sum of three numbers $k_{ij}$ is equal to the sum of three other numbers. 

 %\cmag 
\medskip \noindent {\it The general case.}
In general, we will define a cycle in a matrix $M$ as a set of matrix elements corresponding to a cycle in the complete bipartite graph $K_{n_A,n_B}.$ 
\begin{definition}[Cycle]
	A cycle $C$ in matrix $M$ is a subset of matrix entries determined by ordered sets of rows $S_A=(i_1,\ldots i_l)$ and columns $S_B=(j_1,\ldots j_l)$, $l\geq2$, as follows
\begin{align}
\{ m_{i_1,j_1}, m_{j_1,i_2}^T, m_{i_2,j_2}, m_{j_2,i_3}^T, m_{i_3,j_3} \ldots m_{i_l,j_l}, m_{j_l, i_1}^T \},
\end{align}

\noindent where $m_{ij}^T=m_{ji}.$
\end{definition}

\begin{definition}[Good cycle] 
\label{def:good-cycle}

A cycle $C$ in matrix $M$ given by subsets 
$S_A=(i_1,\ldots i_l)$ and  $S_B=(j_1,\ldots j_l)$ of rows and columns respectively, is referred to as a  {\rm good cycle} 
if it satisfies 
\begin{eqnarray}
\label{eq:good-cycles-matrix}
&&m_{i_1,j_1}  (m_{j_1,i_2}^T)^{-1} m_{i_2,j_2} (m_{j_2,i_3}^T)^{-1} m_{i_3,j_3} \times  \nonumber \\
&&\times \ldots \times m_{i_l,j_l} ( m_{j_l, i_1}^T)^{-1} =1  
\end{eqnarray}
equivalently
\begin{equation}
\label{eq:ks}
\begin{aligned}
k_{i_1, j_1} + k_{i_2,j_2} + k_{i_3,j_3} +\ldots+ k_{i_l,j_l} = \\ k_{j_1,i_2}+ k_{j_2,i_3} + k_{j_3,i_4} +\ldots+ k_{j_l,i_1} \mod d
\end{aligned}
\end{equation}
where $M=(m_{ij})=(\omega^{k_{ij}})$ with $\omega=e^{2 \pi i/d}$. 
\end{definition}
Note that a good cycle in a matrix $M$ corresponds to a good cycle in the labeled graph framework discussed in Section \ref{sec:graphs}.

In the next proposition we  observe that a cycle in $H_{opt}$ corresponds to a good cycle in $M$. This correspondence is direct if the optimal matrix leading to $H_{opt}$ was obtained without  permutations and transposition (i.e. solely by multiplying rows and columns with roots of unity). Otherwise, 
the two cycles are related by permutations or transposition (the latter acts as reflection).

\begin{proposition}
\label{prop:goodcycle}
Consider a cycle $C$ in matrix $M$ given by subsets 
$S_A=(i_1,\ldots i_l)$ and  $S_B=(j_1,\ldots j_l)$ of rows and columns respectively. 
If the graph $H_{opt}(M)$ contains  the corresponding cycle then the cycle $C$ in $M$ 
is a good cycle. 

Conversely, if 
matrix $M$ contains a good cycle, then there exists equivalent matrix $M'$ such that $H(M')$ contains the corresponding cycle.
\end{proposition}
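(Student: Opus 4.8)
The plan is to prove both directions by tracking, along the edges of the cycle, the scalar correction factors $\eta_i$ (applied to rows) and $\xi_j$ (applied to columns) that relate an equivalent matrix to $M$. For the forward direction, suppose $H_{opt}(M)$ contains the cycle corresponding to $C$. By Remark~\ref{rem:no-permutations-needed}, we may assume $M_{opt}$ is obtained from $M$ purely by multiplying rows by roots of unity $\eta_i$ and columns by $\xi_j$, so that $(M_{opt})_{ij} = \eta_i m_{ij} \xi_j$. Since every edge of the cycle lies in $H_{opt}(M)$, each of the corresponding $2l$ entries of $M_{opt}$ equals $1$; that is, $\eta_{i_s} m_{i_s,j_s} \xi_{j_s} = 1$ and $\eta_{i_{s+1}} m_{j_s,i_{s+1}}^T \xi_{j_s} = 1$ (indices cyclic). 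First I would write down the telescoping product over the cycle: multiply the relations for the ``forward'' edges $m_{i_s,j_s}$ and the inverses of the relations for the ``backward'' edges $m_{j_s,i_{s+1}}^T$. Each $\eta_i$ and each $\xi_j$ appears exactly once with exponent $+1$ and once with exponent $-1$, so all correction factors cancel, leaving precisely \eqref{eq:good-cycles-matrix}. This proves $C$ is a good cycle.

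For the converse, suppose $C$ is a good cycle, so \eqref{eq:good-cycles-matrix} holds. The goal is to produce row/column multipliers $\eta_i, \xi_j$ (for $i$ in $S_A$, $j$ in $S_B$; set the rest to $1$) so that in $M' := \mathrm{diag}(\eta) \, M \, \mathrm{diag}(\xi)$ all $2l$ cycle entries become $1$, i.e. $H(M')$ contains the cycle. This is a system of $2l$ equations $\eta_{i_s} m_{i_s,j_s} \xi_{j_s} = 1$ and $\eta_{i_{s+1}} m_{j_s,i_{s+1}}^T \xi_{j_s} = 1$ in the $2l$ unknowns $\{\eta_{i_s}\}, \{\xi_{j_s}\}$. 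I would solve it by walking around the cycle: fix $\eta_{i_1} = 1$ arbitrarily, then $\xi_{j_1}$ is forced by the first equation, then $\eta_{i_2}$ by the second, and so on, alternately determining each new multiplier from the previous one. After going all the way around we return to a second constraint on $\eta_{i_1}$; the content of the argument is that this closing constraint reads exactly $\eta_{i_1} = \eta_{i_1} \cdot (\text{the left side of }\eqref{eq:good-cycles-matrix})$, which is satisfied precisely because $C$ is good. Hence the assignment is consistent and $M'$ is the desired equivalent matrix.

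The one genuine subtlety — and the step I'd expect to require care rather than real difficulty — is that the sets $S_A = (i_1,\dots,i_l)$ and $S_B = (j_1,\dots,j_l)$ are \emph{ordered} tuples and need not consist of distinct indices, so a single row or column index may recur at several positions of the cycle. When that happens, the ``walk around the cycle'' assigns a value to, say, $\eta_i$ more than once, and one must check these assignments agree; equivalently, one must verify that every \emph{sub-loop} closed off by a repeated vertex also satisfies the corresponding good-cycle identity. The cleanest way to handle this is to observe that a cycle with a repeated vertex decomposes into strictly shorter closed walks, and that the good-cycle condition \eqref{eq:ks} is additive under concatenation, so it passes to each piece; then one inducts on $l$. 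Alternatively, and more simply, one can note that it suffices to prove the statement for a cycle visiting distinct vertices (the generic case used in the constructions of Section~\ref{sec:construction}), and remark that the general case follows by applying the distinct-vertex case to each component of the decomposition. I would phrase the proof for the distinct-vertex case in the main text and relegate the bookkeeping for repeated vertices to a remark or the appendix.
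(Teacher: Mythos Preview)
Your argument is essentially identical to the paper's: both directions are handled exactly as you describe, via the telescoping product of the $\eta$--$\xi$ relations for the forward implication, and by walking around the cycle fixing multipliers one edge at a time for the converse. The only point of divergence is your ``genuine subtlety'' about repeated indices: in the paper's setup a cycle in $M$ is by definition the set of entries corresponding to a cycle in the complete bipartite graph $K_{n_A,n_B}$, so the indices $i_1,\dots,i_l$ (and $j_1,\dots,j_l$) are automatically distinct, and the paper invokes this explicitly to conclude that each $\eta_i,\xi_j$ is assigned exactly once. Your extra paragraph on decomposing closed walks with repeated vertices is therefore unnecessary here.
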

%\mh{napisac dowod, oraz w tresci Prop "if and only if". W dowodzie puscic strzalke do grafow labelowanych}
\begin{proof}
Let $M$ be a game matrix and let $C$ be the cycle defined by $S_A=(i_1,\ldots i_l)$ and  $S_B=(j_1,\ldots j_l)$.
As was noted in Remark \ref{rem:no-permutations-needed}, optimal matrix can be obtained 
without permutations of rows/columns or transposition. 
Thus, 
$C$ corresponds to a cycle in so obtained $H_{opt}$ if and only if there exists an optimal matrix $M_{opt}$ in which all elements on the cycle determined by the cycle from $H_{opt}$  are equal to $1$. We can obtain such $M_{opt}$ through multiplying each row of $M$ by some $\eta_i$ and each column by some $\xi_j$, where $i$ and $j$ are the indices of rows and columns, respectively. Thus on the cycle $C$ we have

% \begin{equation}
% \begin{aligned}
\begin{align} 
\label{eq:proofgood}
&m_{i_1,j_1}  (m_{j_1,i_2}^T)^{-1} m_{i_2,j_2} (m_{j_2,i_3}^T)^{-1} m_{i_3,j_3}\times\ldots\times  \nonumber\\
&\times m_{i_l,j_l} ( m_{j_l, i_1}^T)^{-1} =
\eta_{i_1}m_{i_1,j_1}\xi_{j_1} \times \nonumber\\
&\times (\eta_{i_2}m_{j_1,i_2}^T\xi_{j_1})^{-1} \eta_{i_2}m_{i_2,j_2}\xi_{j_2} (\eta_{i_3}m_{j_2,i_3}^T\xi_{j_2})^{-1} \eta_{i_3}m_{i_3,j_3}\xi_{j_3}\times \nonumber\\  
&\times \ldots \times \eta_{i_l}m_{i_l,j_l}\xi_{j_l} ( \eta_{i_1}m_{j_l, i_1}^T\xi_{j_l})^{-1} = 1 \times\ldots\times 1 = 1.
\end{align}
which means that the cycle $C$ is a good one. 
\blk

Let us now prove the second assertion.
Suppose that matrix $M$ has  a cycle 
\begin{align}
\{ m_{i_1,j_1}, m_{j_1,i_2}^T, m_{i_2,j_2}, m_{j_2,i_3}^T, m_{i_3,j_3} \ldots m_{i_l,j_l}, m_{j_l, i_1}^T \}
\end{align}
 which is good, i.e. 
if we multiply the above numbers they will give 1, as in  \eqref{def:good-cycle}. 
We shall now consider multiplication of rows and columns with $\xi$'s and $\eta$'s, and will show that 
these numbers can be chosen in such a way, that all the matrix elements of the cycle in the resulting matrix $M'$ are equal to 1. To this end, note that 
these matrix elements look as follows 
\begin{align}
&& m'_{i_1,j_1}= \eta_{i_1}m_{i_1,j_1}\xi_{j_1}  \nonumber\\
&& {m'}^T_{j_1,i_2}=\eta_{i_2}m_{j_1,i_2}^T\xi_{j_1}
\nonumber \\
&& m'_{i_2,j_2}=\eta_{i_2}m_{i_2,j_2}\xi_{j_2}
\nonumber \\
&& {m'}^T_{j_2,i_3}=\eta_{i_3}m_{j_2,i_3}^T\xi_{j_2}
\nonumber\\
&& m'_{i_3,j_3}=\eta_{i_3}m_{i_3,j_3}\xi_{j_3}
\nonumber\\  
&&\cdots 
\nonumber\\
&& m'_{i_l,j_l}=\eta_{i_l}m_{i_l,j_l}\xi_{j_l} 
\nonumber \\
&& {m'}^T_{j_l,i_1}=\eta_{i_1}m_{j_l, i_1}^T\xi_{j_l}
\end{align}
We choose $\eta_{i_1}=1$ and 
$\xi_{j_1}=m_{i_1,j_1}^{-1}$. 
This gives $m'_{i_1,j_1}=1$. 
Next, we choose $\eta_{i_2}$ in such a way that ${m'}^T_{j_1,j_2}=1$. 
We can continue this way to make all $m'$ elements of the cycle equal to one, apart from the last one. Since all indices $i_1,\ldots i_l$  are distinct (and similarly $j_1, \ldots j_l$) the above procedure ensures that $\eta_i$'s and $\xi_j$'s are well defined. Indeed, we have always freedom (except of the last element)  to choose either some $\xi$ or $\eta$ that was not fixed before. 
For the last element, we do not have freedom 
because, $\eta_{i_1}$ and $\xi_{j_l}$ that appear there, have been already fixed. 
But since the cycle was good
(i.e. elements of cycle in $M$ satisfy 
\eqref{eq:good-cycles-matrix}), 
then
according to \eqref{eq:proofgood} the 
elements in corresponding cycle of $M'$
also satisfy  \eqref{eq:good-cycles-matrix}. 
Therefore, since all those matrix elements apart from the last one are equal to $1$, then the last one  must be also equal to $1$. 
Thus for our choice of $\xi$'s and $\eta$'s, the cycle in the matrix $M'$ 
consists of $1$'s. Then by definition  of $H$ the graph $H(M')$ contains the corresponding cycle.
This completes the proof.

The procedure of assigning values to $\eta_i$ and $\xi_j$ can be visualised on the complete bipartite graph. Consider e.g. red cycle in graph from Fig. \ref{fig:cycle}. We make it directed (e.g. first edge let be directed to the right): 
$1_A \to 1_B \to 2_A \to 2_B \to 3_A \to 3_B \to 1_A$.     
The first assignment ($\eta_{1}=1$) is arbitrary. 
the next one,  namely $\xi_1=m_{1,1}^{-1}$ corresponds to head of the next arc $1_A\to 1_B$. The next assignment, i.e. that of $\eta_2$ which assures $m_{21}'=1$, corresponds  to the head of the arc $1_B\to 2_A$.
in this way, for each arc, except from the last one, we can 
change the value of matrix element corresponding to that arc to $1$, by assigning value of the parameter corresponding to the head of the arc. Only, for the last arc, we cannot do this, as the parameter associated $\eta_1$ with its head was already set to $1$ at the beginning. However it is automatically one, as the cycle was good. 
\end{proof}

As already said, the optimal graph is non-unique. However we shall later use its property, which does not depend on particular representative. Namely, if for a game matrix $M$ 
equation \eqref{eq:ks} does not hold for any cycle of length up to, say, $2s$, 
then $H_{opt}(M)$
does not have cycles of length $2s$, irrespectively of the choice of $H_{opt}$. 
Therefore, in such case 
the graph $H_{opt}$
cannot have too many edges, ergo -- due to Fact \ref{fact:cl-H} --
the classical value of the game is small.  

We shall actually mostly use the following relaxation of Proposition \ref{prop:goodcycle}, where we will demand that the equality \eqref{eq:ks} is valid for arbitrary equal subsets of entries (and not only for subsets that give rise to a cycle). 
\begin{corollary}
\label{cor:condition-for-s-cycles}
Let $M=(m_{ij})=(\omega^{k_{ij}})$ with $\omega=e^{2 \pi i/d}$ be a game matrix. Suppose further that for any two disjoint subsets $S_1$, $S_2$ of matrix entries with $|S_1|=|S_2|=s$ we have 
 \begin{eqnarray}
 \label{eq:corS}
 \sum_{m_{ij}\in S_1} k_{ij} \not =   \sum_{{m}_{ij}\in S_2} k_{ij} \mod d .
 \end{eqnarray}
Then  the graph $H_{opt}(M)$  does not have cycles of length $2s$. 
\end{corollary}

\begin{proof}
The proof will be by contradiction. 
    Suppose that there exists a cycle of length $2s$ in the graph. Then 
    by Prop. \ref{prop:goodcycle} 
    the corresponding cycle in the matrix is good, i.e it satisfies \eqref{eq:ks} with $l=s$. 
    We now choose $S_1$ to be the entries labelled by the labels of $k$'s from LHS of \eqref{eq:ks} (i.e. $(i_1,j_1) \ldots (i_s,j_s)$) and $S_2$ to be the entries  labelled by $(j_1,i_2) \ldots (j_{s-1},i_s),(j_s,i_1)$. For those two sets, 
    Eq. \eqref{eq:corS} does not hold, hence we obtain the contradiction. 
\blk 
\end{proof}

Using this corollary we shall bound classical value from above in subsequent subsections. 

\smallskip Finally  we shall provide a result that will later allow to bound the classical value from below. 

\begin{proposition}
\label{prop:connected}
For arbitrary game matrix $M$ there is an equivalent matrix $M'$, such that $H(M')$ contains $H(M)$ 
and is connected. In particular each optimal graph  $H_{opt}(M)$ is connected. 
\end{proposition}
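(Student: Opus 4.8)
The plan is to realize $M'$ using only multiplications of rows and columns by $d$-th roots of unity (operations of type (i) in Definition~\ref{def:equiv}), since operations of type (ii)--(iii) would relabel the vertices while ``containing $H(M)$'' requires keeping the vertex set fixed; thus $M'=\mathrm{diag}(\eta_i)\,M\,\mathrm{diag}(\xi_j)$. The first observation is that such a transformation keeps every edge of $H(M)$ exactly when the multipliers are locally constant along $H(M)$: an edge $m_{ij}=1$ survives iff $\eta_i\xi_j=1$, so assigning the value $\eta_i$ to each row vertex $i_A$ and $\xi_j^{-1}$ to each column vertex $j_B$, adjacent vertices of $H(M)$ must get the same value, forcing the assignment to be constant on each connected component $C$ of $H(M)$; conversely, any choice of one free root of unity $c_C$ per component $C$ (putting $\eta_i=c_C$ for $i_A\in C$ and $\xi_j=c_C^{-1}$ for $j_B\in C$) preserves $H(M)$. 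Within a single component the corresponding block of $M'$ is unchanged, so new $1$-entries can appear only on positions $(i,j)$ joining distinct components $C\ni i_A$, $C'\ni j_B$, and such an entry becomes $1$ iff $c_{C'}=m_{ij}c_C$.

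Next I would form the component graph $\mathcal G$: its vertices are the connected components of $H(M)$, and two of them are joined whenever some matrix position $(i,j)$ has $i_A$ in one and $j_B$ in the other. The key step is to show $\mathcal G$ is connected. For this, note that for every row $i$ and column $j$ the position $(i,j)$ witnesses that the components of $i_A$ and $j_B$ are equal or $\mathcal G$-adjacent; fixing the component $K$ containing the vertex $1_B$, the positions $(i,1)$ show that every component containing a row vertex is at $\mathcal G$-distance $\le 1$ from $K$, and then, using any position $(i,j)$, every component containing a column vertex is at distance $\le 2$ from $K$. Since each component contains at least one row or column vertex, $\mathcal G$ is connected.

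Then I would pick a spanning tree $T$ of $\mathcal G$, root it at an arbitrary component $C_0$, set $c_{C_0}=1$, and assign the remaining constants by traversing $T$ from the root: when a child component $C'$ is reached from its already-assigned parent $C$ through a tree edge coming from a position $(i,j)$ with $i_A\in C$, $j_B\in C'$, set $c_{C'}:=m_{ij}c_C$ (and $c_{C'}:=m_{ij}^{-1}c_C$ in the symmetric case $i_A\in C'$, $j_B\in C$). Each $c_{C'}$ is a $d$-th root of unity and is assigned exactly once, so $M'$ is a well-defined game matrix equivalent to $M$; it satisfies $H(M)\subseteq H(M')$ because the within-component blocks are untouched, and $H(M')$ is connected because every edge of the spanning tree $T$ has been turned into a genuine edge of $H(M')$. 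Finally, for the ``in particular'' clause: if $H_{opt}(M)=H(M_{opt})$ were disconnected, running this construction on $M_{opt}$ would produce an equivalent matrix with strictly more entries equal to $1$ (at least the $|V(\mathcal G)|-1\ge 1$ tree edges are gained and, since all cross-component positions of $M_{opt}$ are already $\ne 1$, none are lost), contradicting the maximality in the definition of $M_{opt}$.

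The step I expect to be the main obstacle is the connectedness of $\mathcal G$ in the second paragraph -- the rest is essentially bookkeeping once one sees that preserving $H(M)$ pins the multipliers down to one free root of unity per component. The only mild subtlety is components that are single isolated row or column vertices, but the distance argument treats these uniformly (such a component is still joined in $\mathcal G$ to any component supplying a partner of the opposite type).
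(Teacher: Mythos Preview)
Your proof is correct. The paper's argument is iterative rather than one-shot: it takes any splitting of $H(M)$ into two nonempty edge-disjoint parts $H_1,H_2$ with vertex parts $A_1\cup B_1$ and $A_2\cup B_2$, picks a single cross position $(i_0,j_0)$ with $i_0\in A_1$, $j_0\in B_2$ (or the symmetric choice), sets $\zeta=m_{i_0 j_0}$, multiplies all rows in $A_2$ by $\zeta$ and all columns in $B_2$ by $\zeta^{-1}$, and observes that this creates the new edge $(i_0,j_0)$ while leaving every entry indexed by $A_1\times B_1$ or $A_2\times B_2$ (hence every existing edge) untouched; then it repeats until connected.

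So the underlying mechanism is the same as yours---multipliers that are constant on components preserve $H(M)$, and cross-component entries can be forced to $1$ by adjusting those constants---but the paper discovers the right constants one merge at a time, whereas you first characterise \emph{all} $H(M)$-preserving transformations (one free root of unity per component), build the component graph $\mathcal G$, prove it connected, and then solve for all constants simultaneously along a spanning tree. Your route gives a bit more: it makes explicit exactly how much freedom remains once $H(M)$ is to be preserved, and it yields $M'$ in a single pass. The paper's route avoids the auxiliary graph $\mathcal G$ and its connectedness lemma, at the cost of an induction on the number of components. Incidentally, your ``main obstacle'' (connectedness of $\mathcal G$) is essentially automatic and could be shortened: for any two components $C,C'$, pick a row vertex in one (or use $1_A$) and a column vertex in the other (or use $1_B$); the full matrix entry at that position gives a $\mathcal G$-edge, so $\mathcal G$ has diameter at most $2$.
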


\begin{proof}
Suppose $H=H(M)$ can be decomposed into two nonempty disjoint subgraphs $H_1,H_2$. 
Of course, each of these subgraphs must be bipartite, i.e.,  $V(H_i)= A_i\cup B_i$ with 
$A_1\cup A_2 = A$ and $B_1\cup B_2 = B$, where $A,B$ are the parts of $H$. 
Assume that $i_0\in A_1$ and $j_0\in B_2$. 
The decomposition property means in particular that there is no edge between a vertex in $A_1$ and a vertex in $B_2$, so $(i_0,j_0)$ is not an edge of $H$. 

\cred We now define the sequences of multipliers  $(\eta_i)$ and  $(\xi_j)$ that will be used to obtain the matrix $M'$ from 
$M=\big(m_{ij}\big)$.  Let $\zeta=m_{i_0j_0}$ and set  
\begin{eqnarray}
\eta_i = \left\{ \begin{array}{l c l} 1 & \hbox{ if } & i\in A_1\\
\zeta  & \hbox{ if } & i\in A_2\end{array} \right. , \ \ 
\xi_j = \left\{ \begin{array}{l c l} 1 & \hbox{ if } & j\in B_1\\
\zeta^{-1}  & \hbox{ if } & j\in B_2\end{array} \right.. \nonumber \\
\end{eqnarray}
This construction assures that if $M'=\big(\eta_i m_{ij}\xi_j \big)$, then the $(i_0,j_0)$th entry of $M'$ is $1$, while at the same time leaving unchanged all the entries of $M$ that were equal to $1$. This means that $H(M)\subsetneq H(M')$ (a strict inclusion).

This shows immediately that the optimal graph $H_{opt}$ must be connected. On the other hand, if $H(M)$ is disconnected, then the described operation decreases the number of connected components 
of the graph by one. Repeating it a finite number of times we arrive at a matrix $M'$ 
such that $H(M')\supset H(M)$ and $H(M')$ has only one connected component, i.e. it is connected.
\blk
%In particular, this contradicts the optimality of $M_{opt}$.  

The construction above assumes tacitly that  sets $A_1$ and $B_2$ are both nonempty; if that is not the case, we proceed similarly for $A_2$ with $B_1$. 
(If one of the $A_i$'s  and  one of the $B_i$'s empty, then the graph has no edges and any choice $i_0\in A$, $j_0\in B$ will work.)  
%Finally, we note that since optimal graph $H_{opt}$ cannot be extended by definition, it must be connected.  
\end{proof} 
 \blk

\subsection{Games with maximal contradiction number}

Let us now consider in more detail games with maximal number of contradictions,  which due to Fact \ref{fact:min-clas-value} is given by $(n_A-1)(n_B-1)$. 
% Such games exist at least for $n_A=n_B$ and examples of 
Examples of such games of 
various level of sophistication will be presented in Section \ref{sec:construction}. 
To begin with, let us start with following definition.
\cred \begin{remark}
\textbf{Tree} is a type of graph that is connected and acyclic, meaning it has no cycles. More specifically, a tree satisfies the following properties:

\begin{itemize}
    \item \textbf{Connected}: There is a path between every pair of vertices.
    \item \textbf{Acyclic}: It contains no cycles, meaning there is no way to start at a vertex, follow edges, and return to the same vertex without retracing any edge.
\end{itemize}

\end{remark}
\blk

Now we make the following simple but enlightening observation.

\begin{fact}
\label{fact:max-contradiction-tree}
An $n_A \times n_B$ game matrix $M$ has the maximal number of contradictions, i.e., $\beta_C= (n_A-1)(n_B-1)$, if and only if 
the optimal graph $H_{opt}(M)$ is a tree.
\end{fact}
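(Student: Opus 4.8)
The statement to prove is Fact~\ref{fact:max-contradiction-tree}: an $n_A \times n_B$ game matrix $M$ satisfies $\beta_C = (n_A-1)(n_B-1)$ if and only if $H_{opt}(M)$ is a tree. The plan is to translate everything into edge counts of $H_{opt}$ via Fact~\ref{fact:cl-H} (or rather its generalization to $n_A \times n_B$ together with Fact~\ref{classical_val}), use the connectedness result of Proposition~\ref{prop:connected}, and then invoke the elementary fact that a connected bipartite graph on $n_A + n_B$ vertices is a tree if and only if it has exactly $n_A + n_B - 1$ edges, and is a forest (acyclic) if and only if it has at most that many.

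\textbf{Forward direction.} First I would note that $\beta_C(M)$ equals the number of entries of $M_{opt}$ that are different from $1$, which is $n_A n_B - m$ where $m = |E(H_{opt}(M))|$. So $\beta_C = (n_A-1)(n_B-1) = n_A n_B - n_A - n_B + 1$ is equivalent to $m = n_A + n_B - 1$. Now, by Proposition~\ref{prop:connected}, $H_{opt}(M)$ is connected; a connected graph on $N := n_A + n_B$ vertices has at least $N-1$ edges, with equality exactly when it is a tree. Hence $m = n_A + n_B - 1$ forces $H_{opt}(M)$ to be a tree. This gives the "only if" direction.

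\textbf{Reverse direction.} Conversely, if $H_{opt}(M)$ is a tree on $n_A + n_B$ vertices, it has exactly $n_A + n_B - 1$ edges, so $m = n_A + n_B - 1$ and thus $\beta_C(M) = n_A n_B - m = (n_A-1)(n_B-1)$. One subtlety worth a sentence: the tree must actually span all $n_A + n_B$ vertices — but this is automatic because $H_{opt}$ is connected by Proposition~\ref{prop:connected}, and a connected graph on a vertex set of size $N$ with no isolated vertices is spanning by definition of its vertex set; here the vertex set of $H(M_{opt})$ is by construction all of $\{1_A,\dots,n_A, 1_B,\dots, n_B\}$, so "tree" means "spanning tree". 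I should make sure to state the graph-theoretic lemma (connected on $N$ vertices $\Rightarrow$ at least $N-1$ edges, equality iff tree) explicitly, since that is the only nontrivial ingredient, and everything else is bookkeeping.

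\textbf{Main obstacle.} There is no serious obstacle; the only thing to be careful about is the identification $\beta_C(M) = n_A n_B - |E(H_{opt}(M))|$ and the fact that one may assume $M_{opt}$ is reached without row/column permutations or transpositions (Remark~\ref{rem:no-permutations-needed}), so that "number of $1$'s in $M_{opt}$" is genuinely maximized over the equivalence class and equals the edge count of $H_{opt}$. Once that is pinned down, the proof is one line in each direction plus the standard tree lemma.
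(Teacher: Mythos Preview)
Your proposal is correct and follows essentially the same route as the paper: convert $\beta_C=(n_A-1)(n_B-1)$ into the edge count $|E(H_{opt})|=n_A+n_B-1$, invoke Proposition~\ref{prop:connected} for connectedness, and use the elementary characterization of trees as connected graphs on $N$ vertices with exactly $N-1$ edges. The only minor difference is that the paper notes Proposition~\ref{prop:connected} is not needed for the reverse direction (since a tree is by definition connected and the vertex set of $H_{opt}$ is always the full $n_A+n_B$ vertices), whereas you invoke it there as well; this is harmless.
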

\begin{proof}
If $\beta_C= (n_A-1)(n_B-1)$, then, by definition of the optimal graph,  $H_{opt}$  has $n_A+n_B -1 $  edges. By Proposition \ref{prop:connected}, the optimal graph is connected. But arbitrary connected graph of $n$ vertices which has $n-1$ edges must be a tree. The same calculation gives the reverse argument (Proposition \ref{prop:connected} is not even needed). 
\end{proof}
\begin{remark}
An example of such optimal graph for a matrix $M$ with maximal contradictions 
is the following. By  Proposition \ref{prop1}, for any matrix $M$  we can find equivalent matrix  $M'$ to $M$ 
which has $1$'s in  the first row and in the first column, hence it has  at least  $n_A +n_B - 1 $ of  $1$'s. 
If $M$ has maximal possible contradiction number, there does not exist equivalent matrix with  more $1$'s. Thus 
the $H(M')$
is optimal for $M$, and it is a tree. 
\end{remark}

We can now formulate a corollary, which characterizes games possessing the maximal number of contradictions. 
This result can be also proven within the approach of labeled graphs of 
\cite{RS} and \cite{RRGHS} 
(cf. Lemma \ref{lcycles}).

\begin{corollary}
\label{cor:good-cycle}

A game matrix $M$ has the maximal number of contradictions 
given by $(n_A-1)(n_B-1)$
if and only if 
it does not contain any good cycle.
\end{corollary}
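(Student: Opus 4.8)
The claim is an ``if and only if'' characterizing matrices with maximal contradiction number $(n_A-1)(n_B-1)$ in terms of the absence of good cycles. I would prove it by chaining together Fact \ref{fact:max-contradiction-tree} (maximal $\beta_C$ $\iff$ $H_{opt}(M)$ is a tree) with Proposition \ref{prop:goodcycle} (the correspondence between good cycles in $M$ and cycles in an optimal graph). The strategy is thus to translate the combinatorial condition ``$H_{opt}$ is a tree'' into ``$H_{opt}$ is acyclic'' and then to transfer acyclicity across the dictionary provided by Proposition \ref{prop:goodcycle}.

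\medskip\noindent\textbf{Forward direction.} Suppose $M$ has the maximal number of contradictions. By Fact \ref{fact:max-contradiction-tree}, $H_{opt}(M)$ is a tree, hence it contains no cycle of any (even) length. Now suppose, for contradiction, that $M$ contains a good cycle $C$ given by row indices $S_A=(i_1,\ldots,i_l)$ and column indices $S_B=(j_1,\ldots,j_l)$. By the converse part of Proposition \ref{prop:goodcycle}, there is an equivalent matrix $M'$ such that $H(M')$ contains the corresponding cycle. Since $H_{opt}(M)$ is (by definition) an optimal, i.e.\ edge-maximal, graph among all $H(M'')$ for $M''$ equivalent to $M$, and since by Remark \ref{rem:no-permutations-needed} we may take the optimal matrix to differ from $M'$ only by further row/column multiplications, $H(M')$ is contained in some optimal graph $H_{opt}$ of $M$ --- more carefully, the edges of $H(M')$ corresponding to the cycle persist, up to the harmless relabeling by permutations/transposition, in $H_{opt}$. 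Either way, $H_{opt}(M)$ would contain a cycle, contradicting that it is a tree. Hence $M$ contains no good cycle.

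\medskip\noindent\textbf{Reverse direction.} Suppose $M$ contains no good cycle. We want to conclude that $H_{opt}(M)$ is a tree, and then invoke Fact \ref{fact:max-contradiction-tree}. By Proposition \ref{prop:connected}, $H_{opt}(M)$ is connected, so it suffices to show it is acyclic. Suppose $H_{opt}(M)$ contains a cycle, determined (after stripping off any permutations/transposition used to pass from $M$ to $M_{opt}$, using Remark \ref{rem:no-permutations-needed}) by some row set $S_A$ and column set $S_B$. By the forward part of Proposition \ref{prop:goodcycle}, the corresponding cycle $C$ in $M$ is a good cycle, contradicting the hypothesis. Therefore $H_{opt}(M)$ is connected and acyclic, i.e.\ a tree, and by Fact \ref{fact:max-contradiction-tree} $M$ has maximal contradiction number.

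\medskip\noindent\textbf{Main obstacle.} The routine steps are the two applications of Proposition \ref{prop:goodcycle}; the one point requiring care is the bookkeeping around non-uniqueness of $H_{opt}$ and the role of permutations and transposition. Specifically, in the forward direction one must argue that a cycle appearing in \emph{some} $H(M')$ with $M'$ equivalent to $M$ forces a cycle in \emph{every} optimal graph $H_{opt}(M)$ --- this follows because any $H_{opt}$ has at least $n_A+n_B-1$ edges and, being edge-maximal, must (as in the Remark following Fact \ref{fact:max-contradiction-tree}) contain as many $1$'s as possible; equivalently, one uses that the property ``$M$ has a good cycle of length $2l$'' is an equivalence invariant (cf.\ the discussion preceding Corollary \ref{cor:condition-for-s-cycles}), so the dichotomy tree / non-tree for $H_{opt}$ is well-defined independently of the chosen representative. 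I would state this invariance explicitly as the hinge of the argument rather than re-deriving it inline.
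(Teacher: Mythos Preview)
Your reverse direction is the same as the paper's. The forward direction has a genuine gap.

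You argue: maximal $\beta_C$ $\Rightarrow$ $H_{opt}$ is a tree; if $M$ had a good cycle, Proposition \ref{prop:goodcycle} gives an equivalent $M'$ with a cycle in $H(M')$, and then you assert that this cycle (or $H(M')$ itself) ``is contained in some optimal graph $H_{opt}$ of $M$''. That containment is not justified and is in general false: edge-maximality of $H_{opt}$ only says $|E(H_{opt})|\ge |E(H(M'))|$, not $E(H(M'))\subseteq E(H_{opt})$. Your fallback via ``the property `$M$ has a good cycle' is an equivalence invariant'' does not close the gap either: it only tells you $M_{opt}$ has a good cycle, not that the entries on that cycle in $M_{opt}$ equal $1$, i.e.\ not that the cycle lies in $H(M_{opt})$.

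The paper's proof of this direction avoids the issue by counting edges rather than tracking a specific cycle. From the good cycle one gets $M'$ with a cycle in $H(M')$; then Proposition \ref{prop:connected} produces an equivalent $M''$ with $H(M'')\supseteq H(M')$ connected and still containing that cycle. A connected graph on $n_A+n_B$ vertices containing a cycle has at least $n_A+n_B$ edges, so $H_{opt}$ has at least $n_A+n_B$ edges and cannot be a tree; hence $\beta_C<(n_A-1)(n_B-1)$. The missing ingredient in your argument is precisely this use of Proposition \ref{prop:connected} to pass from ``some $H(M')$ contains a cycle'' to a lower bound on the edge count of $H_{opt}$.
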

\begin{proof}
If $M$ has a good cycle, then
there is equivalent matrix $M'$ such that  $H(M')$ contains a cycle. By Proposition \ref{prop:connected} there is  equivalent matrix $M''$ such that $H(M'')$ contains that cycle, and is moreover connected.  
Thus the 
optimal graph has at least $n_A+n_B$ edges, so that the number of contradictions cannot be the
maximal one. 
Indeed, connected graphs of $n$ vertices which do not contain any cycles are by definition trees, and a tree with $n$ vertices contains exactly $n-1$ edges. 

Suppose now that $M$ does not have any good cycle. Then Proposition \ref{prop:goodcycle} asserts that 
$H_{opt}$ does not have any cycle, hence it is a tree and by Fact \ref{fact:max-contradiction-tree} the matrix has the maximal number of contradictions
\end{proof}

Finally, let us consider matrix $M$ that has $1$'s in the first row  and the first column, and 
suppose that it has the maximal number of contradictions. We then will argue, that all the other matrix elements has to be distinct from each other, and distinct from 1:

\begin{lem}
\label{lem:max-contr-distinct}
\label{lll}
If a  $n_A \times n_B$ game matrix of the form

\begin{equation}
\label{eq:ones}
M = \left(\begin{array}{cccccc}
1      & 1      & ... & 1\\ 
1      & m_{22} & ... & m_{2 n_B}\\
\vdots & \vdots &     & \vdots\\
1      &m_{n_A2}& ... & m_{n_An_B}
\end{array}\right)
\end{equation}
has  $(n_A-1)(n_B-1)$ contradictions then  all elements $m_{ij}$, for $2\leq i\leq n_A$ and $2\leq j\leq n_B$, are distinct and different from $1$.
\end{lem}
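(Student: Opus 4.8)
The plan is to invoke Corollary \ref{cor:good-cycle}: since $M$ has the maximal number of contradictions, it contains no good cycle. I will show that the negation of the conclusion --- some interior entry equal to $1$, or two interior entries equal to each other --- forces the existence of a good cycle, a contradiction. In each case the good cycle I produce is short (length $4$ or $6$) and routes through the first row and the first column, all of whose entries equal $1$, so that the good-cycle equation \eqref{eq:ks} collapses to precisely the offending equality.

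First I would dispose of the case $m_{ij}=1$ for some $2\le i\le n_A$, $2\le j\le n_B$. Take the $2$-cycle with $S_A=(i,1)$ and $S_B=(j,1)$, whose entries are $m_{ij}, m_{1j}, m_{11}, m_{i1}$. Since $m_{1j}=m_{11}=m_{i1}=1$, the condition \eqref{eq:ks} reads $k_{ij}=0 \bmod d$, which holds exactly because $m_{ij}=1$; hence this is a good cycle. (Alternatively, $H(M)$ would then contain the double star formed by the first row and column together with the extra edge $i_A j_B$, so $H_{opt}(M)$ would have more than $n_A+n_B-1$ edges, contradicting Fact \ref{fact:max-contradiction-tree}.)

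Next, suppose $m_{ij}=m_{i'j'}$ for two distinct pairs of interior indices, and split into cases. If $i=i'$ (so $j\ne j'$), I use the $4$-cycle with $S_A=(i,1)$, $S_B=(j,j')$; using $m_{1j}=m_{1j'}=1$, condition \eqref{eq:ks} becomes $k_{ij}=k_{ij'}\bmod d$, which holds. The case $j=j'$ is symmetric, via $S_A=(i,i')$, $S_B=(j,1)$. In the remaining case $i\ne i'$ and $j\ne j'$ I take the $6$-cycle with $S_A=(i,1,i')$ and $S_B=(j,j',1)$; this is a valid cycle since $i,i'\ge 2$, $i\ne i'$ and $j,j'\ge 2$, $j\ne j'$ make the row-indices pairwise distinct and likewise the column-indices. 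Its entries are $m_{ij},m_{1j},m_{1j'},m_{i'j'},m_{i'1},m_{i1}$, and since $m_{1j}=m_{1j'}=m_{i'1}=m_{i1}=1$, condition \eqref{eq:ks} reduces to $k_{ij}=k_{i'j'}\bmod d$, which holds by assumption. In every case a good cycle has been produced, contradicting Corollary \ref{cor:good-cycle}, and the lemma follows.

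The only genuinely delicate point is this last case: one must notice that the naive $4$-cycle through $m_{ij},m_{ij'},m_{i'j},m_{i'j'}$ only gives $2k_{ij}=k_{ij'}+k_{i'j}\bmod d$, which does not encode the equality $m_{ij}=m_{i'j'}$, so one has to detour through the all-ones first row and column with a $6$-cycle to make the good-cycle identity collapse to the right relation. Everything else is routine bookkeeping with \eqref{eq:ks}.
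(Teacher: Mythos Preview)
Your proof is correct and follows essentially the same approach as the paper's main-text argument: both invoke Corollary~\ref{cor:good-cycle} and exhibit, in each case, a good cycle of length $4$ (entry equal to $1$, or two equal entries in the same row/column) or length $6$ (two equal entries in different rows and columns) that routes through the all-ones first row and column so that \eqref{eq:ks} collapses to the assumed equality. (Minor slip: in your first case the cycle with $S_A=(i,1)$, $S_B=(j,1)$ has $l=2$, hence is a $4$-cycle, not a ``$2$-cycle''; the paper also supplies an alternative Appendix proof via explicit row/column multiplications.)
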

\begin{proof}
A direct proof by matrix transformations is provided in the Appendix.  Here we provide an argument which utilizes the tools introduced in present section.

Suppose that there are two elements among $m_{ij}$,  equal to $v$.
First, consider the case when they are in the same row (if the are in the same column the 
reasoning is analogous). We then look at submatrix 
\begin{eqnarray}
\label{eq:22}
\left(
\begin{array}{ll}
1 & 1\\
v & v \\
\end{array}
\right)
\end{eqnarray}
%\mh{Odniesc sie do Prop 9}
We see that $ 1 \times v^{-1} \times  v \times 1=1$. 
Thus the matrix contains a good cycle, and by Corollary \ref{cor:good-cycle} it cannot have the maximal number of contradictions. The same argument applies when the identical entries are in the same column. 

% Thus, by Proposition \ref{prop:goodcycle} the graph $H_{opt}(M)$ contains a cycle. It follows that the number of contradictions in $M$ is less than $kl.$
%hence according to Proposition \ref{prop:connected}

Now assume that the two equal elements are in different rows and columns, i.e. $m_{i_1j_1}=m_{i_2j_2}=v.$ We can assume without loss of generality that $i_1<i_2$ and $j_1<j_2.$ In this case $M$ contains the following submatrix

\begin{eqnarray}
\left(
\begin{array}{lll}
1 & \textbf{1} & \textbf{1}\\
\textbf{1} & \textbf{v} & u\\
\textbf{1} & w & \textbf{v}\\
\end{array}
\right)
\end{eqnarray}
%\mh{Elementy wyroznione to cykle - napisac}
Since $ 1 \times v^{-1} \times 1 \times 1 \times v \times 1=1$, the bolded elements of the matrix form a good cycle and thus the contradiction number is again not the maximal one. 

To summarize, we have shown that 
a matrix of the form \eqref{eq:ones} with the maximal contradiction number
must have all the entries $m_{ij}$
with $2\leq i \leq n_A$
and $2\leq j\leq n_B$
distinct. 
Finally we can see 
that those entries 
cannot be equal to 1.  Indeed, 
if any such entry if equal to $1$ 
we have trivially a good cycle formed by 
\begin{eqnarray}
\label{eq:22-1s}
\left(
\begin{array}{ll}
1 & 1\\
1 & 1 \\
\end{array}
\right)
\end{eqnarray}
This can also be seen directly from the definition 
of the contradiction number (Def. \ref{contradiciotn_numb}) since 
taking  classical strategy $a(x)=b(y)=1$ 
leads to the number of contradictions that 
is strictly smaller than $(n_A-1)(n_B-1)$.
\end{proof}

\section{Construction of games with low classical values}
\label{sec:construction}
In this section we provide explicit constructions of games with low classical value. 
In particular, we show that an $n\times n$ matrix with maximum number of contradictions can be obtained with $d\leq 2^nn^{3n}$ outputs.

We also provide a method for constructing such matrices, as well as ones with a large, but not maximum number of contradictions and with the number of outputs that is polynomial in $n$.

\cred
In addition, in subsection A, we give a simple construction of matrix with maximal number of contradictions which 
avoids using most of the subtleties of girth method described in Sec. \ref{sec:girth-method}. However the number of outputs has worse scaling  $d\sim 2^{n^2}$. 
\blk

\subsection{Warmup: a simple construction of games with minimal possible classical values}
\label{sub:max}
%2^i

In this section we construct a family of games with the minimum classical values in the form of $n_A\times n_B$ matrices with $d=2^{(n_A-1)(n_B-1)-1}$ outputs. 
According to Fact \ref{fact:min-clas-value} such games have classical value
\begin{align}
p_{Cl}=\frac{n_A+n_B-1}{n_A n_B}.
\end{align}
We also show that it is possible to achieve the maximum number of contradictions with a smaller number $d$ of outputs and attempt to find a lower bound on the necessary number of outputs.

It follows from Lemma \ref{lll} that the maximum number of contradictions in an $n_A\times n_B$ matrix cannot be achieved with fewer than $(n_A-1)(n_B-1)+1$ outputs. However, this is not a sufficient condition and, in general, a larger number of outputs is necessary.

We will now sketch a ``brute force'' way to construct a matrix with maximal number of contradictions, using
Corollary \ref{cor:good-cycle}.  
% The lemma, and the discussion in sec. \ref{sub:label-versus-H} implies 
It says 
that in order to maximize the number of contradictions one must ensure 
that
% that the graph $H_{opt}$ contains no cycles, i.e., 
the matrix $M$ contains no good cycles. 
Let us note that according to Definition \ref{def:good-cycle} a good cycle in 
matrix $M$   
can be equivalently characterized as follows:

\begin{enumerate}
\item[(i)] $X = X_1 \cup X_2,$ the two subsets are disjoint and $\left|X_1\right| = \left|X_2\right| \geq 2$;
\item[(ii)] Each column contains either no elements from $X$ or exactly one element from $X_1$ and one from $X_2;$
\item[(iii)] Each row contains either no elements from $X$ or exactly one element from $X_1$ and one from $X_2;$
\item[(iv)] $\sum\limits_{m_{ij} \in X_1} k_{ij} \ { =} \sum\limits_{ m_{ij} \in X_2} k_{ij} \mod d$.
\end{enumerate}
(regarding  item (iv), recall that $m_{ij}=\omega^{k_{ij}}$). 
 It follows that if condition (iv)  fails for every pair of sets $X_1,X_2$ verifying conditions (i)-(iii) (i.e., every cycle is not a good one), then the contradiction number is maximal.

\smallskip{\it Explicit construction of the game.}
The above scheme allows us to apply this approach to matrices of the form \eqref{eq:ones}. Our strategy will be the following: we shall construct a matrix $M$ such that the condition (iv) 
is not satisfied for arbitrary sets $X_1,X_2$ satisfying (i).

Note that  the condition (iv) fails if we take $k_{ij}$ from a sequence 
%
%\begin{center}
\begin{equation}
\label{k_i}
k_l=2^{l}, \ l=0,1,2,3,\ldots
\end{equation}

This follows from the uniqueness of a representation of an integer in base $2$ and the fact that 
the $0$'s in the sums in the condition (iv) corresponding to the $1$'s from the first row and the first column in  \eqref{eq:ones} can be ignored, because by (ii) and (iii), the set $X$ can contain at most $3$ such elements. 
Essentially the same argument works for any $(k_l)$ such that $k_0 \geq 1$ and $k_l\geq2k_{l-1}$ 
for all $l\geq 1$.

The above scheme allows us to construct $n_A\times n_B$ matrices with the maximal number of  contradictions with 
%
%\begin{center}
\begin{equation}
d=2^{(n_A-1)(n_B-1)+1}
\end{equation}
 outputs. 
%  In particular, for $n_A=n_B=n$, we can construct such a matrix with $d=2^{n^2+1}$ outputs. 

 {\it Example.}
Here is game matrix  with $n_A=4,n_B=3$ inputs
containing  6 contradictions based on the above construction. 
%\cred $\to$ Adam wpisac macierz gry - pierwszy wiersz i kolumna jedynki; wpisac ponizej wartosc klasyczna. \blk
\begin{equation}
    M=\left(\begin{array}{cccc}
1 & 1 & 1 & 1\\
1 & k_{1} & k_{2} & k_{3}\\
1 & k_{4} & k_{5} & k_{6}
\end{array}\right)
\end{equation}
The game has   $2^{7}$ outputs. 
The classical value of this game is the minimal over all $4\times3$ games, i.e.    $\beta_C(M)=6$ and  $p_{Cl}(M)=\frac{1}{2}.$

%Notice that this number is much smaller than the bound given by equation (\ref{dbound2}).

\subsection{Controlling the contradiction number by the girth of $H_{opt}$}
\label{sub:controlled}

Here we provide a different approach to controlling the number of contradictions within a matrix. In order to ensure a large number of contradictions we will control the length of the cycles permitted in the graph $H_{opt}$. If we make sure that our $k_{ij}$'s are chosen in such a way that any two sums of the same lengths less than or equal to $s$ 
% that  their distinct sums (modulo $d$) of any lengths less than or equal to $s$ 
are not equal to one another, we will guarantee that there are no cycles of length less than or equal to $2s$ in $H_{opt}$, i.e. the girth of $H_{opt}$ is more than $2s$.  Since a graph with no short cycles can not have too many edges (a more precise result to that effect is stated later in this section),
and the edges of $H_{opt}$ correspond to  $1$'s in an optimal matrix,
we can use this scheme to obtain a bound on the contradiction number of $M$.

% This would mean that the number of $1$'s cannot be too large.

We will now show how to find such $k_{ij}$'s.  
Using Rudin's method from \cite{Rudin} we will construct a set $A$ of integers  which satisfies the so-called {\it $s$-sum property}. Namely, the  sum  of any $t$ elements of the set cannot be equal to the sum of any other $t$ elements for  any $t\leq s$. Then we choose the number $d$ of outputs large enough that none of these sums are greater then $d-1$, thus ensuring that the sums modulo $d$ are also distinct. We can then construct a matrix with a large number of contradictions with elements in the form $\omega^{k_{ij}}$ with $k_{ij}$'s taken from the set $A$.

\medskip

{\it Construction of the set A.} 

Let us recall the method (item 4.7 in \cite{Rudin}, page 219-220) that allows us to construct the set $A$ with number of elements equal to a prime number $p$ 
which satisfies  the {\it $s$-sum property}. 
% ( the set of elements $k_{ij}$ that  their sums  of some length $s$ cannot be equal to each other).
To this end, consider finite sets $A=A(s,p)$,  with $s=2,3,4,\ldots$
where $s<p$. We now define $\lambda(k),$ for
any  positive integer $k$, by 
\begin{equation}
\lambda(k)\equiv k\:\left(\text{mod}\,p\right),\quad0\leq\lambda(k)\leq p-1,
\end{equation}
and we let $A(s,p)$ be the set consisting of the $p$ integers
\begin{equation}
x_{k}=\sum_{i=0}^{s-1}\lambda\left(k^{s-i}\right)\left(sp\right)^{i}\quad\left(k=0,\ldots,p-1\right).
\end{equation}
The sum reflects the representation of $x_k$ 
in the number system whose base is $sp$ with the leading digit
being $\lambda(k)=k$. 
Hence, we see that $0=x_{0}<x_{1}<\ldots<x_{p-1}$ and
\begin{equation}
x_{p-1}\leq\left(p-1\right)\sum_{i=0}^{s-1}\left(sp\right)^{i}<s^{s-1}p^{s}.
\end{equation}
We now  argue that the so constructed 
set $A(s,p)$ satisfies the $s$-sum property.
Namely, suppose that 
\begin{equation}
%\label{eq:ys}
y=x_{k_{1}}+x_{k_{2}}+\ldots+x_{k_{s}},\label{eq:sumx_k}
\end{equation}
and express $y$ in base $sp$:
\begin{equation}
y=\sum_{i=0}^{s-1}y_{i}\left(sp\right)^{i}\quad\left(0\leq y_{i}<sp\right).
\end{equation}
Since $\lambda(k)<p$ and hence the sum 
$\sum_{j=1}^{s}\lambda\left(k_{j}^{s-i}\right)  < sp$,  we have
\begin{equation}
y_{i} = \sum_{j=1}^{s}\lambda\left(k_{j}^{s-i}\right)\qquad\left(i=0,\ldots,s-1\right),
\end{equation}
so that
\begin{equation}
\sum_{j=1}^{s}k_{j}^{l}= y_{s-l}\mod p\quad\left(l=1,\ldots,s\right).
\end{equation}

Thus the digits $y_{i}$ of $y$ determine the first $s$ power sums
of $k_{1},\ldots,k_{s}$ in the cyclic field of $p$ elements; hence
they determine the elementary symmetric function of $k_{1},\ldots,k_{s}$
in this field, and this in turn implies that $k_{1},\ldots,k_{s}$ are
determined by (\ref{eq:sumx_k}), up to permutations.  This proves that 
the representation for any $y$ in the form (\ref{eq:sumx_k}) is unique, up to permutations of the $x_{k_{i}}.$
The same  argument works for  any lengths $t$  smaller than $s$.

\smallskip 
Having constructed the family of sets $A(s,p)$ 
we are now ready to formulate 
the main result of this section, contained in the following proposition.
\begin{proposition}
\label{prop:controlled}
For any $n\geq 2$ and $s\leq n$ there exists a game associated with an $n\times n$ matrix with $d\leq (2sn^2)^s$ outputs such that $p_{Cl}\leq 2n^{-1+\frac{1}{s}}.$

Moreover, for any $n\geq 2$ there exists a game associated with  an $n\times n$ matrix with $d\leq 2^n n^{3n}$ outputs such that $p_{Cl}=\frac{2n-1}{n^2}$.
\end{proposition}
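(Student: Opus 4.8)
The plan is to prove the two assertions separately, using the set $A(s,p)$ constructed above in each case.

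For the first assertion, I would start by choosing a prime $p$ with $n \le p < 2n$ (possible by Bertrand's postulate), and form the set $A = A(s,p)$, which has $p \ge n$ elements and satisfies the $s$-sum property. Pick any $n$ of these elements, say $x_{k_1} < \cdots < x_{k_n}$, and arrange them as the entries of the lower-right $(n-1)\times(n-1)$ block of a matrix of the form \eqref{eq:ones}; take $d$ larger than the largest possible sum of $s$ of these elements, so that the $s$-sum property holds modulo $d$ as well. Concretely, since $x_{p-1} < s^{s-1}p^s < s^{s-1}(2n)^s$, the sum of $s$ such elements is below $s \cdot s^{s-1}(2n)^s = (2sn^2)^s / \text{(small factor)}$, so one can take $d \le (2sn^2)^s$. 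Next I would invoke Corollary \ref{cor:condition-for-s-cycles}: the $s$-sum property (mod $d$) guarantees that $H_{opt}$ has no cycles of length $2t$ for any $t \le s$, i.e.\ the girth of $H_{opt}$ exceeds $2s$. Then I would quote the Bondy--Simonovits bound (the Erd\H{o}s even-cycle theorem, reference \cite{Bondy-girth}): a bipartite graph on $N$ vertices with girth more than $2s$ has $O(N^{1+1/s})$ edges. Applying this with $N = 2n$ gives that $H_{opt}$ has at most $C (2n)^{1+1/s}$ edges, and by Fact \ref{fact:cl-H}, $p_{Cl} = m/n^2 \le C(2n)^{1+1/s}/n^2$. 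The remaining work is purely arithmetic: verifying that the constant can be absorbed so that this is at most $2 n^{-1+1/s}$. This last step — pinning down the exact form of the Bondy--Simonovits constant so that the clean bound $2n^{-1+1/s}$ comes out — is the main obstacle, and I would need to either cite a version of the even-cycle theorem with an explicit small constant or give a direct elementary counting argument for bipartite graphs (a Kővári--Sós--Turán-type double-counting of paths of length $s$) to get a workable constant.

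For the second assertion, the strategy is the "brute force" construction already sketched in Section \ref{sub:max}: take $s = (n-1)^2$ effectively, i.e.\ ensure \emph{no} good cycle at all. I would use the sequence $k_l = 2^l$ (or equivalently the fact that base-$2$ representations make all subset sums distinct), placing $(n-1)^2$ such values as the entries $m_{ij}$, $2 \le i,j \le n$, of a matrix of the form \eqref{eq:ones}. By Corollary \ref{cor:good-cycle}, the absence of good cycles is equivalent to the contradiction number being maximal, $\beta_C = (n-1)^2$, hence $p_{Cl} = 1 - (n-1)^2/n^2 = (2n-1)/n^2$. The only point needing care is the bound $d \le 2^n n^{3n}$ on the number of outputs: using $k_l = 2^l$ directly would require $d$ of order $2^{(n-1)^2}$, which is too large, so instead I would use the set $A(s,p)$ with $s = (n-1)^2$ — wait, that needs $s < p$, forcing $p > (n-1)^2$ and giving $d$ of order $(sp)^s$, still too large. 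The correct choice is to take $s = n$ and $p$ a prime with $n < p \le 2n$, but iterate/nest the construction, or more simply to observe that what is actually needed is only the $s$-sum property for $s$ up to $n$ applied to the structural constraints (ii)--(iii), which limit any good cycle to using at most one entry per row and column, hence at most $n-1$ rows — so cycles have length at most $2(n-1)$ and the $(n-1)$-sum property suffices. Then $d$ can be taken below $((n-1)\cdot p)^{n-1} \cdot p \le (2n^2)^n < 2^n n^{3n}$. Reconciling these counting estimates to land exactly at $2^n n^{3n}$ is a routine but fiddly calculation, and is the one place where I would double-check the exponents against the bound $x_{p-1} < s^{s-1}p^s$.
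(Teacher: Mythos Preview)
Your plan has the right architecture---Rudin's set $A(s,p)$ plus the girth bound---but there are two genuine gaps.

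\textbf{First assertion: the range of $p$.} You take $p$ with $n\le p<2n$ and then ``pick any $n$ of these elements'' to fill the matrix. But an $n\times n$ matrix has $n^2$ entries (or $(n-1)^2$ if you use the form \eqref{eq:ones}), so you need that many distinct elements of $A(s,p)$; $n$ elements are far too few, and repeating entries would destroy the $s$-sum property (equal singletons). The paper takes $p$ a prime in $[n^2,2n^2]$, so that $|A(s,p)|=p\ge n^2$. With this correction the output bound falls out directly: $d=(sp)^s\le (s\cdot 2n^2)^s=(2sn^2)^s$. Your arithmetic in this paragraph (``$s\cdot s^{s-1}(2n)^s=(2sn^2)^s/(\text{small factor})$'') is in fact off by a factor of $n^s$, which is a symptom of the same slip.

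\textbf{First assertion: the constant.} You are right to flag the Bondy--Simonovits constant as the obstacle; that route does not yield the clean $2n^{-1+1/s}$. The paper does not use Bondy--Simonovits for the final estimate but instead invokes Hoory's Moore-type bound for bipartite graphs of girth $>2s$: with $n_A=n_B=n$ it gives $n\ge\sum_{i=0}^{s}(m/n-1)^i$, and in particular $(m/n-1)^s\le n$, hence $m\le n+n^{1+1/s}<2n^{1+1/s}$ and $p_{Cl}=m/n^2<2n^{-1+1/s}$. No K\H{o}v\'ari--S\'os--Tur\'an counting is needed.

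\textbf{Second assertion.} Your eventual idea (cycles use at most $n$ rows, so the $n$-sum property suffices) is exactly right, but the paper executes it more simply: just set $s=n$ in the first part. Then $d\le(2n\cdot n^2)^n=2^n n^{3n}$, and since the $n$-sum property forbids all cycles of length $\le 2n$ in a $2n$-vertex graph, $H_{opt}$ has no cycles at all; being connected (Proposition \ref{prop:connected}) it is a tree with $2n-1$ edges, giving $p_{Cl}=(2n-1)/n^2$ exactly. There is no need for the $k_l=2^l$ detour or a separate estimate.
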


\begin{proof}
We aim to construct a class of games with controlled number of contradictions.  
Corollary \ref{cor:condition-for-s-cycles} tells us that, to this end, one should  find $n^2$ integers $k_{ij}$ satisfying the ``mod-$d$'' $s$-sum property, i.e., such that their sums of lengths $t\leq s$ are to be distinct modulo $d$. Thus 
 we will need to have 
a set $A$ which (i) has at least $n^2$ elements and (ii) satisfies the above mentioned ``mod-$d$'' $s$-sum property.  
We will achieve property (i) by choosing a suitable prime $p$ in Rudin's construction described above, while (ii)  will assured by choosing $d$ large enough. 
Namely,  let us begin with a set $A=A(s,p)$, where $s\geq 2$ and  $p>s$ is a prime. 
By the argument above, the set $A$ consists of $p$ numbers that satisfy the $s$-sum property.
The numbers are not larger than $s^{s-1}p^s-1$. 
Now, to have ``mod-$d$'' $s$-sum property, we will choose $d$ to be larger than any of the sums.  To this end, we take $d=s \times ( s^{s-1}p^s)=(sp)^s$. So we are done with (ii). 

To ensure (i), we note that according to the Bertrand-Chebyshev theorem there is always a prime number between $N$ and $2N$, for $N \geq 2$. 
Let $p$ be a prime between $n^2$ and $2n^2$, then $A(s,p)$ has more than $n^2$ elements and (i) is satisfied.

Note that since $p\leq 2 n^2$, our number of outputs satisfies 

	\begin{equation}
	\label{dbound1}
	d\leq (2sn^2)^s .
	\end{equation}
Having our set $A(s,p)$ satisfying (i) and (ii)
we now consider a game $M$ consisting  of elements in the form $\omega^{k_{ij}}$, where $k_{ij}\in A$ and $\omega=e^{i2\pi/d}$.

We shall now derive a bound on the classical value of the game constructed above using the girth method. To this end, we turn to the graph $H_{opt}$ 
corresponding to such a game. Due to the properties of the set $A$,
the graph has girth $g>2s$ i.e. it 
does not have cycles of length up to $2s$. Now, we need to estimate 
the number of edges of the graph $H_{opt}$, which in turn, according to \eqref{eq:cl-edges}, will determine the  classical value.

Erd\H os conjectured in 1964, and it was proved in 1974 in \cite{Bondy-girth},	that any $N$-vertex graph which  does not have a cycle of length less than or equal to $2s$ must have $m\leq O(N^{1+1/s})$ edges. 
Thus, we have a bound on the number $m$ of edges in $H_{opt}$ 
\begin{equation}
\label{eq:erdos}
m\lesssim (2n)^{1 + 1/s}
\end{equation}

A more precise result concerning graphs from  \cite{Hoory} is the following. For a bipartite graph with $m$ edges, $n_A + n_B$ vertices and girth $g>2s$  (hence $g\geq 2(s+1)$) we have 

\begin{equation}
n_A\geq  \sum_{i=0}^{s}\left(d_A-1\right)^{\left\lceil i/2\right\rceil}\left(d_B-1\right)^{\left\lfloor i/2\right\rfloor},
\end{equation}
\begin{equation}
n_B\geq  \sum_{i=0}^{s}\left(d_B-1\right)^{\left\lceil i/2\right\rceil}\left(d_A-1\right)^{\left\lfloor i/2\right\rfloor}, 
\end{equation}
where $d_A=\frac{m}{n_A}$ and $d_B=\frac{m}{n_B}$  are the {\sl average degrees} in the respective parts of the graph. 
For $n_A=n_B=n$ and $d_A=d_B=\frac{m}{n}$ this gives
\begin{equation} \label{eq:max-girth} 
n\geq  \sum_{i=0}^{s}\left(\frac{m}{n}-1\right)^i
\end{equation}

In particular, $\left(\frac{m}{n}-1\right)^s \leq n$, and solving this inequality for $m$ we obtain
\begin{equation} \label{eq:max-girth-simple} 
m\leq n+ n^{1+\frac{1}{s}} < 2n^{1+\frac{1}{s}},
\end{equation} 

\noindent which gives us a bound for  classical value 
$p_{Cl}\leq 2/n^{1-\frac{1}{s}}$, as needed.

Finally, if we set $s=n$ in equation \eqref{dbound1}, we obtain a bound on the number of outputs needed to maximize the number of contradictions.
%\begin{center}
\begin{equation}
\label{dbound2}
d\leq 2^n n^{3n}
\end{equation}
%\end{center}

In this case, i.e. for $s=n$,
we can actually provide exact expression for the classical value. Indeed,
recall that for our matrix constructed based on set $A$, the graph $H_{opt}$ does not have cycles of length up to $2s$. On the other hand the graph has  $2n$ vertices and is connected (by Prop. \ref{prop:connected}).
Thus for $s=n$ the graph  has no cycles, and therefore it is a tree, which has $m=2n-1$ edges. That gives us the exact classical value of $p_{Cl}=\frac{2n-1}{n^2}$.
\end{proof}

\medskip
Again, the smallest values of $m$ that are of interest are $m=2n-1$,  when the graph is a tree with $g=\infty$, and $m=2n$, when the graph is a cycle of length $g=2n$. (In both of these cases 
\eqref{eq:max-girth}  becomes an equality.)  Next, the permitted values of $g$ decrease quickly as $m$ increases. For example, if $m=3n$ (average degree $d_A=d_B=3$), \eqref{eq:max-girth} implies 
$g \leq 2 \log_2(n+1)$; this bound is (approximately) saturated by Ramanujan expander graphs \cite{LPS}, which can be exhibited explicitly.

For $s=3$ (i.e. if the graphs do not contain cycles of length $6$ or smaller), \eqref{eq:max-girth-simple} 
reduces to $m\leq 2n^{4/3}$ and so the classical winning probability is upper-bounded as follows 
%
%\begin{center}
\begin{equation}
p_{Cl} = \frac{m}{n^2} \leq \frac{2}{n^{2/3}}.  
\end{equation}

\subsection{Prospects towards unbounded quantum  violation.}
\label{subsec:biases}
For nonlocal games, the most important feature is how much the quantum value exceeds the classical one. As an indicator we can use the following ratio of biases: 
\begin{equation}
R=\frac{p_Q-p_{rand}}{p_{Cl}-p_{rand}}
\end{equation}
where $p_{rand}$ is a random strategy (i.e. Alice and Bob provide always a random output).
We say that there is an unbounded violation if $R$ grows to infinity for growing number of inputs and outputs.

As already mentioned, there are very few explicit examples of games which exhibit unbounded violation.  Let us now argue that our results give a hope to obtain a new class of unbounded violation with unique features.  Namely, in Ref. \cite{RAM} there is the following  upper bound on quantum winning probability for linear games:
\begin{equation} \label{quantum_bound} 
p_Q\leq \bar p_Q\equiv\frac1d\left(1+ \frac{1}{\sqrt{n_An_B}}\sum_{k=1}^{d-1}\|M_k\|\right)
\end{equation}
where $M_k$ is entrywise $k$-th power, i.e. $(M_k)_{ij}=(m_{ij})^k$, 
and $\|\cdot\|$ is the operator norm. We consider now the case $n_A=n_B=n$. 
Using the fact that operator norm is no smaller than maximal norms of the columns, which is $\sqrt{n}$ we obtain that $\bar p_Q\geq \frac{1}{\sqrt{n}}$. 
We shall now suppose that, for a given game, the upper bound \eqref{quantum_bound} on quantum winning probability is saturated. 
In such case the ratio of the biases $R$ 
(and the ratio of probabilities) would satisfy
%\begin{center}
\begin{equation} \label{ratio1}
R=\frac{\overline{p}_Q -\frac1d}{p_{Cl}-\frac1d}\geq  \frac{\overline{p}_Q}{p_{Cl}} \geq
\frac {n^{-\frac12}}{2n^{-\frac{2}{3}}}= \frac{n^{1/6}}{2} .
\end{equation}

We can therefore hope for $R = \Omega( n^{1/6})$ with the number of outputs $d\sim O(n^6)$ (as given by \eqref{dbound1}). 

Note that although $R$ is unbounded, both the quantum value and the classical value go to zero with growing number of inputs and outputs. In this regard our game is like Khot-Vishnoy game 
\cite{Khot2005,Kempe2008a}, 
and should be contrasted with e.g. Hidden Matching game of \cite{Buhrman} where the quantum value stays constant. This means that in our case, one would need to repeat the experiment approximately $\sqrt{n}$ times to demonstrate quantum advantage.

For a general $s$, one may hope to achieve (by the same argument)  
%\begin{center}
\begin{equation} \label{ratio2}
R = \Omega\big( n^{\frac12 - \frac1s}\big) .
\end{equation}
%\end{center}
%Notice that if $(G,K)$ is the labeled graph corresponding to $M$, then $H$ is the subgraph of $G$ such that $V(H)=V(G)$ and $e\in E(H)$ iff $K(e)=\id.$
%
%$For example, taking $s=4$, we get $R\geq n^{1/4}$, for the number of outputs $d\sim O(n^8)$.
In particular, if we take $s=\Theta(\log(n))$, the ratio $R$ is nearly  $n^{1/2}$. 

While, for small $n$,  it is possible to achieve the maximum contradiction number with a significantly smaller number of outputs, it is conceivable that 
the above scaling  (i.e., $\log d \sim n \log n$) is asymptotically optimal. 

As a matter of fact, if the bound of \cite{RAM} was saturated for our games (or even approximately saturated), one would obtain unbounded violation for a regime, which is not covered in existing explicit constructions.
Indeed, one can measure the efficiency of obtaining large violation  by considering $R$ as a function of $\io=\# \text{ inputs} \times \# \text{ outputs}$. 

Table \ref{tab:io} shows comparison  of the existing results on explicit constructions with the unbounded violation that we are aware of with those potentially offered by our results.
\begin{table*}
\begin{center}
	\begin{tabular}{|c|c|c|c|}
		\hline
		&  parallel repetition techniques& Khot-Vishnoi game & our (provided saturation  holds)\\ 
		\hline\hline
		$R(\io) \gtrsim $   & $\io^{10^{-5}/2}$ & $ \frac{\log(\io)}{[\log\log(\io)]^2} $ & $\io^{1/10}$\\
		\hline
	\end{tabular}
\end{center}
	\caption{\label{tab:io} Violation as a function of product of the number of inputs and outputs (denoted by ``io''). The best ratio for non-constructive games is 
	$R(\io)=\io^{1/4}/\log(\io)$ of \cite{Junge2011}. The ratio based on parallel repetition techniques is taken from \cite{Junge2010}.}
\end{table*}
Here we give value of  
Khot-Vishnoi game 
\cite{Khot2005,Kempe2008a} in the version from \cite{Buhrman}
(the construction of 
 \cite{Regev} is slightly worse according to criterion indicated in the previous paragraph). By the ``parallel repetition techniques'' 
 we mean the games that originate from playing a given game many times in parallel (see  \cite{Raz-parallel}, \cite{Rao}, \cite{Junge2010}, \cite{single-shot-Araujo}).  

We hope therefore that our results will stimulate the search for the quantum value of linear games, especially of the type constructed by  us. {\cred We wish to point out that while expecting 
the bound \eqref{quantum_bound} to be saturated/approximately saturated may be overly optimistic, the minoration $p_Q\geq \frac{1}{\sqrt{n}}$, which is needed for \eqref{ratio1} or \eqref{ratio2}, is quite conceivable. Indeed, we use a lower bound $\|M_k\| \geq \sqrt{n}$, which -- in our context -- is tight only if $M_k$ is unitary. On the other hand, 
the true value of $p_Q$ -- as analyzed in \cite{RAM} -- involves (vaguely) expressions of the form $\langle \alpha_k |M_k|\beta_k\rangle$ (for some appropriate unit vectors 
$\alpha_k,\beta_k$) rather than $\|M_k\|$, and when $M_k$ is unitary, there are plenty of choices of $\alpha_k,\beta_k$, for which $\langle \alpha_k |M_k|\beta_k\rangle = \|M_k\|$. While this argument is far from being a proof, it does show that there is some ``wiggle room.''}

\subsection{Lower bounds on the number of outputs}
%chromatic number
Both methods described above provide explicit constructions of matrices with maximum contradictions and reasonably low numbers of outputs. However, neither of them is optimal. 
It is, in fact, possible to achieve the maximum number of contradictions with a much fewer outputs. For example, the matrix

\begin{align}
\label{eq:3x3-beta-4}
\left(\begin{array}{ccc}
1 & 1        & 1\\ 
1 & \omega   & \omega^{3}\\
1 & \omega^4 & \omega^5
\end{array}\right)    
\end{align}

\vspace{2ex}

\noindent with $\textit{d = 7}$ outputs has $4$ contradictions, see Section \ref{3x3}, Table \ref{tab:3x3}. 
%\tcr{This example is really an upper bound, wasn't it supposed to go to the end of section \ref{sub:max}? }
We later show (see Proposition \ref{prop:chromatic}) that $\textit{7}$ is the minimum number of outputs necessary to achieve maximum  number of contradictions  in a $3\times 3$ matrix. The exact values of $d_{min}$ for larger matrix sizes remain unknown. We do however provide a lower bound on the number of outputs. To this end we construct a graph $G_n$ such that $d$ can be bounded from below in terms of its chromatic number (see \cite{BondyMurty}).

Every set $X$ within a matrix which corresponds to a cycle in $H_{opt}$ is defined by a pair of permutations $\pi_1,\pi_2 \in S_n$. The set $X$ consists of elements $x_{i\pi_j(i)}$ for $i\in [n], j\in\{1,2\}$. If $\pi_1(i)=\pi_2(i)$ for some $i\in [n]$, the element $x_{i\pi_j(i)}$ is removed from $X$.

It follows that if for each pair of permutations $\pi_1,\pi_2 \in S_n$ we have

\begin{center}
\begin{equation}
\label{permut}
\sum\limits_{j=0}^{n-1} k_{j,\pi_1(j)} \neq \sum\limits_{j=0}^{n-1} k_{j,\pi_2(j)} \mod d,
\end{equation}
\end{center}

\noindent then the graph contains no cycles without contradictions, and thus $\beta_C=(n-1)^2.$ 

However, not every such pairs of permutations defines exactly one cycle in $K_{n,n}$. For a set $\pi_1,\pi_2$ defining two or more disjoint cycles it is possible that (\ref{permut}) holds but no cycle without contradiction exists.
\begin{proposition}
\label{prop:chromatic}
The number $d$ of outputs necessary to achieve the maximum number of contradictions in an $n\times n$ matrix is at least the chromatic number of the graph $G_n$, in which the vertices are all permutations $\pi \in S_n$ and two vertices $\pi_1,\pi_2$ are adjacent if and only if they define exactly one cycle in $K_{n,n},$ i.e. $\pi=\pi_1\pi_2^{-1}$ is a cyclic permutation.

That in turn is bounded from below by the maximum of:
\begin{enumerate}
\item $\left|C\right|+1$, where $C\subset S_n$ is the largest set of cyclic permutations such that $\pi_i\pi_j^{-1}$ is a cycle for any $\pi_i, \pi_j\in C$.
\item $\frac{n!}{\left|J\right|},$ where $J$ is the largest set of cyclic permutations such that $\pi_i\pi_j^{-1}$ is not cyclic for any $\pi_i,\pi_j\in J.$
\end{enumerate}
\end{proposition}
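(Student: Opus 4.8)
The plan is to establish the claim in two stages: first the chromatic-number lower bound, then the two combinatorial lower bounds on that chromatic number. For the first stage, I would argue by contraposition. Suppose $d$ is strictly less than the chromatic number $\chi(G_n)$. Then any coloring of the vertices of $G_n$ by the $d$ residues $\{0,1,\ldots,d-1\}$ — which is exactly what assigning to each permutation $\pi$ the value $\sum_{j} k_{j,\pi(j)} \bmod d$ does — must fail to be proper, i.e., there exist adjacent $\pi_1,\pi_2$ with $\sum_j k_{j,\pi_1(j)} = \sum_j k_{j,\pi_2(j)} \bmod d$. By the definition of adjacency in $G_n$, $\pi_1$ and $\pi_2$ define exactly one cycle in $K_{n,n}$, so the corresponding set $X$ of matrix entries is a single cycle, and the equality of the two sums is precisely condition (iv), making it a good cycle. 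By Corollary \ref{cor:good-cycle}, the matrix then does not have the maximal contradiction number. Hence to achieve $\beta_C = (n_A-1)(n_B-1)$ one needs $d \geq \chi(G_n)$.

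For the second stage I would invoke two standard graph-theoretic lower bounds on the chromatic number. Item 1 is the clique bound: a set $C \subset S_n$ of permutations such that $\pi_i \pi_j^{-1}$ is cyclic for all distinct $\pi_i,\pi_j \in C$ is, by the adjacency definition, a clique in $G_n$, so $\chi(G_n) \geq |C|$. The ``$+1$'' should come from the fact that the identity (or rather any fixed base permutation) can be adjoined: more carefully, since adjacency depends only on $\pi_1\pi_2^{-1}$, the graph $G_n$ is a Cayley graph on $S_n$ with connection set the cyclic permutations, so cliques can be translated to contain the identity; if $C$ is a maximum such set not containing $\id$, then $C \cup \{\id\}$ or a translate still forms a clique of size $|C|+1$ provided $\pi_i\pi_i^{-1}=\id$ is never counted — I need to check the precise bookkeeping here so that the stated $|C|+1$ is correct rather than just $|C|$. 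Item 2 is the independence-number bound: if $J \subset S_n$ has $\pi_i\pi_j^{-1}$ never cyclic, then $J$ is an independent set in $G_n$, and since $G_n$ is vertex-transitive (again because it is a Cayley graph) we have $\chi(G_n) \geq |V(G_n)|/\alpha(G_n) = n!/|J|$ for $|J|$ the maximum independent set; for a general independent set $J$ this gives $\chi(G_n) \geq n!/|J|$ as well via fractional relaxation, so the stated bound follows.

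I would organize the write-up as: (a) a short lemma recording that $G_n$ is the Cayley graph $\mathrm{Cay}(S_n, \Sigma)$ where $\Sigma$ is the (symmetric, identity-free) set of cyclic permutations, so it is vertex-transitive; (b) the contrapositive argument above reducing ``$d$ outputs suffice'' to ``proper $d$-coloring of $G_n$'' via Corollary \ref{cor:good-cycle}; (c) the clique bound and the vertex-transitive bound $\chi \geq n!/\alpha$. The one subtlety flagged in the paragraph preceding the proposition — that a pair of permutations defining two or more disjoint cycles in $K_{n,n}$ can satisfy \eqref{permut} without yielding a good cycle — is exactly why $G_n$ is defined with the ``exactly one cycle'' adjacency condition and is why we only get a lower bound on $d$, not an exact value; I would make this explicit so the reader sees the coloring condition is necessary but not sufficient.

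The main obstacle I anticipate is not conceptual but a matter of getting the constants exactly right in item 1: pinning down why it is $|C|+1$ and not $|C|$, which requires being careful about whether the base permutation used to normalize the clique is itself in $C$ and whether the Cayley-graph translation preserves the clique property (it does, since $\pi_i\pi_j^{-1}$ is a left-invariant quantity). A secondary point to verify is that the vertex-transitivity bound $\chi(G) \geq |V(G)|/\alpha(G)$ is being applied correctly with $|J|$ possibly sub-maximal, but since $n!/|J|$ only increases as $|J|$ decreases, taking $J$ to be any independent set still gives a valid (if weaker) bound, so the stated ``largest'' is the sharpest form and the inequality is safe.
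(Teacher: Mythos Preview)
Your proposal is correct and matches the paper's proof almost exactly: the coloring argument via the sums $\sum_j k_{j,\pi(j)} \bmod d$, the clique bound for item~1, and the bound $\chi(G) \geq |V(G)|/\alpha(G)$ (which holds for every graph, so vertex-transitivity is not actually needed) for item~2. The ``$+1$'' you flagged is resolved exactly as you guessed: since each $\pi_i \in C$ is itself cyclic, $\pi_i \cdot \id^{-1} = \pi_i$ is cyclic, so $\id$ is adjacent to every element of $C$ and $C \cup \{\id\}$ is a clique of size $|C|+1$.
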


\begin{center}
\begin{figure}[H]
	\begin{picture}(300,140)
	\put(0,-10){\includegraphics[scale=0.30]{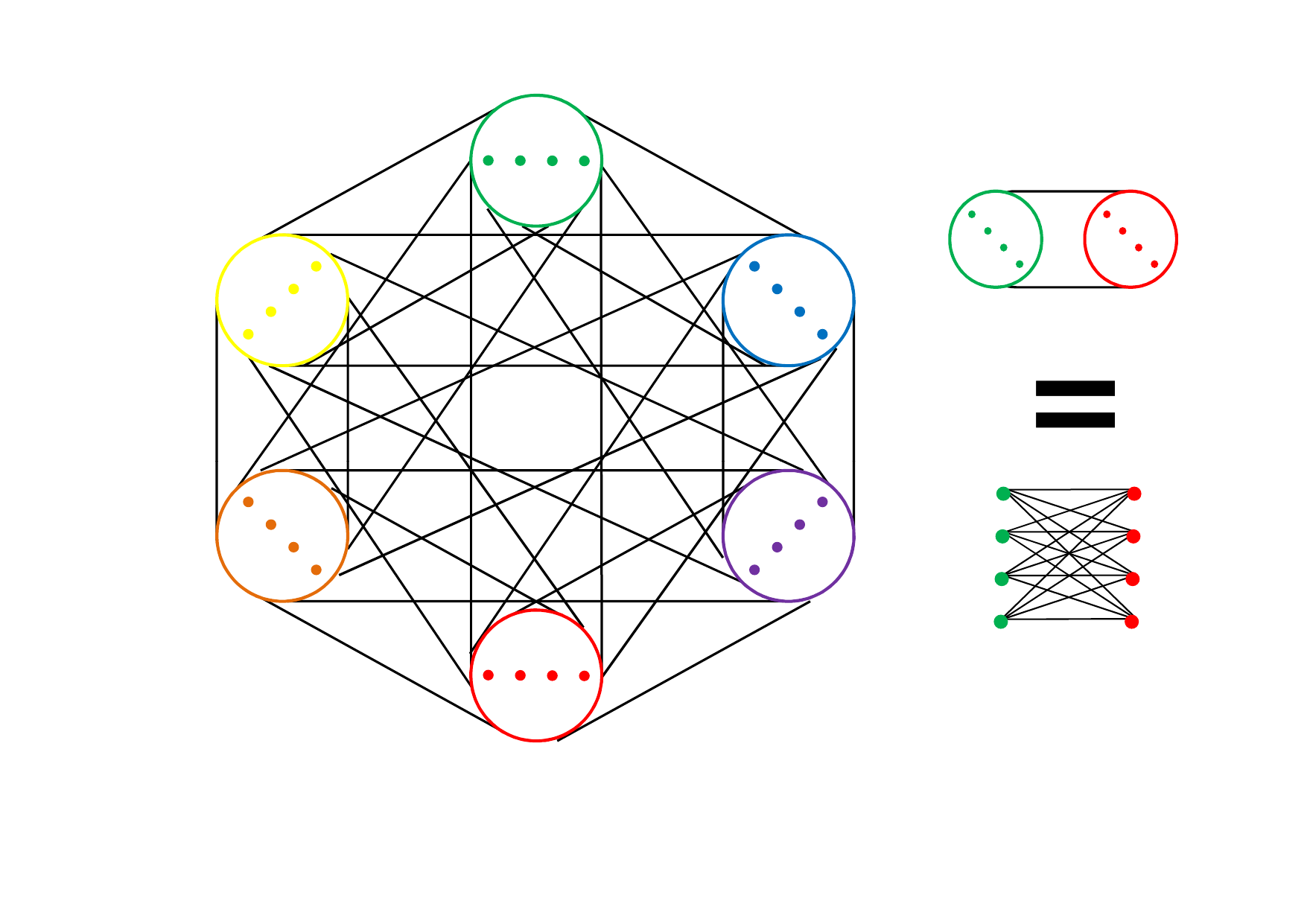}}
	\end{picture}
	\caption{The graph $G_n$ for $n=4$ is $K_{4,4,4,4,4,4}$. In the above figure each colored circle represents an independent set of four vertices. A connection between two such sets means that every vertex from one set is adjacent to every vertex from the other.
	\\$G_3$ is simply the complete graph $K_6.$ For $n\geq 5$ the structure becomes more complicated.}
	\label{fig:G4}
\end{figure}
\end{center}

The chromatic number of the graph $G_n$ in Proposition \ref{prop:chromatic} can also be bounded in terms of eigenvalues of its adjacency matrix $A(G)$. For example in \cite{Hoffman} it is shown that $\chi_{G}\geq 1+\frac{\lambda_1}{\left|\lambda_n\right|}$, where $\lambda_1$ and $\lambda_n$ are the largest and smallest eigenvalues of $A(G).$ Since $G_n$ is a regular graph, it can be shown that $\lambda_1$ is equal to the degree of the vertices, i.e. $\lambda_1=\sum_{i=2}^{n}\frac{n!}{(n-i)!i}$.

For $n=3$ the graph $G_n$ is the complete graph $K_6$. Its chromatic number is $6$. However, by brute force check using Table \ref{tab:3x3}, one finds that for $n=3$ and $d=6$ outputs, one gets at most 3 contradictions. \blk
In the case of the graph $G_4$ in Fig. \ref{fig:G4}, the chromatic number is still $6$. We believe that for larger values of $n$ the chromatic number of $G_n$ increases, providing a reasonable bound on $d$, although we do not yet know whether this bound can achieved.

\section{Connection to the labeled graph framework}
\label{sec:graphs}

In this section we compare the matrix-based approach to the framework described in \cite{RS} and \cite{RRGHS}, in which linear games 
 
are described in terms of graphs with permutations assigned to the edges. In many ways the two frameworks are similar. A labeled graph can be easily translated into a matrix and vice-versa. However, each approach has its own strengths and weaknesses. Some properties of these games are easier to prove in graph-theoretic terms, others are more evident in a matrix.  

Actually, the labeled graph approach is suitable for a slightly more general family of games,  called ``unique games.''  
A~unique game is a game in which the variables can take values from the set $[n]=\{0,...,n-1\}$ and the constraints are in the form $\pi_{xy}(a)=b,$ where $a$ and $b$ are values assigned to $x$ and $y,$ respectively and each  $\pi_{xy}$ is a permutation. Thus we shall recall the approach involving  labeled graphs in the more general context of unique games, and then will specialize to the case of linear games.

In subsection \ref{3x3} we provide, as an example, an systematic analysis of games with three inputs on each side, using a combination of graph and matrix tools.

\subsection{Unique games as labeled graphs}
\label{sec:unique}

A unique game can be described in terms of a labeled graph $(G,K)$, where $G$ is a graph with vertex set $V$ and edge set $E$, and the edge-labeling $K:E\mapsto S_n$ assigns a permutation of the set $[n]=\{0,1,...,n-1\}$ to each edge of $G$. Typically, $G$ is a directed graph, but it may be undirected if all permutations used are such that $\pi^{-1}=\pi$. In a bipartite graph, we are going to assume a left to right default orientation for all edges.

A classical strategy for a game defined in these terms corresponds to a vertex-assignment $f:V\mapsto [n].$ By a \textit{contradiction} in a given assignment $f$ we mean an edge $xy\in E$ such that $\pi_{xy}(f(x))\neq f(y).$ The \textit{contradiction number} $\beta_C$ is the minimum number of contradictions over all possible assignments. The classical value of a game represented by $(G,K)$ can be expressed as $p_{Cl}=1-\frac{\beta_C(G,K)}{\left|E(G)\right|}$ We consider two labeled graphs to be equivalent if one can be obtained from the other through the following operations: 
\begin{enumerate}
	\item Isomorphism between the underlying (unlabeled) graphs
	
	\item In a directed graph, replacing an edge $\overrightarrow{xy}$ labeled with $\pi$ with $\overrightarrow{yx}$ labeled with $\pi^{-1}.$
	
	\item Switching operations $s(x,\sigma)$ for any vertex $x\in V(G)$ and permutation $\sigma\in S_n,$ defined as follows. For every vertex $y\in N_{G}(x)$:
	
	\begin{enumerate}
		\item if $\overrightarrow{yx}\in E(G),$ we replace $K(\overrightarrow{yx})=\pi$ with $K'(\overrightarrow{yx})=\sigma\pi$,
		
		\item if $\overrightarrow{xy}\in E(G),$ we replace $K(\overrightarrow{xy})=\pi$ with $K'(\overrightarrow{xy})=\pi\sigma^{-1}$.
	\end{enumerate}
\end{enumerate}

Since the equivalence relation preserves the~contradiction number of a labeled graph, it is clear that games defined by equivalent labeled graphs have the same classical value. 
%If two games are defined by equivalent labeled graphs, then 
The same is true about their quantum values since graph isomorphisms and switches represent the corresponding relabeling of the inputs $x,y$ and outputs $a,b$, respectively. Thus, we only need to calculate the value for one labeled graph in each equivalence class. Hence unique games defined by equivalent labeled graphs are considered equivalent.

\subsection{Linear games:  labeled graphs versus matrices }

 The complete bipartite graph $K_{n,n}$ labeled with permutations $\sigma_{xy}(a)=a+k_{xy} \mod d$, defines the same type of game introduced in Section \ref{sec:est}, with the predicate given by equation \eqref{eq:V}.
Thus, we can also describe this game in terms of a matrix, in which all elements are complex roots of $1$.  Notice that in this case equivalent labeled graphs correspond to equivalent matrices. The matrix operations defined in Section \ref{sec:est} are essentially the same as the graph operations in Section \ref{sec:unique}. The first one is a switch on a vertex. The second one is an isomorphism changing the order of vertices on one side of the graph. The third one swaps the two sides. 

This connection allows us to examine the games in question from two different angles, applying both graph and matrix tools to study their properties. 

It follows from the results in \cite{RS} that for any bipartite graph $G$ all labelings $K:E\mapsto L_d'=\{\tilde{\sigma_i}\in S_d: \tilde{\sigma_i}(a)=i+a \mod d\}$ with no contradiction are equivalent to the labeling $K_0$, where $K_0(e)=\id$ for all $e\in E$. In terms of matrices this means that every  game with no contradiction is equivalent to a matrix in which all elements are equal to 1, i.e. a matrix of rank 1. Thus, the contradiction number of a game, as defined in terms of a labeled graph, is equal to the rigidity ${\rm Rig}(M,1)$ of the corresponding matrix.

It is easy to see that contradictions within a labeled graph arise from cycles. Furthermore, it is shown in \cite{RS} and \cite{RRGHS}  that a graph labeled with $L_d'$ has a contradiction if and only if it contains a bad cycle, i.e., a cycle with a contradiction. Every cycle in the labeled graph $(G,K)$ corresponds to a cycle in the matrix $M$. Cycles which do not give rise to contradictions are called good cycles and correspond directly to good cycles in a matrix, i.e., 
those satisfying Eq. \eqref{eq:good-cycles-matrix}. 
More precisely, a good cycle in $(G,K)$ corresponds to a cycle in the corresponding graph $H_{opt}$ and, as such, can be defined by a set of matrix elements satisfying the four conditions in Section \ref{sub:max}. A cycle containing a contradiction corresponds to a set satisfying conditions 1-3, but not 4. 
Accordingly,  in what follows we will consider cycles within the labeled graph.

However, we do not need to consider all cycles in $(G,K)$. It follows from the results of \cite{RS} that a complete bipartite graph in which no cycle of length $4$ contains a contradiction, cannot in fact contain any contradiction. Thus we only need to consider cycles of length $4$, which correspond to $2 \times 2$ submatrices of a matrix.

\subsection{Labeled graph representation versus the graphs $H$ and $H_{opt}$}
\label{sub:label-versus-H}

Notice that the graphs $H$ and $H_{opt}$  introduced in Section \ref{sec:girth-method}  can be alternatively defined as subgraphs of the graph $G$ in $(G,K)$. The vertex set of these graphs is the same as that of $G$. The edge set of $H$ is the set of edges of $G$ labeled with $\id$. The graph $H_{opt}$ is a subgraph whose edge set is a maximal set of edges containing no contradiction. It follows that every cycle in $H_{opt}$ corresponds to a good cycle in the labeled graph $(G,K)$, and vice versa.

Some properties of a game can be inferred from $H$ or $H_{opt}$ alone. 
 However, the labeled graph $(G,K)$ contains more information about the game and  
{\it prima facie} it 
can be used to show things that are not evident from $H_{opt}$. 

\smallskip 
%Here we shall provide lemma, that means more or less the same as 
As an illustration of the relationship between the two approaches, we will state a fact which is a counterpart of 
Corollary \ref{cor:good-cycle} and prove it within the labeled graph framework. %\blk
\begin{lem}
\label{lcycles}
A complete bipartite labeled graph $(K_{kl}, K:E\mapsto L_d')$ has the maximum number of contradictions iff every cycle in the graph contains a contradiction.

\end{lem}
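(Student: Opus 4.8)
The plan is to reduce Lemma \ref{lcycles} to the matrix-side statement of Corollary \ref{cor:good-cycle} via the dictionary established in Section \ref{sub:label-versus-H}, and then, since the point of this lemma is to illustrate the labeled-graph framework, to re-derive the two implications intrinsically using switching operations and the structure of cycles in $K_{kl}$. First I would fix the translation: the labeled graph $(K_{kl}, K)$ with $K:E\mapsto L_d'$ corresponds to the game matrix $M=(\omega^{k_{xy}})$, equivalence of labeled graphs matches equivalence of matrices, the contradiction number is the same on both sides, and the maximum possible value is $(k-1)(l-1)$ by Fact \ref{fact:min-clas-value}. Under this dictionary a good cycle in $(G,K)$ is exactly a good cycle in $M$, so the lemma is the labeled-graph restatement of Corollary \ref{cor:good-cycle}. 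For a self-contained argument I would work directly with switches.

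For the forward direction (maximum contradictions $\Rightarrow$ every cycle is bad), I would argue by contrapositive: suppose some cycle $C$ in $K_{kl}$ is good. A good cycle is, by definition, one whose accumulated label (the product of the edge permutations along $C$, with inverses on backward-traversed edges) is the identity; equivalently the corresponding $k_{ij}$'s satisfy Eq. \eqref{eq:ks}. The key sub-step is that one can perform a sequence of switching operations $s(x,\sigma)$ along the vertices of $C$ so as to relabel every edge of $C$ with $\id$ without introducing contradictions anywhere else on $C$ — this is precisely the labeled-graph mirror of the $\eta_i,\xi_j$ construction in the proof of Proposition \ref{prop:goodcycle}, where the ``last edge'' comes out with label $\id$ automatically because $C$ was good. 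After these switches the set of $\id$-labeled edges strictly contains a spanning structure with a cycle, so $H_{opt}$ has at least $k+l$ edges, hence $\beta_C \leq (k-1)(l-1)-1 < (k-1)(l-1)$, and the maximum is not attained.

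For the reverse direction (every cycle bad $\Rightarrow$ maximum contradictions), I would invoke the reduction, noted in Section \ref{sub:label-versus-H} and attributed to \cite{RS}, that in a complete bipartite labeled graph it suffices to consider $4$-cycles: if no $4$-cycle is good then no cycle at all is good, and one wants to conclude $\beta_C = (k-1)(l-1)$. Here I would pass to $H_{opt}$: by Proposition \ref{prop:goodcycle} (or directly by the correspondence between good cycles in $(G,K)$ and cycles in $H_{opt}$), if $(G,K)$ has no good cycle then $H_{opt}$ has no cycle, so $H_{opt}$ is a forest; since by Proposition \ref{prop:connected} it is connected, it is a tree on $k+l$ vertices, hence has $k+l-1$ edges, and therefore $\beta_C = kl-(k+l-1) = (k-1)(l-1)$, which is the maximum by Fact \ref{fact:min-clas-value}.

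The main obstacle, and the only part that needs genuine care rather than bookkeeping, is the switching step in the forward direction: one must verify that the switches $s(x,\sigma)$ can be chosen greedily along the cycle so that each successive edge is turned into an $\id$-edge using a switch at a \emph{new} vertex, and that the unavoidable constraint on the final edge is satisfied exactly because the accumulated label of a good cycle is trivial. This is the labeled-graph analogue of the ``we always have freedom except for the last element'' argument in the proof of Proposition \ref{prop:goodcycle}, and I would present it in the same spirit, perhaps noting that orienting the cycle and assigning each switch to the head of the corresponding arc makes the well-definedness transparent. Everything else is a direct application of Fact \ref{fact:min-clas-value}, Proposition \ref{prop:connected}, and the $4$-cycle reduction from \cite{RS}.
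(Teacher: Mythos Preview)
Your proposal is correct, and the reverse direction is essentially the paper's argument run in the direct rather than contrapositive form (the paper assumes $\beta_C<(k-1)(l-1)$, finds an equivalent labeling with $\geq k+l$ $\id$-edges, and observes that $k+l$ edges on $k+l$ vertices force a cycle, which is then good). Your mention of the $4$-cycle reduction from \cite{RS} is unnecessary here, since the hypothesis already says \emph{every} cycle is bad; it does no harm, but you can drop it.

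The forward direction is where you genuinely diverge from the paper. You import the switching/$\eta_i,\xi_j$ mechanism of Proposition \ref{prop:goodcycle} into the labeled-graph language to turn the good cycle into an all-$\id$ cycle, then (implicitly via Proposition \ref{prop:connected}) extend to a connected $\id$-subgraph with $\geq k+l$ edges. The paper instead stays entirely inside the labeled-graph framework and avoids any switching computation: starting from a good cycle $C$, it extends $C$ to a spanning \emph{unicyclic} subgraph $G\subset K_{kl}$ (take the edges of $C$ and add further edges without creating any new cycle), notes that $|E(G)|=k+l$, and invokes the basic principle that contradictions can only arise along cycles --- since the unique cycle of $G$ is $C$, which is good, $G$ carries a consistent assignment, so $\beta_C\leq kl-(k+l)$. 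What the paper's route buys is a proof that is intrinsic to the labeled-graph viewpoint (which is precisely the advertised purpose of the lemma), using only ``acyclic $\Rightarrow$ no contradiction'' rather than re-deriving Proposition \ref{prop:goodcycle}. What your route buys is a more explicit hands-on construction, but it is really the matrix-side argument in disguise; also, your claim that after switching the $\id$-edges ``strictly contain a spanning structure with a cycle'' is not immediate (the cycle $C$ need not be spanning) and does require the extension step of Proposition \ref{prop:connected}, which you should make explicit.
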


\begin{proof}
It is clear that a graph with no cycles cannot contain a contradiction. Let $C$ be a cycle in $K_{kl}$ with no contradiction. We define the subgraph $G$ of $K_{kl}$, where $V(G)=V(K_{kl})$ and $E(G)$ is obtained by taking all edges of $C$ and adding as many other edges from $E(K_{kl})$ as possible without creating another cycle in $G.$ Clearly, the graph $G$ has no contradiction and $\left|E(G)\right|=k+l$, as contradictions can only occur within cycles. It follows that $K_{kl}$ has no more than $kl-(k+l) = (k-1)(l-1)-1$ contradictions.  We have already showed in Section \ref{sec:girth-method} that the maximum contradiction number is $(k-1)(l-1)$ Thus the maximum number of contradictions can only be achieved if there is a contradiction in every cycle.

Now we show that $\beta_C<(k-1)(l-1)$ implies the existence of a cycle with no contradiction. If $\beta_C(K_{kl},K)$ is less than the maximum, then there exists a labeling $K'$ of $K_{kl}$ such that $(K_{kl},K')$ is equivalent to $(K_{kl},K)$ and $K'(e)=\id$ for at least $k+l$ edges $e\in E(K_{kl})$. But a graph with $k+l$ vertices and $k+l$ edges must contain a cycle. Thus there is a cycle with no contradiction in $(K_{kl},K')$, which implies the existence of a cycle with no contradiction in $(K_{kl},K)$. Therefore, if $(K_{kl},K)$ has no cycle without contradiction, then $\beta_C(K_{kl},K)=(k-1)(l-1)$.
\end{proof}

Thus every $k\times l$ matrix with $(k-1)(l-1)$ contradictions corresponds to a complete bipartite graph $K_{kl}$ in which every cycle contains a contradiction, and vice versa. The graph $H_{opt}$ corresponding to such a matrix is a tree (cf. Fact \ref{fact:max-contradiction-tree}).

The labeled graph framework can be applied to a wider variety of quantum systems than the matrix framework described in this paper. For example this type of matrix corresponds only to complete bipartite graphs. However, the matrix approach can be modified to describe  a wider variety of games.  A graph labeled with a different set of permutations than $L_d'$ can also be described in terms of a matrix with elements from a group other than the complex roots of $1$. The matrix approach can also be extended to non-bipartite graphs, as every graph can be represented as an adjacency matrix. A formalism based on adjacency matrices would require us to modify our approach somewhat and, in the specific case of complete bipartite graphs, it would be unnecessarily complicated, but it can be a useful tool for calculations based on the graph framework.

\subsection{Detailed analysis of the case $n=3$} 
\label{3x3}

We will now combine matrix and graph methods to study the contradiction number of a game defined by a $3\times 3$ matrix
%\vspace{2ex}
\begin{align}
\label{eq:general3x3}
M=\left(\begin{array}{ccc}
1 & 1 & 1\\ 
1 & w & x\\
1 & y & z
\end{array}\right).    
\end{align}

\vspace{2ex}
The main results of this section are contained in Tables \ref{tab:3x3}  and \ref{tab:3x3-ones}, where we provide the contradiction number of $M$ for arbitrary values of the above four parameters $w,x,y,z$.
In Table 
\ref{tab:3x3} we present the cases where
none of the numbers $x,y,z,w$ are equal to $1$, while in Table \ref{tab:3x3-ones} we treat the cases where some of them are $1$s. 
An example of a matrix with the maximal number of contradictions (i.e. $\beta_C=4$) and minimal number of outputs ($d=7$)  is given by Eq. \eqref{eq:3x3-beta-4}.

\smallskip 
We will first prove the following proposition, which determines the contradiction numbers of $3\times 3$ diagonal matrices, the size that was not addressed in Proposition \ref{diag}.

\begin{proposition}
\label{prop:diag3by3}
For a $3\times 3$ matrix 
% s
%\vspace{2ex}
%
\begin{align}
\label{eq:diagonal3x3}
M=\left(\begin{array}{ccc}
x_0 & 1 & 1\\ 
1 & x_1 & 1\\
1 & 1 & x_2
\end{array}\right). 
\end{align}
If $x_i\neq 1$ for $i=1,2,3$, then we have \\
{\rm (i)}  $\beta_C(M)= 2$ if $x_1=x_2=x_0^{-1}$ or $x_0=x_2=x_1^{-1}$
or 
$x_0=x_1=x_2^{-1}$\\
{\rm (ii)} $\beta_C(M)= 3$ otherwise.\\
If some of the diagonal entries are equal to $1$, $\beta_C(M)$ equals to the number of elements different from $1$.

\end{proposition}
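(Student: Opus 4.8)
The plan is to handle the $3\times 3$ diagonal matrix directly, since Proposition \ref{diag} explicitly excludes $n=3$, and the phenomenon that breaks down is exactly the good cycle of length $6$ that uses all three off-diagonal-free rows and columns. First I would deal with the case where some $x_i=1$: if, say, $x_2=1$, then $M$ has only two nontrivial entries, both on the diagonal, in distinct rows and columns, and the argument of Proposition \ref{prop:one_row} (or a direct $2\times 2$-minor argument as in the proof of Proposition \ref{diag}, Case $1^\circ$, which needs only $n\ge 3$) shows $\beta_C(M)$ equals the number of entries different from $1$. So the substance is the case $x_0,x_1,x_2\neq 1$.

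For that case, the upper bound $\beta_C(M)\le 3$ is immediate since $M$ has three entries different from $1$ (equivalently, $\beta_C\le (n-1)(n-1)\cdot$ nothing sharper is needed; just change the three diagonal entries to $1$). So the work is to decide when $\beta_C(M)=2$ versus $\beta_C(M)=3$. By Corollary \ref{cor:good-cycle}, $\beta_C(M)=(n-1)^2=4$ is impossible here (indeed, we will get $2$ or $3$), so I instead argue as follows. By Fact \ref{classical_val}/the equivalence with rigidity, $\beta_C(M)\le 2$ iff one can make two entries into a common value by row/column multiplications, i.e. iff $H_{opt}(M)$ has at least $3+3-2=4$ edges, equivalently $H(M')$ has $\ge 4$ edges for some equivalent $M'$. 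Since $H(M)$ already has $6$ edges (the six off-diagonal positions, all equal to $1$), the question is whether one further position can be forced to $1$ without destroying too many existing ones — but more efficiently: $\beta_C(M)\le 2$ iff $M$ contains a good cycle. The only candidate good cycles are the $2\times 2$ ones (four off-diagonal $1$s plus some diagonal entries — but any $2\times 2$ submatrix contains at most one diagonal entry, hence at most one non-$1$, so it has a nonzero minor and is never good) and the length-$6$ cycle through all three diagonal entries. That $6$-cycle uses entries $m_{00},m_{11},m_{22}$ on one side and three off-diagonal $1$s on the other, split according to the two permutations $\mathrm{id}$ and the $3$-cycle; by Definition \ref{def:good-cycle}, up to choosing which $3$-cycle, goodness reads $k_0+k_1+k_2 \equiv 0 \pmod d$ with a sign pattern, i.e. $x_0 x_1 x_2 = 1$ in one orientation. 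Hmm — but the stated condition is $x_1=x_2=x_0^{-1}$ (and cyclic variants), which is \emph{stronger} than $x_0x_1x_2=1$; I need to recheck exactly which sets of positions form an honest cycle in $K_{3,3}$ and recompute \eqref{eq:ks} for each, since a length-$6$ cycle in $K_{3,3}$ is determined by a cyclic ordering, and the off-diagonal $1$s that appear are not arbitrary. Working this out carefully (there are essentially two such $6$-cycles through the three diagonal slots, and the constraint each imposes) should yield precisely the three listed alternatives.

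The cleaner route, which I expect to actually use for the sharp ``$2$ vs $3$'' dichotomy: reduce modulo equivalence. Multiply row $i$ and the columns to normalize; one checks that \eqref{eq:diagonal3x3} is equivalent to a matrix with $1$s in the first row and first column iff $x_0=1$, which is excluded, so instead I directly enumerate: $\beta_C(M)=2$ forces an equivalent matrix with $4$ ones, i.e. $H_{opt}$ with $4$ edges on $6$ vertices, hence (being a subgraph of $K_{3,3}$, bipartite, and by Proposition \ref{prop:connected} its optimal version connected) $H_{opt}$ is either a tree with $4$ edges plus — no: $4$ edges, $6$ vertices, so it has $6-4=2$ components if it's a forest. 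The possible edge-$4$ subgraphs of $K_{3,3}$ with all degrees realizable by a row/column multiplication pattern correspond exactly to: two disjoint $2$-edge paths covering a diagonal pair. Translating ``this subgraph is $H(M')$ for some equivalent $M'$'' into equations on $x_0,x_1,x_2$ via the good-cycle/minor conditions gives, after the short computation, the three cases $\{x_1=x_2=x_0^{-1}\}$ etc. The main obstacle — and the only real computation — is this bookkeeping: correctly listing the admissible $4$-edge subgraphs of $K_{3,3}$ and, for each, writing down the single multiplicative relation among $x_0,x_1,x_2$ that makes it achievable, then observing that the union of those relations is exactly the stated disjunction and that outside it $\beta_C=3$. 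Everything else (the upper bound, the $x_i=1$ subcase, and the translation between ``$k$ ones'', rigidity, and good cycles) is routine given Fact \ref{classical_val}, Corollary \ref{cor:good-cycle}, and Proposition \ref{prop:goodcycle}.
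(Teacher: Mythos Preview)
Your plan has several concrete errors that derail the argument, and it misses the mechanism the paper actually uses.

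\medskip
\textbf{Wrong edge count.} You write that $\beta_C(M)\le 2$ is equivalent to ``$H_{opt}(M)$ has at least $3+3-2=4$ edges'' and later analyze ``$H_{opt}$ with $4$ edges on $6$ vertices''. But by Fact~\ref{fact:cl-H} the number of edges of $H_{opt}$ is $n_An_B-\beta_C=9-\beta_C$, so $\beta_C=2$ means $H_{opt}$ has \emph{seven} edges, not four. All of the subsequent structural analysis (``two components if it's a forest'', enumeration of $4$-edge subgraphs of $K_{3,3}$) is therefore aimed at the wrong object.

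\medskip
\textbf{Wrong use of Corollary~\ref{cor:good-cycle}.} You claim ``$\beta_C(M)\le 2$ iff $M$ contains a good cycle''. Corollary~\ref{cor:good-cycle} only says ``$\beta_C<(n-1)^2$ iff $M$ has a good cycle'', which here reads $\beta_C\le 3$. In fact the six off-diagonal $1$'s of $M$ already form a Hamiltonian $6$-cycle in $K_{3,3}$, and it is trivially good, so \emph{every} such $M$ has a good cycle regardless of whether $\beta_C$ is $2$ or $3$. The good-cycle criterion does not separate the two cases you need.

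\medskip
\textbf{Wrong submatrix claim.} You assert ``any $2\times 2$ submatrix contains at most one diagonal entry, hence \dots\ is never good''. This is false: the $2\times 2$ submatrix with row set $\{i,j\}$ and column set $\{i,j\}$ contains \emph{two} diagonal entries $x_i,x_j$, and its minor is $x_ix_j-1$. These are exactly the submatrices that \emph{can} have zero minor (namely when $x_ix_j=1$), and they are precisely what drives the correct answer. Missing them is why your $6$-cycle computation produces only the weaker condition $x_0x_1x_2=1$, which you yourself notice is not the statement of the Proposition. Indeed, $x_0x_1x_2=1$ alone does \emph{not} imply $\beta_C=2$: take $x_0=x_1=\omega$, $x_2=\omega^{-2}$ with $\omega$ a primitive root of order $\ge 5$; then no $x_ix_j=1$ holds and $\beta_C=3$.

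\medskip
\textbf{What the paper does instead.} The paper's argument is short and uses Remark~\ref{minor2minor} (equivalence preserves ranks of submatrices). First, any $M'$ equivalent to $M$ cannot have all its non-$1$ entries in a single row or column, since every $2\times 3$ and $3\times 2$ submatrix of $M$ has rank $2$; hence $\beta_C\ge 2$. Next, if $\beta_C=2$, then some equivalent $M'$ has exactly two non-$1$ entries in distinct rows and columns, and therefore has (at least) two all-$1$ $2\times 2$ submatrices. By Remark~\ref{minor2minor}, $M$ itself must have two zero $2\times 2$ minors. In $M$, a $2\times 2$ minor is zero only for the row/column pair $\{i,j\}\times\{i,j\}$ with $x_ix_j=1$. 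Requiring two such relations forces exactly one of the three alternatives $x_1=x_2=x_0^{-1}$, $x_0=x_2=x_1^{-1}$, $x_0=x_1=x_2^{-1}$; conversely, each of these is shown by an explicit row/column multiplication to give $\beta_C=2$. The degenerate case (some $x_i=1$) and the final assertion follow from the same rank argument together with Proposition~\ref{prop:one_row}. Your proposal never arrives at this minor-counting step, which is the actual content of the proof.
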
 
Note that since every $3\times 3$ matrix which has at most one element different from $1$ in each row and each column is equivalent to a matrix of form \eqref{eq:diagonal3x3}, Proposition \ref{prop:diag3by3} allows to determine contradiction numbers of all such matrices.

\smallskip
\begin{proof}
Consider a $3\times 3$ matrix $M$ of the form \eqref{eq:diagonal3x3} 
%\mh{better quote \eqref{eq:diagonal3x3}, right?} 
 and let $M'$ be equivalent to $M$. We first observe that  all elements of $M'$ that are different from $1$ can not be located in the same row (or in the same column). Indeed, if that was the case, then $M'$ would contain a $2\times 3$ (or $3\times 2$) submatrix of rank $1$, while all  $2\times 3$ or $3\times 2$ submatrices of $M$ are clearly of rank $2$, which would yield a contradiction in view of Remark \ref{minor2minor}.

Consequently, every $M'$ equivalent to $M$ must contain at least two entries different from $1$ (hence $\beta_C(M) \geq 2$)  and if it contains exactly two such entries, then those two entries must be in different columns and different rows. 

With the above observation in mind, let us investigate the constraints imposed on $M$ by the condition $\beta_C(M) = 2$.  As an illustration, suppose that a matrix of the form 
%\vspace{2ex}
\begin{equation}\label{2elements}
M'= (m'_{ij})_{i,j=0}^2 = \left(\begin{array}{ccc}
1 & 1 & 1\\ 
1 & 1 & y_2\\
1 & y_1 & 1
\end{array}\right)
\end{equation}
%\vspace{2ex}
is equivalent to $M$. We note that $M'$ has (at least) two $2\times 2$ submatrices that consist solely of $1$s, namely those given by $i,j \in \{0,1\}$ and by $i,j\in \{0,2\}$. By Remark \ref{minor2minor}, the same must be true for $M$. Now, every $2\times 2$ submatrix of $M$ includes at least one diagonal element $x_i$ and if it includes exactly one, then the corresponding minor can not be zero. So only submatrices containing two diagonal elements may lead to a zero minor.  If those two minors are given by $i,j \in \{0,1\}$ and by $i,j\in \{0,2\}$, the requirements for them to be zero are respectively 
\begin{equation} \label{conditions}
x_0x_1-1=0 \  \hbox{ and } \ x_0x_2-1=0,
\end{equation}
which is equivalent to $x_1=x_2=x_0^{-1}$. By symmetry, the other two sets of constraints from (i) are related to the remaining choices of two pairs of indices from among $\{0,1\}$, $\{0,2\}$ and $\{1,2\}$. 

Conversely, if the  
constraints \eqref{conditions} are satisfied, then multiplying 
the first row of $M$ by $x_0^{-1}$ and then the second and the third column by $x_0$ we obtain 
\begin{equation} \label{suffice}
M' = \left(\begin{array}{ccc}
1 & 1 & 1\\ 
1 & x_0x_1 & x_0\\
1 & x_0 & x_0x_2
\end{array}\right)= 
\left(\begin{array}{ccc}
1 & 1 & 1\\ 
1 & 1 & x_0\\
1 & x_0 & 1
\end{array}\right),
\end{equation}
which shows that $\beta_C(M')=\beta_C(M) \leq 2$, and hence both are equal to $2$. 

The same argument shows that any matrix of form \eqref{2elements} with $y_1, y_2\neq 1$, or equivalent to it,  satisfies $\beta_C(M')=2$. Together with Proposition \ref{prop:one_row} (or Remark \ref{minor2minor}), this also justifies the last assertion of the Proposition.  
\end{proof}

\vspace{2ex}

Now let us return to the contradiction number of the matrix \eqref{eq:general3x3}.
To begin with, note that the number of contradictions in this matrix is at most the number of elements $w,x,y,z$ different from 1. For the time being, let us assume that $1\notin\{w,x,y,z\}$.

It is easy to see that $\beta_C = 1$ if $w=x=y=z,$ as multiplying the last two rows by $x^{-1}$ and the first column by $x$ transforms it into 

\begin{align}
M'=\left(\begin{array}{ccc}
x & 1 & 1\\ 
1 & 1 & 1\\
1 & 1 & 1
\end{array}\right).    
\end{align}
If any three of the elements are equal, say $w=x=y,$
and the fourth element is different, 
multiplying the last two rows by $x^{-1}$ and the first column by $x$ transforms it into 
\begin{align}
M'=\left(\begin{array}{ccc}
x & 1 & 1\\ 
1 & 1 & 1\\
1 & 1 & zx^{-1}
\end{array}\right).
\end{align}
The corresponding labeled graph has multiple bad 4-cycles and no single edge belonging to all of them. Thus, $\beta_C=2.$ (This also follows from Proposition \ref{prop:diag3by3}.) 

Consider next the case of exactly one equality among $w,x,y,z$.  If the two equal elements are in the same row or column, we can easily reduce the number of non-1's to three by multiplying that row or column by the appropriate factor. For example, if $w=x$, we multiply the second row by $x^{-1}$.  The resulting matrix 
\begin{align}
M'=\left(\begin{array}{ccc}
1 & 1 & 1\\ 
x^{-1} & 1 & 1\\
1 & y & z
\end{array}\right) 
\end{align}
has precisely one zero minor (this uses $x\neq y$, $x\neq z$) and so, by the argument from the proof of Proposition \ref{prop:diag3by3}, can not be equivalent to a matrix with two (or less) non-$1$ entries. It follows that, in that case, $\beta_C(M)=3$.

If the two equal elements are not in the same row or column, for example $w=z$, we can multiply the last two rows by $z^{-1}$ and the first column by $z$. Then we obtain the matrix. 

\begin{align}
M'=\left(\begin{array}{ccc}
z & 1 & 1\\ 
1 & 1 & xz^{-1}\\
1 & yz^{-1} & 1
\end{array}\right).    
\end{align}
To show that there is no equivalent matrix with fewer non-1 elements, we use  Proposition \ref{prop:diag3by3} and  it follows that the  original  matrix $M$ had the contradiction number equal to $3$. 

The above argument also applies if {\sl two pairs} of elements on the diagonals are equal, i.e., if $w=z$ and $x=y$, but $x\neq z$. This is because condition (i) of Proposition \ref{prop:diag3by3} would imply $xz^{-1}= z^{-1}$ and therefore cannot be satisfied; accordingly, the contradiction number is also equal to $3$. 

If $w=x$ and $y=z$, the matrix is clearly equivalent to 
%\vspace{2ex}

\begin{equation}
M'=\left(\begin{array}{ccc}
1 & 1 & 1\\ 
x^{-1} & 1 & 1\\
z^{-1} & 1 & 1
\end{array}\right).
\end{equation}
\vspace{2ex}

If $x\neq z$ (and hence $x^{-1}\neq z^{-1}$), it follows from Proposition \ref{prop:one_row} that $\beta_C=2.$
Similarly, if $w=y$ and $x=z$, but $x\neq y$, then $\beta_C = 2$.

It follows that $\beta_C = 4$ is only achievable if no two elements in the set $\{w,x,y,z\}$ are equal. However, it is still possible that all these elements are different and $\beta_C=3.$ We shall analyze that situation using the labeled graph formalism. 

\begin{center}
\begin{figure}[H]
	\begin{picture}(300,140)
	\put(10,-10){\includegraphics[scale=0.30]{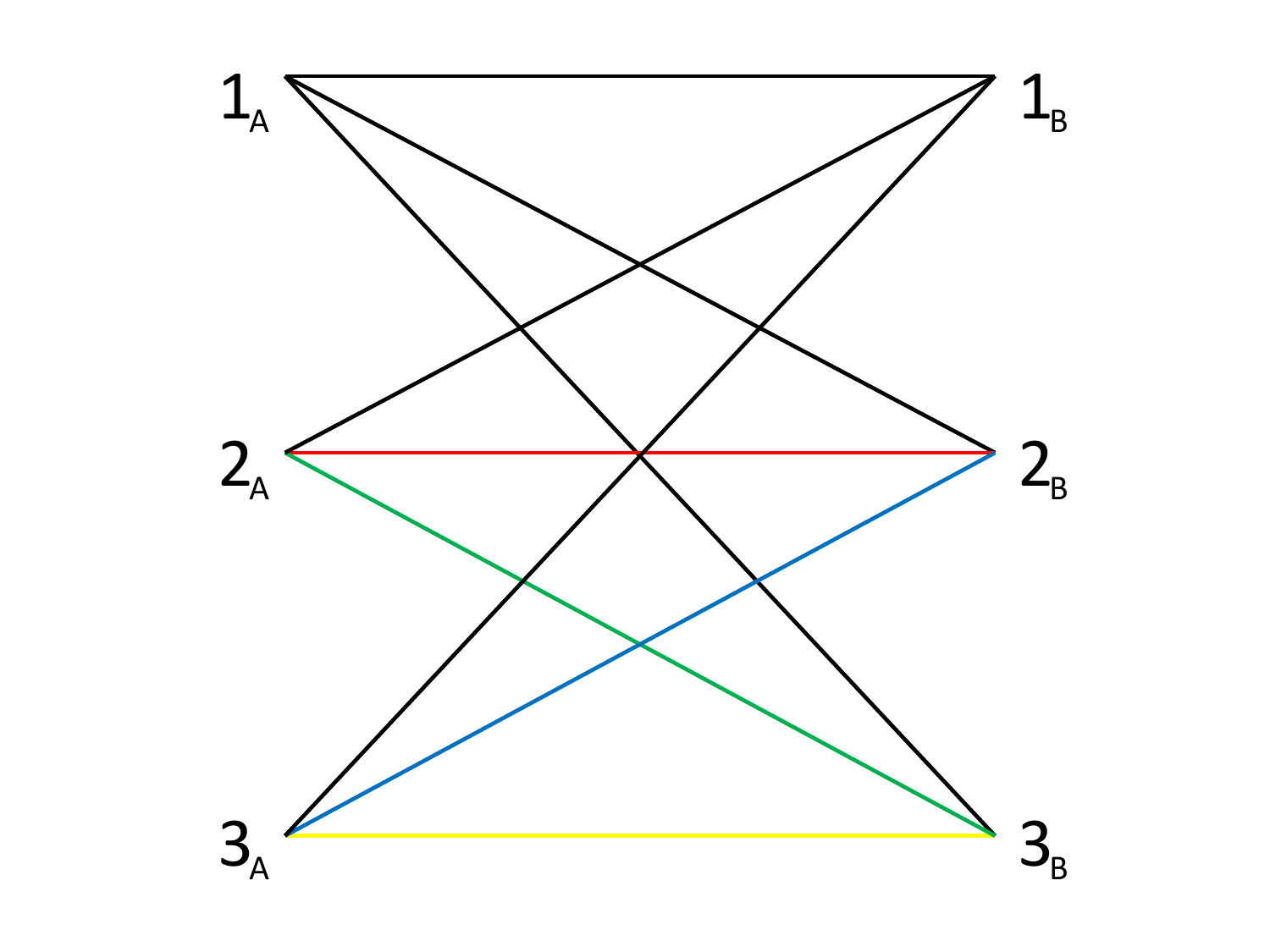}}
	\end{picture}
	\caption{Each edge color represents a different permutation from the set $L_d'=\{\tilde{\sigma_i}: \tilde{\sigma_i}(a)=i+a\}$.}
	\label{fig:K33}
\end{figure}
\end{center}

The labeled graph described by the matrix $M$ (see Fig. \ref{fig:K33}) contains fifteen overlapping cycles (see Fig. \ref{fig:C4} and \ref{fig:C6}).

\begin{center}
\begin{figure}[H]
	\begin{picture}(300,140)
	\put(10,-10){\includegraphics[scale=0.30]{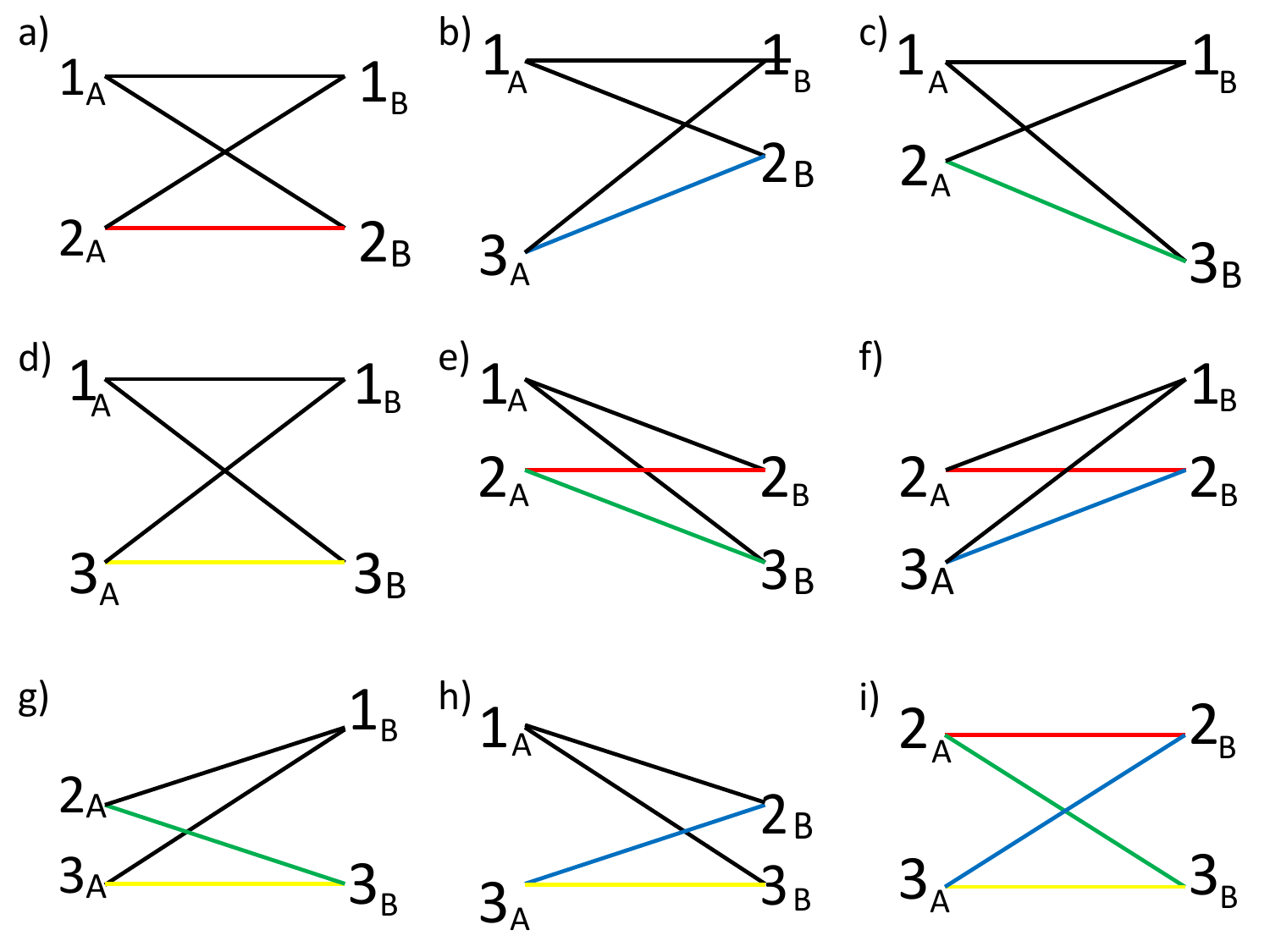}}
	\end{picture}
	\caption{Cycles of length 4 within the graph.}
	\label{fig:C4}
\end{figure}
\end{center}

If no two elements in the set $\{w,x,y,z\}$ are equal, then all cycles $a)-h)$ and $n)-o)$ contain contradictions. To get rid of all those cycles we must delete at least three edges. 

\begin{center}
\begin{figure}[H]
	\begin{picture}(300,140)
	\put(10,-10){\includegraphics[scale=0.30]{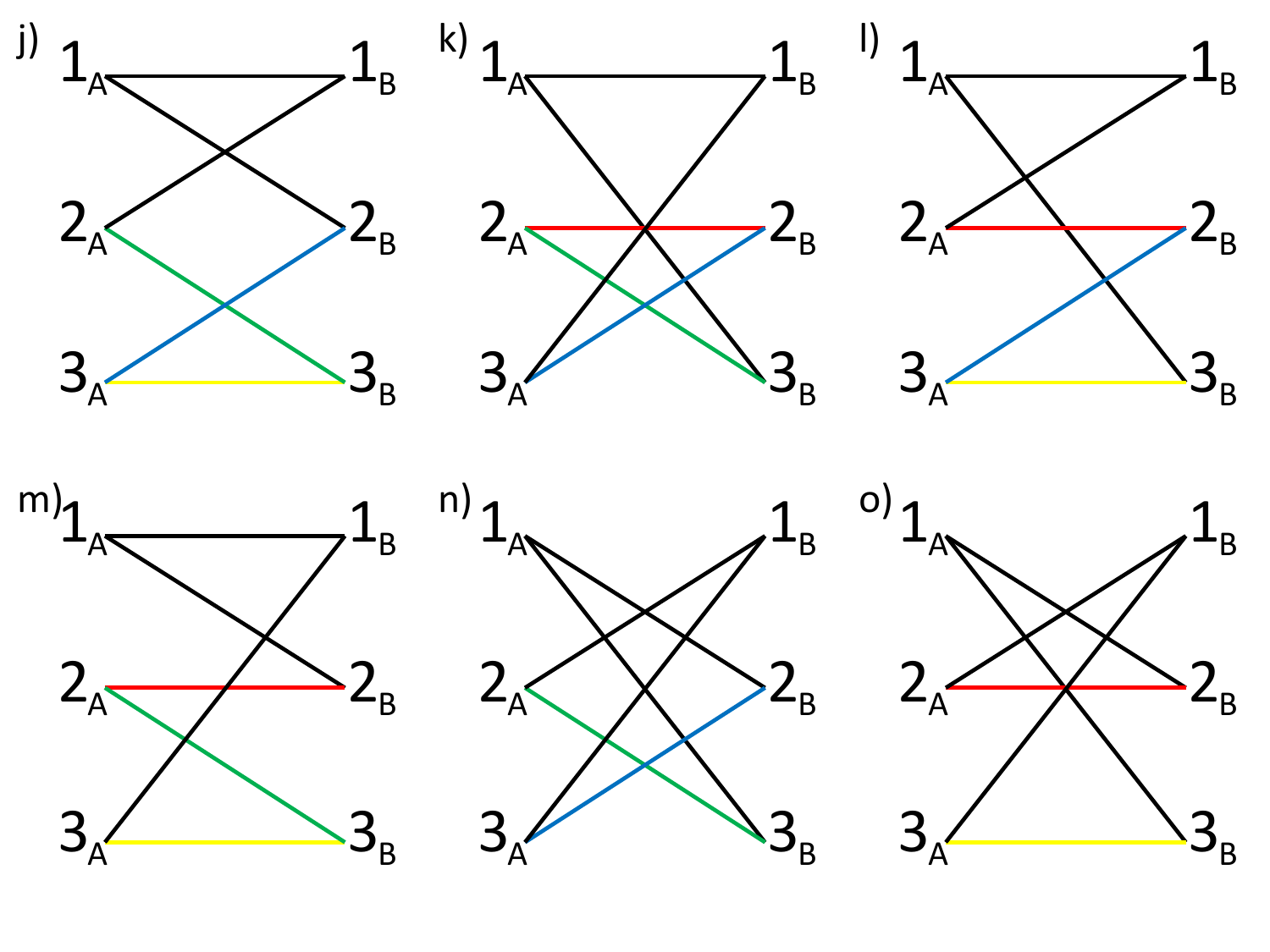}}
	\end{picture}
	\caption{Cycles of length 6 within the graph.}
	\label{fig:C6}
\end{figure}
\end{center}

Deleting any three edges such that the remaining graph does not contain any of the cycles $a)-h)$ and $n)-o)$ leaves us with exactly one of the cycles $i) - m)$. Thus, clearly, $\beta_C = 3$ if at least one of these cycles contains no contradiction, and $\beta_C=4$ otherwise.

Recall that the elements of the matrix correspond directly to permutations from the set $L_d'=\{\tilde{\sigma_i}\in S_d: \tilde{\sigma_i}(a)=i+a$ mod $d\}$ assigned to the edges of the graph. The number $\omega^i$ represents an edge labeled with the permutation $\tilde{\sigma_i}$. In particular, a $1$ represents the identity, or $\tilde{\sigma_0}.$ Thus, the conditions for specific cycles containing no contradictions can be written in terms of the values in the matrix for example $\sigma_w^{-1}\sigma_y\sigma_z^{-1}\sigma_x=\tilde{\sigma_0}$, the condition for the cycle i) being good, can also be written as $w^{-1}yz^{-1}x=1.$

%\vspace{2ex}
\begin{table}
\begin{center}
\begin{tabular}{|c|c|}
\hline
\textbf{Values of $w,x,y,z$} & \textbf{Number of contradictions}
\\\hline\hline $w=x=y=z$ & $\beta_C = 1$
\\\hline any three values equal & $\beta_C = 2$
\\\hline $w=x$ and $y=z$ & $\beta_C = 2$
\\\hline $x=z$ and $w=y$ & $\beta_C = 2$
\\\hline $x=y$ and $w=z$ & $\beta_C = 3$
\\\hline any two values equal & $\beta_C = 3$
\\\hline all values different and&
\\$w^{-1}yz^{-1}x=1$&
\\or&
\\$yz^{-1}x=1$&
\\or&
\\$w^{-1}z^{-1}x=1$& $\beta_C = 3$
\\or&
\\$w^{-1}yx=1$&
\\or&
\\$w^{-1}yz^{-1}=1$&

\\\hline all values different, otherwise & $\beta_C = 4$
\\\hline
\end{tabular}
\end{center}
\caption{\label{tab:3x3} Characterization of $3\times3$ games \eqref{eq:general3x3} with respect to the contradiction number if $1 \not\in \{w,x,y,z\}$.}
\end{table}

\begin{table}
\begin{center}
\begin{tabular}{|c|c|}
\hline
\textbf{Values of $w,x,y,z$} & \textbf{Number of contradictions}
\\\hline\hline 
one non-$1$ & $\beta_C = 1$
\\\hline two equal non-$1$'s  & 
\\ in the same  & $\beta_C = 1$
\\ row or column &
\\ \hline
two  non-$1$'s, otherwise & 
$\beta_C=2$
\\ \hline
three non-$1$'s and &
\\ $x=y$ &
\\ or $x=z$ & $\beta_C=2$
\\ or $xy=z$ &
\\ \hline
three non-$1$'s, otherwise& $\beta_C=3$
\\\hline
\end{tabular}
\end{center}
\caption{\label{tab:3x3-ones} Complete characterization of $3\times3$ games \eqref{eq:general3x3} with respect to the contradiction number if some of $x,y,z,w$
are equal to $1$.}
\end{table}

The five conditions in the second to last row of Table \ref{tab:3x3} can also be expressed by the equalities:
\begin{enumerate}[i)]

\item $w^{-1}yz^{-1}x=1$ $\Leftrightarrow$ $yw^{-1}=zx^{-1}$;
\item $yz^{-1}x=1$ $\Leftrightarrow$ $zx^{-1}=y$;
\item $w^{-1}z^{-1}x=1$ $\Leftrightarrow$ $xw^{-1}=z$;
\item $w^{-1}yx=1$ $\Leftrightarrow$ $wy^{-1}=x$;
\item $w^{-1}yz^{-1}=1$ $\Leftrightarrow$ $yz^{-1}=w$.
\end{enumerate}

In fact, the matrix can be transformed as follows.

\vspace{2ex}
\begin{align}
&\left(\begin{array}{ccc}
1 & 1 & 1\\ 
1 & w & x\\
1 & y & z
\end{array}\right) \rightarrow 
\left(\begin{array}{ccc}
1 & w^{-1} & x^{-1}\\ 
1 & 1 & 1\\
1 & yw^{-1} & zx^{-1}
\end{array}\right) \rightarrow 
\nonumber \\
&\rightarrow \left(\begin{array}{ccc}
1 & w^{-1} & x^{-1}\\ 
1 & 1 & 1\\
z^{-1}x & yw^{-1}z^{-1}x & 1
\end{array}\right)    
\end{align}
or
\begin{align}
&\left(\begin{array}{ccc}
1 & 1 & 1\\ 
1 & w & x\\
1 & y & z
\end{array}\right) \rightarrow 
\left(\begin{array}{ccc}
1 & 1 & x^{-1}\\ 
1 & w & 1\\
1 & y & zx^{-1}
\end{array}\right) \rightarrow 
\nonumber \\
&\rightarrow \left(\begin{array}{ccc}
1 & 1 & x^{-1}\\ 
1 & w & 1\\
z^{-1}x & yz^{-1}x & 1
\end{array}\right)    
\end{align}
\vspace{2ex}

\blk 
Assuming all of the values $\{w,x,y,z\}$ are different, it is clear that the final matrices in the above transformations have three contradictions iff the values satisfy conditions i) and ii), respectively. Conditions iii) - v) can be analyzed via similar transformations.

\medskip Finally, lets us comment on the case when some of the parameters $w,x,y,z$ in \eqref{eq:general3x3} are equal to $1$. The analysis of most of such instances is implicit in the argument above.  For example, if there is only one entry that is different from $1$, then $\beta_C(M)=1$. If there are two such entries contained in a single row or column, then
$\beta_C$ equals $1$ or $2$ depending on whether these entries are equal or not (Proposition \ref{prop:one_row}). If the two entries different from $1$ do not belong to the same row nor column, we have an instance of a matrix of type \eqref{2elements}, which -- as we determined -- has the contradiction number equal to $2$. If exactly three entries are different from $1$, say 
\begin{align}
\label{eq:3elements}
M=\left(\begin{array}{ccc}
1 & 1 & 1\\ 
1 & 1 & x\\
1 & y & z
\end{array}\right),     
\end{align}
then -- as in the proof of Proposition \ref{prop:diag3by3} -- a necessary condition for $\beta_C(M)\leq 2$ is that at least one $2\times 2$ minor  of $M$ is zero in addition to the obvious one ($i,j \in \{0,1\}$). By direct checking, we see that this happens only if $y=z$ or $x=z$, or if $xy=z$. It can be easily verified that all these conditions are also sufficient for $\beta_C(M)= 2$. 
\section{Summary}

{\cred This paper explores nonlocal games, with a focus on attempts to construct explicit examples of linear games that exhibit a significant gap between classical and quantum values. The main objective is to analyze classical values of these games using tools from graph theory and number theory, ultimately developing a novel technique, the girth method, which allows for the construction of games with low classical values.
The \emph{girth method} combines results from graph theory, particularly regarding the girth of a graph (the length of its shortest cycle), and number theory and harmonic analysis. This approach enables constructing games with minimal classical values and potential for a large quantum-to-classical value ratio. The paper provides explicit examples of such games, showing that they have an unbounded quantum violation when comparing their classical values with known upper bounds for quantum values.
Additionally, the connection between nonlocal games and graph theory is emphasized. The paper shows that the structure of a graph, particularly the number of edges and cycles, is directly related to the classical value of the corresponding game. Games whose graphs have a small number of edges and a high girth tend to have low classical value, enhancing the potential for quantum violation.

}

\vspace{3mm}

{\it Acknowledgments}
We thank Ravishankar Ramanathan for discussions, and for pointing out the 
notion of matrix rigidity. We also thank Mateus Araujo for comments and for pointing out a mistake in previous version regarding maximal number of contradictions for $6\times 6$ matrix. We express our gratitude to the anonymous reviewers for their insightful suggestions.
The research of SJS was supported in part by the grant DMS-1600124 from the National Science Foundation (USA). 
MR, AR, PG and MH are supported by National Science Centre, Poland,
grant OPUS 9. 2015/17/B/ST2/01945.
M.H. also acknowledges support from the Foundation for Polish Science through IRAP project co-financed by EU within the Smart Growth Operational Programme (contract no.2018/MAB/5). 

%%%%%%%%%%%%%%%%%%%%%%%%%%%%%%%%%%%%%%

%\bibliographystyle{plain}

\bibliographystyle{unsrtnat}

\begin{thebibliography}{1}	
\bibitem{Scarani2009} V. Scarani, H. Bechmann-Pasquinucci, N. J. Cerf, M. Du\v{s}ek, N. L\"utkenhaus, and M. Peev, Rev. Mod. Phys. 81, (2009) 1301 
\bibitem{Ekert2014} A.Ekert, R. Renner, Nature 507, (2014) 443-447 
\bibitem{Scarani-DI} V. Scarani, Acta Physica Slovaca 62, 347 (2012).
\bibitem{Koenig} S. Bravyi, D. Gosset, R. Koenig, M. Tomamichel, \textit{Quantum advantage with noisy shallow circuits in 3D}, arxiv:1904.01502
\bibitem{Tsirelson1993} B. S. Tsirelson, Hadronic Journal Supplement, 8:4,  (1993) 329-345
\bibitem{Acin2006} A. Acin, N. Gisin, and B. Toner, Phys. Rev. A 73, (2006) 062105
\bibitem{Kempe2008} J. Kempe, H. Kobayashi, K. Matsumoto, B. Toner and T. Vidick, 2008 49th Annual IEEE Symposium on Foundations of Computer Science, Philadelphia, PA, (2008) pp. 447-456.

\bibitem{single-shot-Araujo}
   M. Ara\'{u}jo, and Hirsch, Flavien Hirsch  and  Marco T\'{u}lio Quintino, Quantum 4, 353 (2020).

\bibitem{Khot2005} S. Khot and N. Vishnoi, Proceedings of 46th IEEE FOCS, (2005) 53-62
\bibitem{Kempe2008a} J. Kempe, O. Regev, and B. Toner, \textit{Unique games with entangled provers are easy}. In Proceedings of 49th IEEE FOCS, 457-466, 2008.
\bibitem{Regev} O. Regev, \textit{Bell violations through independent bases games}, Quant. Inf. and Comp. 12, 9 (2012)
\bibitem{Buhrman} H. Buhrman, O. Regev, G. Scarpa and R. de Wolf, \textit{Near-Optimal and Explicit Bell Inequality Violations}, 2011 IEEE 26th Annual Conference on Computational Complexity, San Jose, CA, 2011, 157-166.
\bibitem{Junge2010} M. Junge, C. Palazuelos, D. P\'erez-Garcia, I. Villanueva, and M. M. Wolf,  Phys. Rev. Lett. 104, (2010) 170405
\bibitem{Junge2011}
M. Junge and C. Palazuelos, \textit{Large violation of Bell inequalities with low entanglement}, 
Comm. Math. Phys. 306 (3), 695-746 (2011).
\bibitem{Zukowski1997} M. \.Zukowski, A. Zeilinger and M. Horne, Phys. Rev. A5(4), (1997) 2546
\bibitem{Murta2016} G. Murta, R. Ramanathan, N. Moller, and M. T. Cunha, \textit{Quantum bounds on multiplayer linear games and device-independent witness of genuine tripartite entanglement}, Phys. Rev. A 93, 022305 (2016)
\bibitem{RRGHS} M. Rosicka, P. Gnaci\'nski, R. Ramanathan, K. Horodecki, M. Horodecki, P. Horodecki, S. Severini, \textit{Linear game non-contextuality and Bell inequalities - a graph-theoretic approach},	New J. Phys. 18, 045020 (2016)
\bibitem{RAM} R. Ramanathan, R. Augusiak, G. Murta, 
Phys. Rev. A 93 (2016), 022333 
\bibitem{Raz-parallel} R. Raz, \textit{A parallel repetition theorem}. SIAM J. Comput. 27(3), 763-803, 1998.
\bibitem{Rao} A. Rao., SIAM J. Comput. 40 1871-1891 (2011).
\bibitem{Cleve2004} R.Cleve, P. Hoyer, B. Toner, J. Watrous, IEEEE 19th Annual Conference of Computational Complexity, (2004) 236-249 
\bibitem{Valiant77} L. Valiant, \textit{Graph-theoretic arguments in low-level complexity}, Proceedings
of 6th MFCS. Volume 53 of Lecture Notes in Computer Science., Springer (1977) 162-176
%\bibitem{Alon} N. Alon, \textit{On the rigidity of an Hadamard matrix}, Manuscript. His proof may be
%found in [24, Section 15.1.2] (1990)
\bibitem{Friedman93} J. Friedman, \textit{A note on matrix rigidity}, Combinatorica 13(2) (1993) 235-239
\bibitem{Kashin98} B. Kashin, A. Razborov, \textit{Improved lower bounds on the rigidity of Hadamard
	matrices}, Matematicheskie Zametki 63(4) (1998) 535-540 In Russian. English
translation available at Razborov's homepage.
\bibitem{Lokam01} S. Lokam, \textit{Spectral methods for matrix rigidity with applications to size-depth
	trade-offs and communication complexity}, Journal of Computer and Systems Sciences
63(3) (2001) 449-473 Earlier version in FOCS'95.
\bibitem{BondyMurty} J. A. Bondy, U. S. R. Murty, Graph Theory, Graduate Texts in Mathematics, Vol. 244,  Springer-Verlag London 2008
\bibitem{RS} M. Rosicka, S. Severini, \textit{Permutation graphs and unique games}, arxiv:1608.06661.
\bibitem{Bondy-girth} Bondy, J. A.; Simonovits, M., {\it Cycles of even length in graphs}, Journal of Combinatorial Theory, Series B, 16: 97, 105 (1974)
\bibitem{Hoory}S. Hoory, \textit{The Size of Bipartite Graphs with a Given Girth}, Journal of Combinatorial Theory, Series B 86, 215-220 (2002)
\bibitem{Hoffman} A. J. Hoffman, \textit{On eigenvalues and colorings of graphs}, Graph Theory and its Applications, Academic Press, New York (1970), pp. 79-91.
\bibitem{Rudin} W. Rudin, \textit{Trigonometric Series with Gaps}, Journal of Mathematics and Mechanics. 9, 203-227 (1960).

\bibitem{LPS} A. Lubotzky, R. Phillips and P. Sarnak, \textit{Ramanujan graphs}, Combinatorica 8 (1988), Issue 3, 261-277. 

\end{thebibliography}

%%%%%%%%%%%%%%%%%%%%%%%%%%%%
%%%%%%%%%%%%%%%%%%%%%%%%%%%%
\clearpage
\newpage

\section*{Appendix }

\begin{proof}[Proof of Proposition \ref{prop1}]
%\tcr{The Proposition seems to be stated for  $n_A\times n_B$ matrices, but proved for $(k+1)\times (l+1)$ matrices.} 
Let 
\begin{align}
M=\left(\begin{array}{cccc}
m_{00} & m_{01} & ... & m_{0k}\\ 
m_{10} & m_{11} & ... & m_{1k}\\
\vdots & \vdots &     & \vdots\\
m_{l0} & m_{l1} & ... & m_{lk}
\end{array}\right).    
\end{align}
After multiplying each row by $m_{i0}^{-1},$ where $i$ is the number of the row, we obtain the matrix
\begin{align}
\left(\begin{array}{cccc}
1 & m_{01}m_{00}^{-1} & ... & m_{0k}m_{00}^{-1}\\ 
1 & m_{11}m_{10}^{-1} & ... & m_{1k}m_{10}^{-1}\\
\vdots & \vdots &     & \vdots\\
1 & m_{l1}m_{l0}^{-1} & ... & m_{lk}m_{l0}^{-1}
\end{array}\right).    
\end{align}
Next we multiply columns $1$ - $k$ by $m_{00}m_{0j}^{-1},$ where $j$ is the number of the column to obtain
\begin{align}
M'=\left(\begin{array}{cccc}
1 & 1  & ... & 1\\ 
1 & m_{11}m_{10}^{-1}m_{00}m_{01}^{-1} & ... & m_{1k}m_{10}^{-1}m_{00}m_{0k}^{-1}\\
\vdots & \vdots &     & \vdots\\
1 & m_{l1}m_{l0}^{-1}m_{00}m_{01}^{-1} & ... & m_{lk}m_{l0}^{-1}m_{00}m_{0k}^{-1}
\end{array}\right).    
\end{align}

\noindent $M'$ is a matrix in which all elements of the first row and the first column are equal to $1$ {and it is  equivalent to $M$ by construction, as needed.} 
\end{proof}

{Note that essentially the same argument allows to obtain $M'$ in which the locations of entries equal to $1$ correspond to any given tree 
represented as a bipartite graph on  $n_A\times n_B$  vertices. } 

%$B$ is a matrix in which all elements of the first row and columns are equal to $1$. It is clearly equivalent to $A$, as it can be obtained from $A$ by multiplying rows and columns.

\medskip
\begin{proof}[Proof of Lemma \ref{lll}]
Let $M$ be a matrix of the form \ref{eq:ones}. If two non-one elements in the same row (or column) are equal, we can transform the matrix as follows
\begin{align}
&\left(\begin{array}{cccccc}
1      & 1      & 1      &  1     & ... & 1\\ 
1      & x      & x      & m_{13} & ... & m_{1l}\\
1      & m_{21} &        & ...    &     & m_{2l}\\
\vdots & \vdots &        &        &     & \vdots\\
1      & m_{k1} &        & ...    &     & m_{kl}
\end{array}\right) \rightarrow  \nonumber \\
&\rightarrow \left(\begin{array}{cccccc}
1      & 1      & 1      &  1     & ... & 1\\ 
m^{-1} & 1      & 1      & m_{13}m^{-1} & ... & m_{1l}m^{-1}\\
1      & m_{21} &        & ...    &     & m_{2l}\\
\vdots & \vdots &        &        &     & \vdots\\
1      & m_{k1} &        & ...    &     & m_{kl}
\end{array}\right).
\end{align}
\vspace{2ex}
If the two equal elements are neither in the same row, nor in the same column, we have

\begin{align}
&\left(\begin{array}{cccccc}
1      & 1      & 1      &  1     & ... & 1\\ 
1      & m_{11} & m      & m_{13} & ... & m_{1l}\\
1      & m      & m_{22} & ...    &     & m_{2l}\\
\vdots & \vdots &        &        &     & \vdots\\
1      & m_{k1} & m_{k2} & ...    &     & m_{kl}
\end{array}\right) \rightarrow 
\nonumber \\
&\rightarrow \left(\begin{array}{cccccc}
1      & m^{-1}       & m^{-1}      & m_{13}^{-1}  & ... & m_{1l}^{-1}\\ 
1      & m_{11}m^{-1} & 1           & 1            & ... & 1      \\
1      & 1            & m_{22}m^{-1}& ...          &     & m_{2l}m_{1l}^{-1}\\
\vdots & \vdots       &             &              &     & \vdots\\
1      & m_{k1}m^{-1} & m_{k2}m^{-1}& ...          &     & m_{kl}m_{1l}^{-1}
\end{array}\right)\rightarrow
\nonumber \\
&\rightarrow \left(\begin{array}{cccccc}
x           & 1            & 1           & mm_{13}^{-1} & ... & mm_{1l}^{-1}\\ 
1           & m_{11}m^{-1} & 1           & 1            & ... & 1      \\
1           & 1            & m_{22}m^{-1}& ...          &     & m_{2l}m_{1l}^{-1}\\
\vdots      & \vdots       &             &              &     & \vdots\\
1           & m_{k1}m^{-1} & m_{k2}m^{-1}& ...          &     & m_{kl}m_{1l}^{-1}
\end{array}\right)
\end{align}.

\vspace{2ex}
In both cases the contradiction number is shown to be less than the maximum. 
\end{proof}

However, simply making all elements different from 1 distinct is not enough to ensure the maximum number of contradictions.

\medskip 
\begin{proof}[Proof of Proposition \ref{prop:chromatic}]
The maximum number of contradictions in an $n\times n$ matrix is only achieved if every cycle in the corresponding graph $K_{n,n}$ contains a contradiction. Thus for every pair of permutations $\pi_1,\pi_2\in S_n$ which defines exactly one cycle the sums
$s_i=\sum\limits_{j=0}^{n-1} k_{j,\pi_i(j)}$ 
must be different. Therefore, assigning these sums to the vertices of $G_n$ produces a proper coloring. Since $s_i$ can have no more than $d$ different values, this is impossible for $d<\chi(G_n).$

Notice that for $N_{G_n}(\id)$ is the set of all cyclic permutations $\pi=(x_1\ldots x_t)$. It is easy to see that if $\pi_1\pi_2 = (x_1\ldots x_t)$ is a cycle then $\pi_1\sigma\pi_2\sigma = (y_1...y_t)$ is also a cycle for any permutation $\sigma\in S_n$. It follows that $\pi_1,\pi_2$ are adjacent in $G_n$ if and only if $\pi_1\pi_2^{-1}=(x_1...x_t)$ and that every function $f:S_n\mapsto S_n$, where $f(\pi)=\pi\sigma$ is an automorphism on $G_n.$ Thus, the largest set of cyclic permutations $C\subset S_n$ such that $\pi_i\pi_j^{-1}$ is a cycle for any $\pi_i, \pi_j\in C$, plus $\id$, is the largest clique in $G_n.$ It is well known that for any graph $G$ the chromatic number is at least the size of the largest clique in $G$. 

The independence number of $G_n$ is the size of the largest set $J$ of cyclic permutations such that $\pi=\pi_i\pi_j^{-1}$ is not a cyclic permutation for any $\pi_i,\pi_j\in J.$ Since $\chi(G)\geq \frac{\left|V(G)\right|}{\alpha(G)}$ for any $G$, we have $\chi(G_n)\geq\frac{n!}{\left|J\right|}.$
\end{proof}

%%%%%%%%%%%%%%%%%%%%%%%%%%%%%%%%%

%\subsection{Proof of Theorem \ref{}}

%\subsection{Proof of \ref{}}

%%%%%%%%%%%%%%%%%%%%%%%%%%%%%%%%%
%\subsection{Proof of  \ref{}}

%%%%%%%%%%%%%%%%%%%%%%%%%%
%\subsection {examples }

\end{document}